\documentclass[acmsmall, review=false, nonacm=true]{acmart}
\settopmatter{printacmref=false, printfolios=false}
\setcopyright{none}
\renewcommand\footnotetextcopyrightpermission[1]{} 
\pagestyle{plain}
\usepackage[normalem]{ulem}
\usepackage{enumitem}
\usepackage{listings}
\usepackage{arydshln}
\usepackage{balance}
\usepackage{bigstrut}
\usepackage{xspace}
\usepackage{mathtools}
\newcommand{\revision}[1]{#1}
\newcommand{\minirevision}[1]{#1}
\newcommand{\proofread}[1]{#1}
\theoremstyle{remark}
\newtheorem{claim}{Claim}
\newcommand{\naturals}{\mathbb{N}}
\newcommand{\rationals}{\mathbb{Q}}
\newcommand{\defeq}{\vcentcolon=}
\newcommand{\att}[1]{\mathit{#1}} 
\newcommand{\attu}[1]{\underline{\mathit{#1}}} 
\newcommand{\card}[1]{|#1|}
\newcommand{\vars}[1]{{\mathsf{vars}}(#1)}
\newcommand{\key}[1]{{\mathsf{Key}}(#1)}
\newcommand{\notkey}[1]{{\mathsf{notKey}}(#1)}
\newcommand{\calC}{\mathcal{C}}
\newcommand{\calF}{\mathcal{F}}
\newcommand{\AGG}{\mathtt{AGG}}
\newcommand{\SUM}{\mathtt{SUM}}
\newcommand{\PROD}{\mathtt{PRODUCT}}
\newcommand{\SUMDISTINCT}{\mathtt{SUM\mhyphen DISTINCT}}
\newcommand{\MAX}{\mathtt{MAX}}
\newcommand{\MIN}{\mathtt{MIN}}
\newcommand{\AVG}{\mathtt{AVG}}
\newcommand{\COUNT}{\mathtt{COUNT}}
\newcommand{\COUNTDISTINCT}{\mathtt{COUNT\mhyphen DISTINCT}}
\newcommand{\agagg}{\calF_{\AGG}} 
\newcommand{\agavg}{\calF_{\AVG}} 
\newcommand{\agmax}{\calF_{\MAX}} 
\newcommand{\agmin}{\calF_{\MIN}} 
\newcommand{\agsum}{\calF_{\SUM}} 
\newcommand{\agprod}{\calF_{\PROD}} 
\newcommand{\agsumdistinct}{\calF_{\SUMDISTINCT}} 
\newcommand{\agplus}{\calF_{\oplus} }
\newcommand{\agcount}{\calF_{\mathtt{COUNT}}} 
\newcommand{\agtermquantification}[2]{\mathsf{Aggr}_{#1}#2}
\newcommand{\agterm}[4]{\agtermquantification{#1}{#2}\left[{#3,#4}\right]}
\newcommand{\cqamodels}{\models_{\mathsf{cqa}}}
\newcommand{\db}{\mathbf{db}}
\newcommand{\dbstock}{\db_{\mathsf{Stock}}}
\newcommand{\repairs}[1]{\mathsf{rset}({#1})}
\newcommand{\rep}{\mathbf{r}}
\newcommand{\sep}{\mathbf{s}}
\newcommand{\tep}{\mathbf{t}}
\newcommand{\dom}{\mathbf{dom}}
\newcommand{\var}{\mathbf{var}}
\newcommand{\withaggr}[1]{\mathsf{AGGR}[#1]}

\newcommand{\sjfbcq}{\mathsf{sjfBCQ}}
\newcommand{\sort}{\prec}

\newcommand{\interpret}[2]{\llbracket{#1}\rrbracket^{#2}}
\newcommand{\fd}[2]{{#1}\rightarrow{#2}}
\newcommand{\substitute}[3]{{#1}_{[{#2}\mapsto{#3}]}}
\newcommand{\cqa}[1]{{\mathsf{CERTAINTY}}(#1)}
\mathchardef\mhyphen="2D
\newcommand{\glbcqa}[1]{{\mathsf{GLB\mhyphen CQA}}(#1)}
\newcommand{\lubcqa}[1]{{\mathsf{LUB\mhyphen CQA}}(#1)}

\newcommand{\attacks}[1]{\stackrel{#1}{\rightsquigarrow}}
\newcommand{\nattacks}[1]{\stackrel{#1}{\not\rightsquigarrow}}
\newcommand{\fdset}[1]{{\mathcal{K}}(#1)}
\newcommand{\keycl}[2]{{#1}^{+,#2}}
\newcommand{\problem}[1]{\textsf{#1}}
\newcommand{\frugalformula}[1]{\varphi_{#1}}
\newcommand{\rifi}[3]{{\mathsf{Reify}}({#1},{#2},{#3})}
\newcommand{\formula}[1]{(#1)}
\newcommand{\lrformula}[1]{\left(#1\right)}
\newcommand{\leftbag}{\left\{\!\left\{}
\newcommand{\rightbag}{\right\}\!\right\}}
\newcommand{\bag}[1]{\leftbag{#1}\rightbag}
\newcommand{\set}[1]{\left\{#1\right\}}
\newcommand{\bigset}[1]{\big\{{#1}\big\}}
\newcommand{\tuple}[1]{\left({#1}\right)}
\newcommand{\cclass}[1]{\mathsf{#1}}
\newcommand{\NP}{\cclass{NP}}
\newcommand{\coNP}{\cclass{coNP}}
\newcommand{\FO}{\cclass{FO}}
\newcommand{\NL}{\cclass{NL}}
\newcommand{\FP}{\cclass{FP}}
\newcommand{\PTIME}{\cclass{P}}
\newcommand{\LOGSPACE}{\cclass{L}}
\newcommand{\FOL}{\mathsf{FOL}}
\newcommand{\cforest}{\mathsf{Cforest}}
\newcommand{\caggforest}{\mathsf{Caggforest}}
\newcommand{\cnewclass}{\mathsf{Cparsimony}}
\newcommand{\domain}[1]{\mathsf{dom}(#1)}
\newcommand{\emptyvaluation}{\varnothing}
\newcommand{\restrict}[2]{{#1}\restriction_{#2}}
\newcommand{\dual}[1]{#1^{\mathsf{dual}}}
\newcommand{\powerset}[1]{2^{#1}}
\newcommand{\bluekill}{\hspace{-10em}\textcolor{blue}{\raisebox{0.2em}{\rule{10em}{0.1em}}}}
\newcommand{\redkill}{\hspace{-10em}\textcolor{red}{\raisebox{0.2em}{\rule{10em}{0.1em}}}}
\newcommand{\mymcs}[2]{\mathsf{MCS}_{#2}(#1)}
\newcommand{\copies}[2]{{#1}\#{#2}}
\newcommand{\myand}{\textnormal{\ and\ }}
\newcommand{\extendthree}[3]{\mathsf{Ext}\lrformula{#1\mid #2,#3}}
\newcommand{\extend}[1]{\mathsf{Ext}\lrformula{#1}}
\newcommand{\frugal}[2]{\preceq^{#1}_{#2}}
\newcommand{\myqed}{\hfill $\triangleleft$}

\title{Computing Range Consistent Answers to Aggregation Queries via Rewriting}
\author{Aziz Amezian El Khalfioui}
\email{aziz.amezianelkhalfioui@umons.ac.be}
\author{Jef Wijsen}
\email{jef.wijsen@umons.ac.be}
\orcid{https://orcid.org/0000-0001-8216-273X}
\affiliation{%
  \institution{University of Mons}
  \city{Mons}
  \country{Belgium}
}
\begin{document}

\begin{abstract}
    We consider the problem of answering conjunctive queries with aggregation on database instances that may violate primary key constraints.
    In SQL, these queries follow the SELECT-FROM-WHERE-GROUP BY format, where the WHERE-clause involves a conjunction of equalities, and the SELECT-clause can incorporate aggregate operators like MAX, MIN, SUM, AVG, or COUNT. 
    Repairs of a database instance are defined as inclusion-maximal subsets that satisfy all primary keys.
    For a given query, our primary objective is to identify repairs that yield the lowest aggregated value among all possible repairs. We particularly investigate queries for which this lowest aggregated value can be determined through a rewriting in first-order logic with aggregate operators.
\end{abstract}

\maketitle

\section{Introduction}\label{sec:introduction}

\emph{Consistent query answering} (CQA) was introduced at PODS'99~\cite{DBLP:conf/pods/ArenasBC99} as a principled approach to answering queries on database instances that are inconsistent with respect to a given set of integrity constraints.
The only integrity constraints we consider in the current work are primary keys. 
A \emph{block} in a database instance is a $\subseteq$-maximal set of tuples of a same relation~$R$ that agree on the primary key of~$R$.
A \emph{repair} of a database instance picks exactly one tuple from each block.
Given a Boolean query~$q$, $\cqa{q}$ is then defined as the decision problem that takes a database instance~$\db$ as input, and asks whether $q$ holds true in every repair of~$\db$. 
This problem, while commonly studied for Boolean queries, can be readily extended to queries with free variables~$\vec{x}$: a \emph{consistent answer} to a query $q(\vec{x})$ is any sequence~$\vec{c}$ of constants, of length $\card{\vec{x}}$, such that $q(\vec{c})$ holds true in every repair. 
The computational complexity of $\cqa{q}$ is well understood for all queries~$q$ in $\sjfbcq$, the class of self-join-free Boolean conjunctive queries~\cite{DBLP:journals/tods/KoutrisW17,DBLP:journals/mst/KoutrisW21}. This understanding readily extends to queries with free variables.


The current paper focuses on the complexity of CQA for \emph{numerical queries}~$g()$, which throughout this paper are mappings that take a database instance as input and return a single number; numerical queries are commonly called \emph{numerical terms} in logic. We will assume that some database columns are constrained to be numerical; in particular, values in numeric columns will be non-negative rational numbers throughout this paper, except for Section~\ref{sec:unconstrained}. We employ the range semantics as presented by Arenas et al.~\cite{DBLP:conf/icdt/ArenasBC01}, which provides the greatest lower bound (glb) and the least upper bound (lub) of query answers across all repairs.
Specifically, the function problems $\glbcqa{g()}$ and  $\lubcqa{g()}$ take a database instance~$\db$ as input, and return, respectively, the glb and the lub of the set that contains each number returned by~$g()$ on some repair.

As motivated by Arenas et al.~\cite{DBLP:conf/icdt/ArenasBC01}, range consistent query answers are of particular interest for aggregation queries, which are likely to return different numbers on different repairs, and therefore will lack a single consistent answer that holds true across all repairs.
Our attention will therefore be directed towards aggregation queries. In particular, we consider numerical queries $g()$ that take the following form in the extended Datalog syntax of~\cite{DBLP:conf/dmdw/CohenNS99, DBLP:journals/tods/CohenNS06}:
\begin{align}\label{eq:aggterm}
\AGG(r)\leftarrow q(\vec{u}),
\end{align}
where the \emph{body}~$q(\vec{u})$ is a conjunction of atoms (a.k.a.\ subgoals), $r$~is either a numeric variable occurring in~$\vec{u}$ or a constant rational number, and $\AGG$ is an aggregate symbol (like $\MAX$, $\MIN$, $\SUM$, $\AVG$, $\COUNT$). 
Such a query will be called called an $\AGG$-query, and is interpreted as follows.
Every aggregate symbol~$\AGG$ in our query language is associated with an aggregate operator, denoted~$\agagg$, which is a function that takes a multiset  of non-negative rational numbers, and returns a rational number. 
The semantics on a given database instance~$\db$ is standard:
let $\theta_{1}, \theta_{2},\ldots,\theta_{n}$ enumerate all embeddings of the body into~$\db$, then the query returns $\agagg(\bag{\theta_{1}(r), \theta_{2}(r), \ldots, \theta_{n}(r)})$.
Note that the argument of $\agagg$ is a multiset, because it is possible that $\theta_{i}(r)=\theta_{j}(r)$ for $i\neq j$.

Note that numerical queries~$g()$ of the form~\eqref{eq:aggterm} may return $\agagg(\emptyset)$.
The value of an aggregate operator on the empty multiset is often fixed by convention, such as setting $\agsum(\emptyset)\defeq 0$. 
To be independent of any such convention, we will define $\glbcqa{g()}$ and  $\lubcqa{g()}$ to return a distinguished constant~$\bot$ if $g()$ returns  $\agagg(\emptyset)$ on some repair. So we have the following function problem for any $\AGG$-query~$g()$ of the form~\eqref{eq:aggterm}:

\begin{center}
\framebox{
\begin{minipage}{0.9\columnwidth}
\begin{description}
     \item[Problem $\glbcqa{g()}$.]
    \item[Input:] A database instance $\db$ that may violate its primary key constraints.
    \item[Output:]
    \begin{tabular}[t]{l}
    Return $\bot$ if some repair of $\db$ falsifies $\exists\vec{u}\formula{q\formula{\vec{u}}}$; \\
    otherwise return $\min\set{\interpret{g()}{\rep}\mid\textnormal{$\rep$ is a repair of $\db$}}$.
    \end{tabular}
\end{description}
\end{minipage}}
\end{center}
Here, $\interpret{g()}{\rep}$ denotes the result of~$g()$ on~$\rep$.
The function problem $\lubcqa{q}$ is obtained by replacing $\min$ with $\max$ in the above problem.
Note that since every repair has a finite number of repairs, glb and lub are the same as min and max, respectively.


\begin{figure}\centering
\begin{small}
\begin{tabular}{cc}
\begin{tabular}{c|*{2}{c}}
$\att{Dealers}$ & $\attu{Name}$ & $\att{Town}$\bigstrut\\\cline{2-3}
$\dagger$ & Smith & Boston\\
          & Smith & New York\\\cdashline{2-3}
$\dagger$ & James & Boston
\end{tabular}
&
\begin{tabular}{c|*{3}{c}}
$\att{Stock}$ & $\attu{Product}$ & $\attu{Town}$ & $\att{Qty}$\bigstrut\\\cline{2-4}
$\dagger$ & Tesla X &  Boston & 35\\
          & Tesla X &  Boston & 40\\\cdashline{2-4}
$\dagger$ & Tesla Y &  Boston & 35\\\cdashline{2-4}
$\dagger$ & Tesla Y &  New York & 95\\
          & Tesla Y &  New York & 96        
\end{tabular}
\end{tabular}
\end{small}
\caption{Database instance $\dbstock$. Blocks are Separated by Dashed Lines.}
\label{fig:dealers}
\end{figure}

We write $\withaggr{\sjfbcq}$ for the class of all queries of the form~\eqref{eq:aggterm} whose body $q(\vec{u})$ is self-join-free (i.e., does not contain two distinct atoms with the same relation name).
In this paper, we are interested in the following complexity classification problem: Given a query~$g()$ in $\withaggr{\sjfbcq}$,
determine the computational complexity of the function problem $\glbcqa{g()}$.
For example, the database of Fig.~\ref{fig:dealers} records the quantity of products in stock in various towns (relation $\att{Stock}$) and the town of operation for each dealer (relation $\att{Dealers}$). The primary keys are underlined, and blocks are separated by dashed lines. The inconsistencies concern Smith's town of operation, and the stock levels of Tesla~X  and Tesla~Y in, respectively, Boston and New York.
For the example database of Fig.~\ref{fig:dealers}, the following query returns the total quantity of cars in stock in Smith's town of operation:
\begin{equation}\label{eq:numtermexample}\tag{$g_{0}()$}
\SUM(y)\leftarrow  \att{Dealers}(\underline{\textnormal{``Smith''}},t), \att{Stock}(\underline{p,t},y).
\end{equation}
In Fig.~\ref{fig:dealers}, the repair composed of the tuples preceded by $\dagger$ yields the answer $70$ ($=35+35$), which is the lowest answer achievable among all repairs.

From a theoretical perspective, our focus on the glb over the lub is without loss of generality, by exploiting that $\max(\bag{r_{1},r_{2},\ldots,r_{k}})=-1*\min(\bag{-1*r_{1},-1*r_{2},\ldots,-1*r_{k}})$.
For every aggregate operator~$\agagg$, we define its dual, denoted~$\dual{\agagg}$, by $\dual{\agagg}(X)\defeq-1*\agagg(X)$, for every multiset $X$. 
Then, solving $\lubcqa{g()}$ for an $\AGG$-query~$g()\defeq\AGG(r)\leftarrow q(\vec{u})$ is the same (up to a sign) as solving $\glbcqa{h()}$ for $h()\defeq\dual{\AGG}(r)\leftarrow q(\vec{u})$, where $\dual{\AGG}$ is the aggregate symbol associated with $\dual{\agagg}$.


\revision{We now discuss why we restrict ourselves to self-join-free queries.
Note that for a numerical query $g()\defeq\AGG(r)\leftarrow q(\vec{u})$, the decision problem $\glbcqa{g()}$ returns $\bot$ on (and only on)  ``no''-instances of $\cqa{\exists\vec{u}\formula{q(\vec{u})}}$. Thus,  a solution to $\glbcqa{g()}$ also solves $\cqa{\exists\vec{u}\formula{q(\vec{u})}}$. The computational complexity of the latter problem is well understood whenever $q(\vec{u})$ is self-join-free~\cite{DBLP:journals/tods/KoutrisW17}---an understanding we build upon in the current paper---but it remains largely unexplored for queries with self-joins. Only recently has it been established for self-joins of size~two~\cite{DBLP:journals/pacmmod/PadmanabhaSS24}. In particular, the concept of an attack graph, which is a key tool used in~\cite{DBLP:journals/tods/KoutrisW17} and in the current paper, loses its meaning and usefulness when self-joins are present.
Given the limited understanding of CQA for self-joins, we exclude self-joins in the current paper.} 

In our complexity study, we specifically aim to understand the conditions on~$g()$ under which $\glbcqa{g()}$ is solvable through rewriting in an aggregate logic, denoted $\withaggr{\FOL}$, which extends first-order logic with aggregate operators along the lines of~\cite{DBLP:journals/jacm/HellaLNW01}.
So our central problem takes as input a numerical query $g()$ in $\withaggr{\sjfbcq}$, and asks whether or not there is a numerical query~$\varphi()$ in $\withaggr{\FOL}$ that solves $\glbcqa{g()}$; moreover, when such $\varphi()$ exists, we are interested in constructing it,  a task loosely referred to as ``glb rewriting of $g()$ in $\withaggr{\FOL}$.''
A practical motivation for focusing on $\withaggr{\FOL}$ is that formulas in this logic are well-suited for implementation in SQL, allowing them to benefit from existing DBMS technology.

Our primary general result can now be stated as follows, with the definitions of \emph{monotone} and \emph{associative} deferred to Section~\ref{sec:aggregatingpos}:

\begin{theorem}[Separation Theorem]\label{the:glbmain}
The following decision problem is decidable \revision{in quadratic time (in the size of the input)}:
Given as input a numerical query~$g()$ in $\withaggr{\sjfbcq}$ whose aggregate operator is both monotone and associative,
is $\glbcqa{g()}$ expressible in $\withaggr{\FOL}$?
Moreover, if the answer is ``yes,'' then it is possible to effectively construct, \revision{also in quadratic time}, a formula in $\withaggr{\FOL}$ that solves $\glbcqa{g()}$.
\end{theorem}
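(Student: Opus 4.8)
The plan is to reduce the expressibility question to a syntactic test on the query $g()=\AGG(r)\leftarrow q(\vec u)$ --- one that depends on $g()$ alone --- and, when the test succeeds, to read a rewriting off the structure that witnesses success; here $\agagg$ denotes the aggregate operator of~$g()$, assumed monotone and associative. A first, necessary ingredient of the test is that the attack graph of $q(\vec u)$ be acyclic, i.e.\ $q(\vec u)\in\cforest$: since $\glbcqa{g()}$ returns~$\bot$ on precisely the ``no''-instances of $\cqa{\exists\vec u\,(q(\vec u))}$, a $\withaggr{\FOL}$ formula solving $\glbcqa{g()}$ would, composed with the test ``is the output equal to~$\bot$?'', express $\cqa{\exists\vec u\,(q(\vec u))}$; but by the Koutris--Wijsen dichotomy~\cite{DBLP:journals/tods/KoutrisW17} together with the locality of aggregate logics~\cite{DBLP:journals/jacm/HellaLNW01}, the latter problem is inexpressible in $\withaggr{\FOL}$ once the attack graph of $q(\vec u)$ contains a cycle. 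On top of acyclicity we isolate a further syntactic side-condition --- tying the aggregated term~$r$ to the attack structure --- that pins down exactly when the greedy construction below is sound; checking the combined test reduces to computing the attack graph ($O(n)$ functional-dependency closures and $O(n^2)$ containment tests, $n$ being the number of atoms) plus a linear scan of~$g()$, which a careful implementation keeps within the quadratic budget.

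\textbf{Positive direction.} When the test succeeds we build the numerical $\withaggr{\FOL}$ term by recursion on the acyclic attack graph, following the pattern of the first-order rewriting for $\cforest$ but propagating aggregate information. Pick an atom~$F$ that is unattacked in the current subquery. Since $q(\vec u)$ is self-join-free, every embedding of the body maps~$F$ into exactly one of its blocks, so, indexing the blocks of~$F$ by their key-value~$\vec k$, \emph{associativity} of~$\agagg$ splits the multiset of $r$-values produced by any repair along these blocks: the value of~$g()$ on that repair is the $\agagg$-combination, over the~$\vec k$, of the values of the residual query obtained by freezing the $F$-tuple that the repair keeps in block~$\vec k$. \emph{Monotonicity} of~$\agagg$, together with the side-condition (which guarantees that the contribution of a block is governed by a part of the repair that the minimization may treat in isolation), then lets the outer minimization distribute over the combination, so that the glb equals ``$\agagg$ over the blocks~$\vec k$ of~$F$ of ($\min$ over tuples~$t$ in block~$\vec k$ of (glb of the residual query with~$F$ frozen to~$t$))''. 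Both quantifications are bounded aggregations expressible in $\withaggr{\FOL}$; peeling atoms off one at a time (residual attack graphs stay acyclic, so the recursion is well-founded and has linear depth), handling the fully-frozen base case directly, and prefixing the Koutris--Wijsen first-order test that outputs~$\bot$ when some repair falsifies $\exists\vec u\,(q(\vec u))$, produces the desired formula within the quadratic budget, since each level contributes only a bounded amount of syntax. Correctness follows by induction on the recursion, the one delicate point being the bookkeeping of empty residual multisets, which is absorbed by the identity element of the combining operation.

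\textbf{Negative direction and the main obstacle.} If the test fails then $\glbcqa{g()}\notin\withaggr{\FOL}$: if the attack graph is cyclic we are done by the argument above; if instead the side-condition fails, its failure makes $\glbcqa{g()}$ encode a genuinely global optimization --- from which one can read off a minimum-weight-matching-style instance --- which is beyond the reach of $\withaggr{\FOL}$, indeed of first-order logic with counting, so the inexpressibility follows from the corresponding pebble-game and locality arguments on instances whose relevant part can be blown up arbitrarily. The crux of the whole proof is twofold. On the positive side, it is the isolation of the side-condition and the proof that monotonicity together with associativity are precisely what make the greedy decomposition of the glb sound: under a weaker hypothesis on~$\agagg$ the optimization may couple distant blocks, destroying both the decomposition and, one then shows, any hope of a rewriting. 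On the negative side, it is the task of turning a failure of the side-condition into a single family of hard instances that defeats every candidate $\withaggr{\FOL}$ term simultaneously --- which, because $\withaggr{\FOL}$ can add, count, and compare aggregates across the whole instance, cannot be achieved by importing an off-the-shelf first-order inexpressibility result.
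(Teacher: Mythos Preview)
Your proposal has a genuine structural gap: you posit an additional ``syntactic side-condition tying the aggregated term~$r$ to the attack structure'' on top of acyclicity of the attack graph. No such condition exists. For monotone and associative aggregate operators, the paper proves that acyclicity of the attack graph is \emph{both necessary and sufficient} for expressibility in $\withaggr{\FOL}$: Theorem~\ref{the:inexpressible} handles the cyclic case, and Theorem~\ref{the:expressible} shows that acyclicity alone already yields a rewriting. The decision procedure is therefore simply the acyclicity test. Your ``negative direction'' accordingly contains a phantom case (attack graph acyclic but side-condition failing) for which you sketch a hardness argument that is never needed and could not be made to work. Relatedly, you write ``$q(\vec u)\in\cforest$'' as a synonym for ``attack graph acyclic''; these are different classes ($\cforest$ is Fuxman's strictly smaller class based on full key-to-key joins).

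On the positive side, your recursion along an unattacked atom is in the right spirit, but you hide the real difficulty behind the invented side-condition. The problematic step is your claim that the outer minimization ``distributes over the combination'': after splitting on the key of an unattacked atom~$F_{\ell+1}$, the residual subproblems for different key-values~$\vec k$ are \emph{not} independent---they share the remaining atoms $F_{\ell+2},\ldots,F_n$, and a repair that realizes the minimum for one~$\vec k$ may conflict (on those shared atoms) with a repair realizing the minimum for another~$\vec k$. The paper resolves this via two nontrivial lemmas: the Decomposition Lemma (Lemma~\ref{lem:unionIsOptimal}) shows that \emph{if} per-branch $\agplus$-minimal MCSs can be chosen consistently then the minima combine as you want, and the Consistent Extension Lemma (Lemma~\ref{lem:branchesNoConflict}) shows that such a consistent choice always exists. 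The latter in turn rests on an analysis of weakly connected components of the residual attack graph (Lemmas~\ref{lem:IndependentComponent} and~\ref{lem:WCCIndependent}). Your sketch supplies none of this machinery, and without it the inductive step does not go through.
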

That is, within the class $\withaggr{\sjfbcq}$, we can effectively separate $\AGG$-queries that allow glb rewriting  in $\withaggr{\FOL}$ from those that do not, provided that $\AGG$ satisfies monotonicity and associativity. 
Such a separation result will also be obtained for $\MIN$- and $\MAX$-queries, for both glb and lub (see Theorem~\ref{the:sepminmax}).
\revision{Since the glb rewriting of Theorem~\ref{the:glbmain}, if it exists, can be constructed in quadratic time, its length is at most quadratic.}

As will be argued in Section~\ref{sec:freevariables}, all our results naturally extend to queries with a GROUP BY feature. 
For example, the following SQL query returns, for each dealer, the total quantity of products in stock in their town of operation:
\begin{lstlisting}[
           language=SQL,
           showspaces=false,
           basicstyle=\ttfamily,
           %numbers=left,
           %numberstyle=\tiny,
           commentstyle=\color{gray}
        ]
    SELECT   D.Name, SUM(S.Qty)
    FROM     Dealers AS D, Stock AS S
    WHERE    D.Town = S.Town
    GROUP BY D.Name
\end{lstlisting}
This query is stated as follows in the extended Datalog syntax of~\cite{DBLP:journals/tods/CohenNS06}:
\[
(x,\SUM(y))\leftarrow  \att{Dealers}(\underline{x},t), \att{Stock}(\underline{p,t},y).
\]
Range semantics are obtained by replacing~$x$ with every possible dealer name (``Smith'' and ``James,'' in our example), and calculating the glb and lub as before.

This paper is organized as follows.
Section~\ref{sec:related} discusses related work, and Section~\ref{sec:preliminaries} introduces preliminaries.
Section~\ref{sec:superfrugal} introduces a new class of repairs, called \emph{superfrugal repairs},
which are of standalone interest but will also prove to be a valuable theoretical tool.
In Section~\ref{sec:aggregatelogic}, we formally define $\withaggr{\sjfbcq}$, the class of queries for which we want to compute range consistent query answers. We also define the aggregate logic $\withaggr{\FOL}$ that serves as the target language for our rewritings. Theorem~\ref{the:inexpressible} captures the ``negative'' side of our main separation theorem (Theorem~\ref{the:glbmain}), i.e., the side on which $\glbcqa{g()}$ is not expressible in $\withaggr{\FOL}$.
The ``positive'' side of our separation theorem is captured by Theorem~\ref{the:expressible}, whose proof ideas are sketched in Section~\ref{sec:lb} through an example. 
The full proof, which is technically the most challenging contribution of the paper, is deferred to Appendix~\ref{sec:proofexpressible}.
In Section~\ref{sec:lacking}, we give some insight in the consequences of dropping the assumptions of monotonicity and associativity present in Theorem~\ref{the:glbmain}.
In that section, we also show our separation theorem for $\MIN$- and $\MAX$-queries.
In Section~\ref{sec:unconstrained}, we apply our theoretical findings to disprove a long-standing claim made in~\cite{FuxmanThesis}.
Finally, Section~\ref{sec:discussion} concludes the paper.

\section{Related Work}\label{sec:related}


Consistent query answering (CQA) started by a seminal paper in 1999 co-authored by Arenas, Bertossi, and Chomicki~\cite{DBLP:conf/pods/ArenasBC99}, who introduced the notions of repair and consistent answer. 
Two years later, the same authors introduced the \emph{range semantics} (with lower and upper bounds) for queries with aggregation~\cite{DBLP:conf/icdt/ArenasBC01,DBLP:journals/tcs/ArenasBCHRS03}\cite[Chapter~5]{DBLP:series/synthesis/2011Bertossi}, which has been commonly adopted ever since.
Immediately related to our work is~\cite[Theorem~9]{DBLP:journals/tcs/ArenasBCHRS03}, which establishes that
$\glbcqa{g()}$ is $\NP$-hard (and hence not expressible in $\withaggr{\FOL}$) for $g()\defeq
\COUNTDISTINCT(r)\leftarrow R(\underline{x},r)$, where $R$ is a binary relation whose first attribute is the primary key.
This $\NP$-hardness result is not implied by our Theorem~\ref{the:glbmain}, as $\COUNTDISTINCT$ lacks monotonicity and associativity (see Section~\ref{sec:aggregatingpos} for details).

Ariel Fuxman defined in his PhD thesis~\cite{FuxmanThesis} a syntactic class of self-join-free conjunctive queries, called $\cforest$, whose extension with aggregate operators ($\MAX$, $\MIN$, $\SUM$, $\COUNT$) yields $\caggforest$.  Range semantics for queries in $\caggforest$ can be obtained by executing two first-order queries (one for lower bounds, and one for upper bounds) followed by simple aggregation steps; this technique was subsequently implemented in the ConQuer system~\cite{DBLP:conf/sigmod/FuxmanFM05} through rewriting in SQL.
Although~\cite{FuxmanThesis} permits aggregation over numerical columns containing both positive and negative numbers, as well as zero, we discovered that the SQL rewriting for $\SUM$-queries in~\cite{FuxmanThesis} is flawed when negative numbers are present, as detailed in Section~\ref{sec:unconstrained}.

The class $\cnewclass$~\cite{DBLP:conf/icdt/KhalfiouiW23} is an extension of $\cforest$ that contains all (and only) self-join-free conjunctive queries for which Fuxman's technique applies for~$\COUNT$.
The ease with which $\cforest$ and $\cnewclass$ facilitate range semantics through Fuxman's technique can be attributed to their requirement that joins between non-key and key attributes must involve the entire key of a relation, commonly referred to as ``full'' joins; they do not allow ``partial'' joins  where two relations can join on some (but not all) attributes of a primary key.
In the current paper, we allow partial joins.

Fuxman's technique is different from AggCAvSAT~\cite{DBLP:conf/icde/DixitK22}, a recent system by Dixit and Kolaitis, which uses powerful SAT solvers for computing range semantics, and thus can solve queries that are beyond the computational power of ConQuer.
Aggregation queries were also studied in the context of CQA in~\cite{DBLP:journals/is/BertossiBFL08}.


CQA for self-join-free conjunctive queries~$q$, without aggregation, and primary keys has been intensively studied. 
Its decision variant, which was coined $\cqa{q}$ in 2010~\cite{DBLP:conf/pods/Wijsen10}, asks whether a Boolean query~$q$ is true in every repair of a given database instance.
A systematic study of its complexity for self-join-free conjunctive queries had started already in~2005~\cite{DBLP:conf/icdt/FuxmanM05}, and was eventually solved in two journal articles by Koutris and Wijsen~\cite{DBLP:journals/tods/KoutrisW17,DBLP:journals/mst/KoutrisW21}, as follows: for every self-join-free Boolean conjunctive query~$q$, $\cqa{q}$ is either in $\FO$, $\LOGSPACE$-complete, or $\coNP$-complete, and it is decidable, given~$q$, which case applies.
This complexity classification extends to non-Boolean queries by treating free variables as constants.
Other extensions beyond this trichotomy deal with foreign keys~\cite{DBLP:conf/pods/HannulaW22}, more than one key per relation~\cite{DBLP:conf/pods/KoutrisW20}, negated atoms~\cite{DBLP:conf/pods/KoutrisW18}, or restricted self-joins~\cite{DBLP:conf/pods/KoutrisOW21,DBLP:journals/pacmmod/KoutrisOW24,DBLP:journals/pacmmod/PadmanabhaSS24}.
For unions of conjunctive queries~$q$, Fontaine~\cite{DBLP:journals/tocl/Fontaine15} established interesting relationships between $\cqa{q}$  and Bulatov's dichotomy theorem
for conservative CSP~\cite{DBLP:journals/tocl/Bulatov11}.
\revision{Recently, Figueira et al.~\cite{DBLP:conf/icdt/FigueiraPSS23} proposed a polynomial-time algorithm for solving $\cqa{q}$ for conjunctive queries~$q$, including those with self-joins, and showed that it can replace all polynomial-time algorithms found in~\cite{DBLP:journals/tods/KoutrisW17,DBLP:conf/pods/KoutrisOW21}.}

The counting variant $\sharp\cqa{q}$ asks to count the number of repairs that satisfy some Boolean query~$q$.
This counting problem is fundamentally different from the range semantics in the current paper.
For self-join-free conjunctive queries, $\sharp\cqa{q}$ exhibits a dichotomy between  $\FP$ and $\sharp\PTIME$-complete under polynomial-time Turing reductions~\cite{DBLP:journals/jcss/MaslowskiW13}. This dichotomy has been shown to extend to queries with self-joins if primary keys are singletons~\cite{DBLP:conf/icdt/MaslowskiW14}, and to functional dependencies~\cite{DBLP:conf/pods/CalauttiLPS22a}.
Calautti, Console, and Pieris present in~\cite{DBLP:conf/pods/CalauttiCP19} a complexity analysis of these counting problems under weaker reductions, in particular, under many-one logspace reductions.
The same authors have conducted an experimental evaluation of randomized approximation schemes for approximating the percentage of repairs that satisfy a given query~\cite{DBLP:conf/pods/CalauttiCP21}.
Other approaches to making CQA more meaningful and/or tractable include operational repairs~\cite{DBLP:conf/pods/CalauttiLP18,DBLP:conf/pods/CalauttiLPS22} and preferred repairs~\cite{DBLP:journals/tcs/KimelfeldLP20,DBLP:journals/amai/StaworkoCM12}.

Recent overviews of two decades of theoretical research in CQA can be found in~\cite{DBLP:conf/pods/Bertossi19,DBLP:journals/sigmod/Wijsen19,DBLP:journals/cacm/KimelfeldK24}.
It is worthwhile to note that theoretical research in $\cqa{q}$ has stimulated implementations and experiments in prototype systems~\cite{DBLP:conf/sat/DixitK19,DBLP:journals/pacmmod/FanKOW23,DBLP:conf/sigmod/FuxmanFM05,DBLP:conf/vldb/FuxmanFM05,DBLP:conf/cikm/KhalfiouiJLSW20,DBLP:journals/pvldb/KolaitisPT13}.

\section{Preliminaries}\label{sec:preliminaries}

We assume denumerable sets $\var$ and $\dom$ of variables and constants respectively.
The set $\dom$ includes $\rationals_{\geq 0}$, the set of non-negative rational numbers.
The set of \emph{numerical variables} is a subset of $\var$.

We assume denumerably many relation names. Every relation name is associated with a \emph{signature}, which is a triple $(n,k,J)$ where $n$ is the \emph{arity}, $\{1,\ldots,k\}$ is called the \emph{primary key}, and $J\subseteq\{1,\ldots,n\}$ are \emph{numerical positions} (also called \emph{numerical columns}). 
This relation name is \emph{full-key} if $n=k$.
\revision{Note that each relation name~$R$ is associated with exactly one key constraint, which is determined by the signature of~$R$.} For an $n$-tuple $\vec{x}=(x_{1},\ldots,x_{n})$, we write~$\card{\vec{x}}$ to denote its \emph{arity}~$n$.
We often blur the distinction between a sequence $(x_{1},\ldots,x_{n})$ of distinct variables and the set $\{x_{1},\ldots,x_{n}\}$, which is also denoted $\vars{\vec{x}}$.

\paragraph{\bf Atoms, facts, and database instances}
Let $R$ be a relation name of signature  $(n,k,J)$.
An \emph{atom} is an expression $R(u_{1},\ldots,u_{n})$ where each $u_{i}$ is either a constant or a variable, and for every $j\in J$, $u_{j}$~is a numerical variable or a number in~$\rationals_{\geq 0}$. 
It is common to underline positions of the primary key.
If $F$ is an atom, then $\vars{F}$ is the set of variables that occur in~$F$, and $\key{F}$ is the set of variables that occur in $F$ at a position of the primary key. Further, we define $\notkey{F}\defeq\vars{F}\setminus\key{F}$.
A \emph{fact} is an atom without variables.
A fact with relation name~$R$ is also called an $R$-fact.
Two facts $R_{1}(\underline{\vec{a}_{1}}, \vec{c}_{1})$ and $R_{2}(\underline{\vec{a}_{2}}, \vec{c}_{2})$ are said to be \emph{key-equal} if $R_{1}=R_{2}$ and $\vec{a}_{1}=\vec{a}_{2}$.

A \emph{database instance} is a finite set of facts.
If $R$ is a relation name, then the \emph{$R$-relation of $\db$} is the set of all $R$-facts in $\db$.
A \emph{database instance} is \emph{consistent} if it does not contain two distinct facts that are key-equal.
Please be aware that there is no requirement to explicitly specify the primary keys, as they are derived from the predetermined signatures associated with the relation names of the facts present in the database instance.

\paragraph{\bf Valuation.}
A \emph{valuation} over a finite set $U$ of variables is a total mapping $\theta$ from $U$ to~$\dom$ such that $\theta(r)\in\rationals_{\geq 0}$ for every numeric variable~$r$.
For a valuation $\theta$ over~$U$, we write $\domain{\theta}$ to denoted its \emph{domain}~$U$.
A valuation $\theta$ over $U$ is extended to every element $u$ in $\var\cup\dom$ by letting $\theta(u)=u$ for every $u\notin U$.
If $\theta$ is a valuation and $V\subseteq\domain{\theta}$, then $\restrict{\theta}{V}$ denotes the restriction of $\theta$ to $V$, i.e., $\domain{\restrict{\theta}{V}}=V$ and for every $x\in V$, we have $\restrict{\theta}{V}(x)=\theta(x)$.

Let $\theta$ be a valuation.
If~$F$ is the atom $R(u_{1},\ldots,u_{n})$, then $\theta(F)\defeq R(\theta(u_{1}),\ldots,\theta(u_{n}))$.
If~$q$ is a set of atoms, then $\theta(q)\defeq\{\theta(F)\mid F\in q\}$.
The set $\theta(q)$ is also denoted $\substitute{q}{\vec{y}}{\vec{c}}$ where $\vec{y}$ is a shortest sequence containing every variable in $\domain{\theta}$, and $\vec{c}\defeq\theta(\vec{y})$.
Significantly, every variable in $\vars{q}\setminus\domain{\theta}$ remains a variable in $\theta(q)$.

\paragraph{\bf Partial valuation.}
Let $\db$ be a database instance, and $\varphi(\vec{x})$ a first-order formula with free variables $\vec{x}$. 
Let $\theta$ be a valuation. 
Then we write $(\db,\theta)\models\varphi(\vec{x})$ to denote that $\theta$ can be extended to a valuation~$\theta'$ over $\domain{\theta}\cup\vars{\vec{x}}$ such that for $\vec{a}\defeq\theta'(\vec{x})$, we have $\db\models\varphi(\vec{a})$ using standard semantics (see, e.g., \cite[p. 15]{DBLP:books/sp/Libkin04}).
Typically, but not necessarily, $\domain{\theta}\subseteq\vars{\vec{x}}$.
If $\varphi$ has no free variables, then we write $\db\models\varphi$ instead of $(\db,\emptyvaluation)\models\varphi$, where $\emptyvaluation$ is the empty valuation.

If $\varphi(x_{1},\ldots,x_{n})$ is a formula with $n$~distinct free variables $x_{1},\ldots,x_{n}$, and $y_{1},\ldots,y_{n}$ are $n$~distinct variables (not necessarily disjoint with the $x_{i}$s), then $\varphi(y_{1},\ldots,y_{n})$ denotes the formula obtained from $\varphi(x_{1},\ldots,x_{n})$ by replacing, for each $1\leq i\leq n$, every free occurrence of $x_{i}$ by $y_{i}$.

\paragraph{\bf Repairs and $\cqa{\varphi}$}
A \emph{repair} of a database instance is a $\subseteq$-maximal consistent subset of it.
We write $\repairs{\db}$ for the set of repairs of a database instance~$\db$.
If $\varphi(\vec{x})$ is a first-order formula and $\theta$ a valuation, then we write $(\db,\theta)\cqamodels\varphi(\vec{x})$ to denote that for every repair~$\rep$ of $\db$, we have $(\rep,\theta)\models\varphi(\vec{x})$.
If $\varphi$ has no free variables, then we write $\db\cqamodels\varphi$ instead of $(\db,\emptyvaluation)\cqamodels\varphi$, where~$\emptyvaluation$ is the empty valuation.
For a closed formula $\varphi$, $\cqa{\varphi}$ is the decision problem that takes a database instance $\db$ as input and determines whether or not $\db\cqamodels\varphi$.

\paragraph{\bf Boolean Conjunctive Queries}
A \emph{self-join-free Boolean conjunctive query}~$q$ is a closed first-order formula
$\exists\vec{u}\lrformula{R_{1}(\vec{u}_{1})\land\dotsm\land R_{n}(\vec{u}_{n})}$,
where each $R_{i}(\vec{u}_{i})$ is an atom, $\vec{u}$ is a sequence containing every variable occurring in some $\vec{u}_{i}$, and $i\neq j$ implies $R_{i}\neq R_{j}$.
The conjunction $R_{1}(\vec{u}_{1})\land\dotsm\land R_{n}(\vec{u}_{n})$, whose free variables are $\vec{u}$, is called the \emph{body} of the query.
We write $\sjfbcq$ for the set of self-join-free Boolean conjunctive queries.
We often blur the distinction between the Boolean query~$q$, its body, and the set $\{R_{1}(\vec{u}_{1}),\ldots,R_{n}(\vec{u}_{n})\}$.
For example, if $F$ is an atom of (the body of)~$q$, then $q\setminus\{F\}$ is the query obtained from $q$ by deleting $F$ from its body. 

\begin{figure}
    \centering
    \begin{tabular}{c|c}
    \includegraphics[scale=0.8]{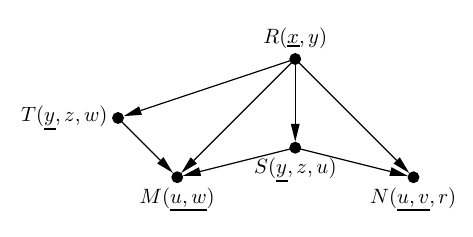}
    &
    \includegraphics[scale=0.8]{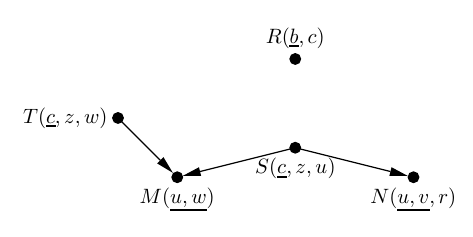}
    \\
    $q_{0}$ & $\substitute{q_{0}}{xy}{bc}$
    \end{tabular}
    \caption{Attack graphs for two queries in $\sjfbcq$: The query on the right is derived from the query on the left by initializing $x$ to $b$ and $y$ to $c$.}
    \label{fig:notvarConnected}
\end{figure}

\paragraph{\bf Attack graph}
Attack graphs of queries in $\sjfbcq$ were first introduced in~\cite{DBLP:journals/tods/Wijsen12} and subsequently generalized in~\cite{DBLP:journals/tods/KoutrisW17}.
Their definition is technical and not immediately intuitive.
However, as will become clear in the technical treatment, the partial order induced by acyclic attack graphs will play an essential role in query rewriting.

Let $q\in\sjfbcq$.
We write $\fdset{q}$ for the set of functional dependencies that contains $\fd{\key{F}}{\vars{F}}$ whenever $F\in q$.
For $F\in q$, we define $\keycl{F}{q}\defeq\{x\in\vars{q}\mid\fdset{q\setminus\{F\}}\models\fd{\key{F}}{x}\}$, where $\models$ is the standard notion of logical implication.
An atom $F$ of $q$ is said to \emph{attack} a variable~$x$, denoted $F\attacks{q}x$, if there is a sequence of variables $(x_{1},x_{2},\ldots,x_{m})$ such that no $x_{i}$ is in $\keycl{F}{q}$, $x_{1}\in\notkey{F}$, $x_{m}=x$, and every two adjacent variables occur together in some atom of~$q$. 
A variable $x$ is said to be \emph{unattacked (in~$q$)} if no atom attacks~$x$.
The \emph{attack graph} of~$q$ is a directed simple graph whose vertices are the atoms of~$q$. There is a directed edge from $F$ to $G$, denoted $F\attacks{q}G$, if $F$ attacks some variable of $\vars{G}$. 

Whenever a query in $\sjfbcq$ is clear from the context, we can use a relation name as a shorthand for the unique atom with that relation name in the query. For example, in the following example, $R$~is a shorthand for the atom $R(\underline{x},y)$.

\begin{example}
    Consider the query~$q_{0}$ in $\sjfbcq$ whose atoms are shown on the left side of~Fig.~\ref{fig:notvarConnected}.
    The directed edges represent attacks.
    We have $\keycl{R}{q_{0}}=\{x\}$, $\keycl{T}{q_{0}}=\{y,z,u\}$, $\keycl{S}{q_{0}}=\{y,z,w\}$, $\keycl{M}{q_{0}}=\{u,w\}$, and $\keycl{N}{q_{0}}=\{u,v\}$. 
    The sequence $(y,u)$, for example, implies that $R\attacks{q_{0}}M$ and  $R\attacks{q_{0}}N$.
    The attack graph of $\substitute{q_{0}}{xy}{bc}$ is shown on the right side of~Fig.~\ref{fig:notvarConnected}. It is generally true that an acyclic attack graph remains acyclic if some variables are initialized.
 \myqed   
\end{example}

The following result shows the usefulness of attack graphs.

\begin{theorem}[\cite{DBLP:journals/tods/KoutrisW17}]\label{the:forewritability}
For every query $q$ in $\sjfbcq$, $\cqa{q}$ is in $\FO$ if and only if the attack graph of $q$ is acyclic.
\end{theorem}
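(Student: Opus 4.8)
I would prove the two implications separately: the ``if'' direction by exhibiting a first-order rewriting, built by induction on the number of atoms, and the ``only if'' direction by a reduction witnessing hardness for a class strictly larger than~$\FO$.

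For the ``if'' direction, assume the attack graph of~$q$ is acyclic. The induction runs over conjunctive queries that may contain constants, free variables being treated as constants and the attack graph defined accordingly; recall from the example above that freezing variables to constants preserves acyclicity. If~$q$ has no atom, $\cqa{q}$ is trivially in~$\FO$. Otherwise a finite DAG has a source, so~$q$ has an unattacked atom $F=R(\underline{\vec{x}},\vec{y})$ with $\key{F}=\vars{\vec{x}}$; put $q'\defeq q\setminus\set{F}$. First I would verify the two closure properties that make the induction go through: $q'\in\sjfbcq$, and the attack graph of~$q'$ (with $\vars{F}$ now regarded as constants) is acyclic, so that $\cqa{q'}$, and hence each specialization $\cqa{\substitute{q'}{\vec{x}\,\vec{y}}{\vec{a}\,\vec{b}'}}$, is in~$\FO$ by the induction hypothesis. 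The crux is then a \emph{peeling lemma}: for every database instance~$\db$,
\[
\db\cqamodels q
\quad\Longleftrightarrow\quad
\text{for some }R(\underline{\vec{a}},\vec{b})\in\db,\ \text{every }R(\underline{\vec{a}},\vec{b}')\in\db\text{ satisfies }\db\cqamodels\substitute{q'}{\vec{x}\,\vec{y}}{\vec{a}\,\vec{b}'}.
\]
Granting this, the $\FO$-rewriting of $\cqa{q}$ is assembled by structural recursion: existentially quantify over the key value~$\vec{a}$, assert via an $R$-atom that the $R$-block with key~$\vec{a}$ is non-empty, and universally quantify over the facts $R(\underline{\vec{a}},\vec{b}')$ of that block, inserting for each the $\FO$-rewriting of $\cqa{\substitute{q'}{\vec{x}\,\vec{y}}{\vec{a}\,\vec{b}'}}$ supplied by the induction hypothesis.

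The proof of the peeling lemma is where I expect the real difficulty to lie, since it is the step where the combinatorial definition of ``attack'' must be translated into a statement about repairs. The ``$\Leftarrow$'' direction is a gluing argument: from a repair~$\rep$, restrict attention to the $R$-fact $R(\underline{\vec{a}},\vec{b}')$ it selects from the block of~$\vec{a}$ and reuse that $\rep$ is a repair witnessing $\cqamodels\substitute{q'}{\vec{x}\,\vec{y}}{\vec{a}\,\vec{b}'}$. The ``$\Rightarrow$'' direction needs the converse swap: for a suitably chosen key value~$\vec{a}$, whenever all repairs satisfy~$q$ one may replace the $R$-fact selected in the block of~$\vec{a}$ by any key-equal fact of that block and still find a repair satisfying~$q'$ under the corresponding substitution. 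Unattackedness of~$F$ is exactly what licenses this swap: because no path leaving $\notkey{F}$ avoids $\keycl{F}{q}$, the functional dependencies of~$q$ do not propagate the choice made in the block of~$\vec{a}$ into the blocks relevant to~$q'$, so the two choices can be made independently. Carefully organizing this argument---tracking which variables have become constants, and why acyclicity survives each peeling step---is the technical heart.

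For the ``only if'' direction, assume the attack graph of~$q$ has a cycle. I would invoke the hardness half of the Koutris--Wijsen dichotomy: in this case $\cqa{q}$ is $\LOGSPACE$-hard, and sometimes even $\coNP$-complete, under \emph{first-order} reductions. The prototype is the cyclic query $\set{R_{1}(\underline{x_{1}},x_{2}),\dots,R_{k}(\underline{x_{k}},x_{1})}$, whose consistent answering encodes an undirected-reachability-flavoured problem; a general cyclic attack graph is reduced to this prototype by initializing variables to constants and padding with trivially satisfiable atoms. Since~$\FO$ is closed under first-order reductions while $\FO\subsetneq\LOGSPACE$ unconditionally (e.g.\ parity lies in $\LOGSPACE$ but not in~$\FO$), membership of $\cqa{q}$ in~$\FO$ would force $\FO=\LOGSPACE$, a contradiction. (For this direction in isolation, once the cycle is reduced to the two-atom case $\set{R(\underline{x},y),S(\underline{y},x)}$ one can alternatively give a self-contained Hanf-locality argument producing, for every radius~$m$, two $m$-Hanf-equivalent instances that disagree on $\cqa{q}$.) Overall the main obstacle is the peeling lemma; the hardness direction is comparatively mechanical, the only subtlety being the specialization of an arbitrary cyclic attack graph to the prototype cycle.
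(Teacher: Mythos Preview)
The paper does not prove this theorem; it is cited as a known result from~\cite{DBLP:journals/tods/KoutrisW17}. Your sketch correctly reproduces the architecture of the original proof there: for the ``if'' direction, inductively peel off an unattacked atom and recurse on the residual query with $\vars{F}$ frozen to constants (your peeling lemma is the main rewriting step of~\cite{DBLP:journals/tods/Wijsen12,DBLP:journals/tods/KoutrisW17}); for the ``only if'' direction, invoke $\LOGSPACE$-hardness under first-order reductions together with $\FO\subsetneq\LOGSPACE$.

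One slip in your intuition for the $\Rightarrow$ half of the peeling lemma: the clause ``no path leaving $\notkey{F}$ avoids $\keycl{F}{q}$'' characterizes ``$F$ \emph{attacks nothing}'' (a sink in the attack graph), whereas the atom you selected---and the hypothesis the lemma actually needs---is that $F$ is \emph{unattacked} (a source: no $G$ satisfies $G\attacks{q}F$). These are opposite conditions. The role of unattackedness is the dual of what you describe: because no $G\neq F$ attacks the variables of~$F$, swapping a $G$-fact in a repair cannot affect which $R$-block supplies the witness for~$F$; this is the content of~\cite[Lemma~B.1]{DBLP:journals/tods/KoutrisW17}, also invoked in the present paper's proof of Lemma~\ref{lem:anysort}. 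Your overall plan is sound and you rightly flag this step as the technical heart, but the mechanism you sketch has the attack arrow pointing the wrong way.
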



\section{$\forall$Embeddings and Superfrugal Repairs}\label{sec:superfrugal}

In~\cite{DBLP:journals/tods/KoutrisW17}, a construct called ``frugal repair'' was introduced. In the current section, we introduce a related but more stringent construct, called \emph{superfrugal repair}.

\paragraph{\bf Embeddings and $\forall$embeddings}
Whenever $q(\vec{u})$ is a conjunction of atoms, we write $q$ to denote the closed formula $\exists\vec{u}\formula{q(\vec{u})}$.
Let $q(\vec{u})$ now be a self-join-free conjunction of $n$~atoms such that the attack graph of $q$ is acyclic.
The following definitions are relative to a fixed topological sort  $(F_{1},\ldots,F_{n})$ of $q$'s attack graph and a fixed database instance~$\db$.
We define the following sequences of variables for $\ell\in\{1,\ldots,n\}$:
\begin{itemize}
    \item $\vec{u}_{\ell}$ contains all (and only) variables of $\bigcup_{i=1}^{\ell}\vars{F_{i}}$. Thus, $\vec{u}_{n}=\vec{u}$;
    \item $\vec{x}_{\ell}$ contains the variables of $\key{F_{\ell}}$ that do not already occur in $\bigcup_{i=1}^{\ell-1}\vars{F_{i}}$; and
    \item $\vec{y}_{\ell}$  contains the variables of $\notkey{F_{\ell}}$ that do not already occur in $\bigcup_{i=1}^{\ell-1}\vars{F_{i}}$.
\end{itemize}
Moreover, we define $\vec{u}_{0}=()$, the empty sequence. 
With this notation, we have that for every $\ell\in\{1,\ldots,n\}$, 
\begin{equation}\label{eq:uxy}
\vec{u}_{\ell}=(\vec{u}_{\ell-1}, \vec{x}_{\ell}, \vec{y}_{\ell}).
\end{equation}

Let $\ell\in\{1,\ldots,n\}$ in what follows.
An \emph{$\ell$-embedding (of $q$ in $\db$)} is a valuation $\theta$ over~$\vec{u}_{\ell}$ such that $(\db,\theta)\models q(\vec{u})$. 
Further, a $0$-embedding is defined to be the empty set.
An $\ell$-embedding with $\ell=n$ is also called an \emph{embedding} for short.
\revision{A set $M$ of embeddings is said to be \emph{consistent} if $M\models\fdset{q}$.}

The following definition is by induction.
An $\ell$-embedding $\theta$ is called an \emph{$\ell$-$\forall$embedding (of~$q$ in~$\db$)} if one of the following holds true:
\begin{description}
    \item[Basis:] $\ell=0$ and every repair of $\db$ satisfies~$q$; or
    \item[Step:] $\ell\geq 1$ and both the following hold true:
    \begin{itemize}
    \item 
        $(\db,\restrict{\theta}{\vec{u}_{\ell-1}\vec{x}_{\ell}})\cqamodels F_{\ell}\land F_{\ell+1}\land \dotsm\land F_{n}$; and 
    \item 
    the $(\ell-1)$-embedding contained in $\theta$ is an $(\ell-1)$-$\forall$embedding.
    \end{itemize}
\end{description}
In simple terms, the first bullet states that every repair must satisfy the query whose atoms are obtained from $F_{\ell},F_{\ell+1},\ldots,F_{n}$ by replacing $x$ with $\theta(x)$ whenever $x$ is a variable that occurs in the primary key of $F_{\ell}$ or in an atom that precedes $F_{\ell}$.
The second bullet implies that the same condition must hold for $\ell-1, \ell-2, \ldots, 2, 1$, and eventually~$0$.
An $\ell$-$\forall$embedding with $\ell=n$ is also called a \emph{$\forall$embedding} for short.




\begin{example}\label{ex:forallembedding}
The query $q_{0}=\exists t\exists p\formula{\att{Dealers}(\underline{\textnormal{``James''}},t)\land \att{Stock}(\underline{p,t},\textnormal{35})}$ checks if there is any product stored in a quantity of~$35$ in the town where James is a dealer.
It holds true in every repair of the database instance~$\dbstock$ of Fig.~\ref{fig:dealers}.
The embedding $\{t\mapsto\textnormal{``Boston''}, p\mapsto\textnormal{``Tesla~Y''}\}$ is a $\forall$embedding.
On the other hand, the embedding $\theta\defeq\{t\mapsto\textnormal{``Boston''}, p\mapsto\textnormal{``Tesla~X''}\}$ is not a $\forall$embedding, because $\restrict{\theta}{\{t,p\}}=\theta$ and $(\dbstock,\theta)\not\cqamodels \att{Stock}(\underline{p,t},\textnormal{35})$.
Indeed, if $\rep$ is a repair that contains $\att{Stock}(\underline{\textnormal{``Tesla~X''},\textnormal{``Boston''}},\textnormal{40})$,
then $(\rep,\theta)\not\models\att{Stock}(\underline{p,t},\textnormal{35})$.
\myqed
\end{example}

We now state two important helping lemmas. The first one states that all topological sorts of an acyclic attack graph yield the same $\forall$embeddings. The second lemma establishes that  $\forall$embeddings can be computed in $\FOL$.

\begin{lemma}\label{lem:anysort}
Let $q$ a query in $\sjfbcq$ with an acyclic attack graph, and $\db$ be a database instance.
Let $n$ be the number of atoms in~$q$.
Let $\sort_{1}$ and $\sort_{2}$ be two topological sorts of $q$'s attack graph.
Every $n$-$\forall$embedding relative to~$\sort_{1}$ is an $n$-$\forall$embedding relative to $\sort_{2}$.
\end{lemma}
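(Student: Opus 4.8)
The plan is to reduce the general case of two arbitrary topological sorts to the case of two sorts that differ by a single adjacent transposition, and then prove invariance of $\forall$embeddings under such a transposition. Recall that any two topological sorts of a fixed DAG are connected by a finite sequence of swaps of adjacent incomparable elements: if $\sort_1 = (F_1,\ldots,F_n)$ and $\sort_2$ differ, pick the first position $k$ where they disagree, locate in $\sort_1$ the atom that $\sort_2$ places at position $k$, and bubble it leftward past its immediate predecessors; each such predecessor is necessarily incomparable to it in the attack graph (otherwise $\sort_2$ would violate the edge), so each individual swap is legal and produces another topological sort. By transitivity of the relation ``has the same $n$-$\forall$embeddings as,'' it therefore suffices to prove the lemma when $\sort_2$ is obtained from $\sort_1 = (F_1,\ldots,F_n)$ by swapping $F_k$ and $F_{k+1}$, where there is no attack edge between $F_k$ and $F_{k+1}$ in either direction.

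For that core case, the strategy is to unwind the inductive definition of $\ell$-$\forall$embedding and show position by position that the two definitions agree. Write $\vec{u}_\ell, \vec{x}_\ell, \vec{y}_\ell$ for the variable sequences induced by $\sort_1$ and $\vec{u}'_\ell, \vec{x}'_\ell, \vec{y}'_\ell$ for those induced by $\sort_2$. For $\ell < k$ and $\ell > k$ we have $\bigcup_{i=1}^\ell \vars{F_i}$ literally equal under the two orderings (the set of the first $\ell$ atoms is the same once we are past the transposed pair, and identical before it), so $\vec{u}_\ell$ and $\vec{u}'_\ell$ enumerate the same set of variables; and the key clause ``$(\db,\restrict{\theta}{\vec{u}_{\ell-1}\vec{x}_\ell}) \cqamodels F_\ell \land \cdots \land F_n$'' is stated purely in terms of the set $\{F_\ell,\ldots,F_n\}$ and the restriction of $\theta$ to that variable set, hence is order-insensitive for these $\ell$. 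The only position needing real work is $\ell = k$: here I must show that
\[
(\db,\restrict{\theta}{\vec{u}_{k-1}\vec{x}_k}) \cqamodels F_k \land F_{k+1} \land \cdots \land F_n
\quad\Longleftrightarrow\quad
(\db,\restrict{\theta}{\vec{u}_{k-1}\vec{x}'_k}) \cqamodels F_{k+1} \land F_k \land F_{k+2} \land \cdots \land F_n,
\]
where on the right $\vec{x}'_k = \key{F_{k+1}} \setminus \bigcup_{i<k}\vars{F_i}$. The conjunctions on both sides are the same set of atoms $\{F_k,\ldots,F_n\}$, so the content of the claim is that restricting $\theta$ to the slightly different variable sets $\vec{u}_{k-1}\vec{x}_k$ versus $\vec{u}_{k-1}\vec{x}'_k$ gives the same $\cqamodels$-statement.

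The crux — and the step I expect to be the main obstacle — is precisely this reconciliation of which key variables are ``frozen'' before the two atoms are processed. The key fact I will invoke is the absence of an attack edge between $F_k$ and $F_{k+1}$, combined with Theorem~\ref{the:forewritability} and the structural properties of attack graphs from~\cite{DBLP:journals/tods/KoutrisW17}: when $F_k \nattacks{q} F_{k+1}$, no variable in $\key{F_{k+1}}$ that also occurs in $F_k$ (or is implied by $\key{F_k}$ relative to the already-processed part) can cause a ``hidden'' dependency; intuitively, $F_{k+1}$ does not depend on $F_k$ being resolved first, nor vice versa, so the $\cqamodels$ statement over $\{F_k,\ldots,F_n\}$ is insensitive to whether we additionally fix $\vec{x}_k$ or $\vec{x}'_k$. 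I would make this precise by showing that the extra variables fixed on one side but not the other are, under the hypothesis $(\db,\theta)\models q(\vec{u})$, already determined by the functional dependencies $\fdset{\{F_k,\ldots,F_n\}}$ together with $\restrict{\theta}{\vec{u}_{k-1}}$, so fixing them adds nothing to the certain-answer condition. Finally, one checks the basis case $\ell = 0$ trivially (the condition ``every repair of $\db$ satisfies $q$'' does not mention the sort at all), and assembles the induction: agreement at every $\ell$ propagates upward through the ``Step'' clause to give agreement at $\ell = n$, which is the statement of the lemma. By the symmetry of the argument in $\sort_1$ and $\sort_2$, the converse inclusion holds as well, so the two sets of $n$-$\forall$embeddings coincide.
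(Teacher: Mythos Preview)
Your reduction to adjacent transpositions is fine and matches the paper. The core argument, however, has a genuine gap.

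First, you claim that only position $\ell=k$ needs work. This is wrong: at position $\ell=k+1$ the two sorts also disagree. Under $\sort_1$ the condition is
\[
(\db,\restrict{\theta}{\vec{u}_{k}\vec{x}_{k+1}})\cqamodels F_{k+1}\land F_{k+2}\land\dotsm\land F_n,
\]
whereas under $\sort_2$ it is
\[
(\db,\restrict{\theta}{\vec{u}'_{k}\vec{x}'_{k+1}})\cqamodels F_{k}\land F_{k+2}\land\dotsm\land F_n,
\]
with $\vec{u}'_k=\vars{\{F_1,\ldots,F_{k-1},F_{k+1}\}}$. The conjunctions are different sets of atoms, not merely reorderings.

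Second, and more seriously, your strategy of showing equivalence \emph{position by position} cannot succeed, because the individual conditions at levels $k$ and $k+1$ are not equivalent under the two sorts---only their conjunction is. Take $q=R(\underline{x},u)\land S(\underline{y},v)$ with no shared variables (so no attacks, and the two sorts are $(R,S)$ and $(S,R)$). At level~$1$, one sort requires ``every repair has some $R(a,\cdot)$ and some $S$-fact,'' the other requires ``every repair has some $S(c,\cdot)$ and some $R$-fact.'' These are not equivalent. Your proposed mechanism---that the extra key variables on one side are determined by $\fdset{\{F_k,\ldots,F_n\}}$ from $\restrict{\theta}{\vec{u}_{k-1}}$---fails outright here: neither $x$ nor $y$ is determined by anything.

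What the paper does instead is derive \emph{both} $\sort_2$-conditions (at levels $k$ and $k+1$) from \emph{both} $\sort_1$-conditions taken together, via a repair-manipulation argument: given a repair $\rep$, swap the $R_i$-fact chosen by $\rep$ for the one dictated by $\theta$, and use the hypothesis $F_k\nattacks{q}F_{k+1}$ (resp.\ $F_{k+1}\nattacks{q}F_k$) through \cite[Lemma~B.1]{DBLP:journals/tods/KoutrisW17} to argue that this swap preserves satisfaction of the relevant instantiated query. The non-attack condition enters semantically, not through functional dependencies.
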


\begin{lemma}\label{lem:allfo}
Let $q\defeq\exists\vec{u}\formula{q(\vec{u})}$ be a query in $\sjfbcq$ with an acyclic attack graph.
It is possible to construct, \revision{in quadratic time in the size of~$q$}, a $\FOL$ formula $\varphi(\vec{u})$ such that for every database instance~$\db$, for every valuation $\theta$ over $\vec{u}$,
$(\db,\theta)\models\varphi(\vec{u})$ if and only if $\theta$ is a $\forall$embedding of~$q$ in~$\db$. 
\end{lemma}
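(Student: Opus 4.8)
The plan is to build the formula $\varphi(\vec{u})$ by following the inductive definition of $\forall$embedding along a fixed topological sort $(F_{1},\dots,F_{n})$ of the attack graph, peeling off one atom at a time. Concretely, I would define a sequence of formulas $\varphi_{0}(\vec{u}_{0}),\varphi_{1}(\vec{u}_{1}),\dots,\varphi_{n}(\vec{u}_{n})$ with free variables exactly $\vec{u}_{\ell}$, where $\varphi_{\ell}$ is intended to characterize the $\ell$-$\forall$embeddings, and then set $\varphi(\vec{u})\defeq\varphi_{n}(\vec{u}_{n})$. For the basis, $\varphi_{0}$ is a sentence expressing that every repair satisfies $q$; since the attack graph of $q$ is acyclic, Theorem~\ref{the:forewritability} guarantees that $\cqa{q}$ is in $\FO$, so such a sentence exists, and the cited construction of~\cite{DBLP:journals/tods/KoutrisW17} produces it in the required time bound. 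For the inductive step, using the decomposition $\vec{u}_{\ell}=(\vec{u}_{\ell-1},\vec{x}_{\ell},\vec{y}_{\ell})$ from~\eqref{eq:uxy}, I would take
\[
\varphi_{\ell}(\vec{u}_{\ell}) \;\defeq\; \varphi_{\ell-1}(\vec{u}_{\ell-1}) \;\land\; \psi_{\ell}(\vec{u}_{\ell-1},\vec{x}_{\ell}) \;\land\; \exists\vec{y}_{\ell}'\dots\exists\;\bigl(q(\vec{u})\text{-part}\bigr),
\]
where the last conjunct merely witnesses that $\theta$ is an $\ell$-embedding (an existential first-order statement over the remaining variables), and $\psi_{\ell}(\vec{u}_{\ell-1},\vec{x}_{\ell})$ expresses the condition $(\db,\restrict{\theta}{\vec{u}_{\ell-1}\vec{x}_{\ell}})\cqamodels F_{\ell}\land F_{\ell+1}\land\dotsm\land F_{n}$ from the ``Step'' clause.

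The crux is therefore to show that this last CQA condition is $\FO$-expressible and constructible within quadratic time, uniformly in $\ell$. The key observation is that $F_{\ell}\land F_{\ell+1}\land\dotsm\land F_{n}$, with the variables of $\key{F_{\ell}}$ and of $F_{1},\dots,F_{\ell-1}$ treated as free (equivalently, frozen to constants $\theta(x)$), is again a self-join-free conjunctive query, and one whose attack graph is a subgraph of that of~$q$ with some variables initialized. By the remark illustrated in Fig.~\ref{fig:notvarConnected} (initializing variables preserves acyclicity), this residual query has an acyclic attack graph, so Theorem~\ref{the:forewritability} again applies and yields the desired $\FO$ rewriting $\psi_{\ell}$ with free variables $\vec{u}_{\ell-1}\vec{x}_{\ell}$. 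One must be a little careful that $F_{\ell},\dots,F_{n}$ really is a suffix of a topological sort of the residual query's attack graph and that ``freezing'' the head variables does not introduce cross-atom equalities that break self-join-freeness; these are routine but need to be checked.

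The main obstacle I expect is the quadratic time bound rather than mere expressibility. Naively, constructing $\psi_{\ell}$ for each $\ell\in\{1,\dots,n\}$ invokes the $\FO$-rewriting machinery of~\cite{DBLP:journals/tods/KoutrisW17} on a query of size $O(n)$, and if each such call costs quadratic time, the total is cubic. The fix is to argue that the $n$ residual rewritings can be produced together in quadratic time --- e.g., because the rewriting of~\cite{DBLP:journals/tods/KoutrisW17} is itself computed by a recursion that eliminates atoms in topological order, so the rewritings for the suffixes $F_{\ell},\dots,F_{n}$ are exactly the intermediate results of a single run of that recursion on~$q$. Assembling $\varphi_{n}$ as the conjunction $\bigwedge_{\ell}\psi_{\ell}$ together with the $\ell$-embedding witnesses then has total size and construction time $O(n^{2})$. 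Correctness of $\varphi=\varphi_{n}$ follows by induction on $\ell$, matching each conjunct of $\varphi_{\ell}$ against the corresponding clause of the definition of $\ell$-$\forall$embedding; Lemma~\ref{lem:anysort} ensures the choice of topological sort is immaterial.
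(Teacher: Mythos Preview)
Your proposal is correct and follows essentially the same approach as the paper's proof: induct along a topological sort, defining $\varphi_\ell$ as the conjunction of $\varphi_{\ell-1}$, a consistent first-order rewriting of the residual query $F_\ell\land\dots\land F_n$ with $\vec{u}_{\ell-1}\vec{x}_\ell$ free, and the atom $F_\ell$ itself as the embedding witness. The paper's time-bound argument is simpler than yours---it cites that each residual rewriting is constructible in \emph{linear} time in~$|q|$ via~\cite{DBLP:journals/tods/KoutrisW17}, so summing over $\ell\in\{1,\dots,n\}$ directly yields $O(n\cdot|q|)=O(|q|^{2})$ with no need to share intermediate results---and it simply takes $\varphi_{0}=\mathsf{true}$, since the base condition $\db\cqamodels q$ is already implied by the $\ell=1$ rewriting.
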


\revision{Note that since the formula $\varphi(\vec{u})$ in Lemma~\ref{lem:allfo} can be constructed in quadratic time, its length is at most quadratic (in the size of~$q$).
}

\paragraph{\bf Superfrugal repairs}
A repair~$\rep$ of a database instance~$\db$ is \emph{superfrugal} relative to $\exists\vec{u}\formula{q(\vec{u})}$ if every embedding of~$q$ in $\rep$ is a $\forall$embedding of~$q$ in~$\db$.
Informally, superfrugal repairs are repairs with $\subseteq$-minimal sets of embeddings, which is expressed by Lemma~\ref{lem:frugal}.

\begin{example}\label{ex:frugal}
Continuation of Example~\ref{ex:forallembedding}.
Let $\rep$ be the repair of $\dbstock$ that contains all (and only) tuples preceded by $\dagger$ in Fig.~\ref{fig:dealers}.
Then, $\rep$ is not superfrugal relative to~$q_{0}\defeq\exists t\exists p\formula{\att{Dealers}(\underline{\textnormal{``James''}},t)\land \att{Stock}(\underline{p,t},\textnormal{35})}$.
Indeed, $\theta\defeq\{t\mapsto\textnormal{``Boston''}, p\mapsto\textnormal{``Tesla~X''}\}$ is an embedding of $q_{0}$ in~$\rep$, but as discussed in Example~\ref{ex:forallembedding}, $\theta$ is not a $\forall$embedding of $q_{0}$ in~$\db$.
\myqed
\end{example}


\begin{lemma}\label{lem:frugal}
Let $\db$ be a database instance, and $\exists\vec{u}\formula{q(\vec{u})}$ a query in $\sjfbcq$ with an acyclic attack graph.
For every repair $\rep$ of~$\db$, there exists a superfrugal repair $\rep^{*}$ of $\db$ such that for every sequence~$\vec{c}$ of constants, of length $\card{\vec{u}}$,  if $\rep^{*}\models q(\vec{c})$, then $\rep\models q(\vec{c})$.
\end{lemma}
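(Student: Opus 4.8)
The plan is to first reformulate the statement. Since $\rep\models q(\vec{c})$ holds exactly for those tuples $\vec{c}$ of the form $\theta(\vec{u})$ with $\theta$ an embedding of $q$ in $\rep$, Lemma~\ref{lem:frugal} is equivalent to: every repair $\rep$ of $\db$ admits a superfrugal repair $\rep^{*}$ (relative to $\exists\vec{u}\formula{q(\vec{u})}$) every embedding of which is also an embedding of $q$ in $\rep$. By the definition of superfrugal, I therefore want the set of embeddings of $q$ in $\rep^{*}$ to be contained in the intersection of the embeddings of $q$ in $\rep$ with the set of $\forall$embeddings of $q$ in $\db$. The degenerate case is immediate: if some repair $\rep_{0}$ of $\db$ falsifies $\exists\vec{u}\formula{q(\vec{u})}$, then $\rep_{0}$ has no embedding at all, so it is vacuously superfrugal and $\rep^{*}\defeq\rep_{0}$ works. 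Hence from now on I assume $\db\cqamodels\exists\vec{u}\formula{q(\vec{u})}$; this makes the empty $0$-embedding a $0$-$\forall$embedding, and it also guarantees that the repair $\rep^{*}$ to be constructed automatically satisfies the query, so it will have at least one embedding.

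Next I would fix a topological sort $(F_{1},\dots,F_{n})$ of the (acyclic) attack graph of $q$, with $R_{i}$ the relation name of $F_{i}$; self-join-freeness gives pairwise distinct $R_{i}$. I build $\rep^{*}$ in $n$ stages: at stage $\ell$ I commit to one fact from each block of relation $R_{\ell}$ (blocks of relations not occurring in $q$ simply inherit $\rep$'s fact), and I let $\db_{\ell}$ be $\db$ with the blocks of $R_{1},\dots,R_{\ell}$ collapsed to the committed facts, so that the unique repair of $\db_{n}$ is $\rep^{*}$. I would maintain, for $\ell=0,\dots,n$, the invariant $(\star_{\ell})$: for every repair $\rep'$ of $\db_{\ell}$ and every embedding $\theta$ of $q$ in $\rep'$, the restriction $\restrict{\theta}{\vec{u}_{\ell}}$ is an $\ell$-$\forall$embedding of $q$ in $\db$ and $\theta(F_{i})\in\rep$ for all $i\le\ell$ --- suitably strengthened with ``certain-satisfaction'' information about the query suffixes $F_{\ell+1}\wedge\dots\wedge F_{n}$ needed to drive the next stage. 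Instance $\db_{0}=\db$ satisfies $(\star_{0})$ because $\emptyset$ is a $0$-$\forall$embedding and the conditions on $\theta(F_{i})$, $i\le 0$, are vacuous, while $(\star_{n})$ is precisely what is required.

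The real work is the inductive step: assuming $(\star_{\ell-1})$, decide for each block $B$ of $R_{\ell}$, say with key $\vec{a}$, which fact to commit to. The decisive structural observation is that the variables of $\vec{y}_{\ell}$ occur only in $F_{\ell}$ and that $F_{\ell+1},\dots,F_{n}$ use none of $R_{1},\dots,R_{\ell}$; therefore, whether a ``context'' $\sigma$ --- a valuation of $\vec{u}_{\ell-1}\vec{x}_{\ell}$ whose $\vec{u}_{\ell-1}$-part is an $(\ell-1)$-$\forall$embedding $\tau$ with $\tau(F_{i})\in\rep$ for $i<\ell$, and with $\sigma(\key{F_{\ell}})=\vec{a}$ --- extends, through a committed fact $f\in B$, to an $\ell$-embedding depends on $f$ only via whether $f$ matches the $F_{\ell}$-pattern determined by $\sigma$, whereas the satisfiability of $F_{\ell+1}\wedge\dots\wedge F_{n}$ under $\sigma$ is independent of $f$. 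Exploiting this (and adapting the frugal-repair machinery of~\cite{DBLP:journals/tods/KoutrisW17}, of which superfrugality is a strengthening), I commit to $\rep(B)$ whenever $\rep(B)$ matches the $F_{\ell}$-pattern of some context whose level-$\ell$ certain-satisfaction condition $(\db,\sigma)\cqamodels F_{\ell}\wedge\dots\wedge F_{n}$ holds --- a short computation then shows that every context still surviving through $\rep(B)$ in fact extends to an $\ell$-$\forall$embedding, and does so using only the fact $\rep(B)\in\rep$; in the opposite case I commit to a fact of $B$ that fails the $F_{\ell}$-pattern of every relevant context, which is possible precisely because those certain-satisfaction conditions fail, and a single fact suffices by the localness of $\vec{y}_{\ell}$ together with $F_{\ell}$ being unattacked in the query restricted to $F_{\ell},\dots,F_{n}$. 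It then remains to verify that $(\star_{\ell})$ is restored.

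The hard part will be exactly this last verification: confirming that a single fact per block can be chosen so as to simultaneously avoid creating any embedding outside $\rep$, force every surviving embedding to be a $\forall$embedding, and retain enough ``certain-satisfaction'' information about the suffixes that the level-$\ell$ conditions can still be checked one stage later. This is where acyclicity of the attack graph --- and the fact that it remains acyclic when the already-committed variables are treated as constants --- does the real work, and where the stability properties of $\forall$embeddings from Lemmas~\ref{lem:anysort} and~\ref{lem:allfo} are used. I expect that the main obstacle is designing an invariant strong enough to make the induction go through, yet still provable, rather than the construction of $\rep^{*}$ itself.
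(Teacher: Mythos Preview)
Your high-level plan---process the atoms in topological order and commit facts stage by stage so that surviving embeddings are $\forall$embeddings and also embeddings of~$\rep$---is exactly the paper's strategy. However, the invariant $(\star_{\ell})$ you propose is too strong and is already unmaintainable at $\ell=1$. Consider $q=R(\underline{x},y)\wedge P(\underline{z})\wedge S(\underline{y,z},w)\wedge U(\underline{w},d)$ (with $d$ a constant) over a database where $R$, $P$, $S$ are consistent, $R$ has facts $R(a_{1},b_{1})$ and $R(a_{2},b_{2})$, $S$ has $S(b_{1},c,e_{1})$ and $S(b_{2},c,e_{2})$, and $U$ has $U(e_{1},d)$ plus the inconsistent pair $U(e_{2},d)$, $U(e_{2},d')$. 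Here $\db\cqamodels q$ (via the $a_{1}$-branch), but $(a_{2},b_{2})$ is a $1$-embedding in the repair choosing $U(e_{2},d)$ without being a $1$-$\forall$embedding, so $(\star_{1})$ fails for \emph{every} choice of $\db_{1}$ (since $R$ is already consistent, $\db_{1}=\db$). In the same example your ``opposite case'' rule also breaks: the $R$-block with key $a_{2}$ contains a single fact, which matches the unique context $\sigma=(a_{2})$, yet $\sigma$'s certain-satisfaction fails---so no ``killing'' fact exists. Finally, your ``decisive structural observation'' that the variables of $\vec{y}_{\ell}$ occur only in $F_{\ell}$ is false in general (e.g., $y\in\vec{y}_{1}$ in $R(\underline{x},y)\wedge S(\underline{y},z)$).

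The paper does \emph{not} try to control all repairs of a shrunken instance. Its invariant is weaker: at stage $i$ there exists a single repair $\sep_{i}$ of $\db$ whose $i$-embeddings are all $i$-$\forall$embeddings and all $i$-embeddings of~$\rep$. The inductive step is then global rather than per-block: from $\sep_{i}$ one forms the hybrid instance $\db^{(i)}$ (the $R_{1},\dots,R_{i}$-facts of $\sep_{i}$ together with the full $R_{i+1},\dots,R_{n}$-relations of $\db$), applies Lemma~\ref{lem:frugalRepair}---a nontrivial extension of the frugal-repair lemma from~\cite{DBLP:journals/tods/KoutrisW17} guaranteeing a repair whose projection onto $\vec{u}_{i}\vec{x}_{i+1}$ is the intersection over all repairs---to obtain a repair $\sep^{*}$ of $\db^{(i)}$ in which every surviving $(i{+}1)$-key context has certain satisfaction, and only then swaps the relevant $R_{i+1}$-facts of $\sep^{*}$ for those of~$\rep$. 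The ``minimal projection'' step is precisely the piece your per-block rule tries to shortcut, and it is where acyclicity of the attack graph really does the work; without it, the induction cannot be closed.
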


\revision{
Lemma~\ref{lem:minimal} in Appendix~\ref{sec:minimality} shows that for queries~$q\in\sjfbcq$ with an acyclic attack graph, superfrugal repairs are identical to the $n$-minimal repairs defined in~\cite{DBLP:conf/icdt/FigueiraPSS23} and the $\frugal{X}{q}$-frugal repairs introduced in~\cite{DBLP:journals/tods/KoutrisW17},  where $n$ denotes the number of atoms and $X=\vars{q}$. 
In the current paper, we opted to define superfrugal repairs in terms of $\forall$embeddings; an alternative approach would be to take $n$-minimal repairs (or, equivalently, $\preceq_{q}^{X}$-frugal repairs) as a starting point and show that all embeddings in them are $\forall$embeddings.  
This alternative approach would also arrive at the conclusion of Lemma~\ref{lem:allfo} regarding the computability of $\forall$embeddings in $\FOL$.
Specifically, Lemma~8 and Remark~11 in~\cite[Section~4]{DBLP:conf/icdt/FigueiraPSS23} entail that one can compute in $\FOL$ a set~$E$ that contains, for every $\ell\in\{0,1,2,\ldots,n\}$, each $\ell$-$\forall$embedding. 
The goal in~\cite{DBLP:conf/icdt/FigueiraPSS23} is to check whether $E$ contains a $0$-$\forall$embedding (or, equivalently, the empty set), which indicates that~$q$ holds true in every repair. 
For that purpose, it is sufficient for the set $E$ to be a superset, rather than an exact match, of the set of all $\ell$-$\forall$embeddings, for $0\leq\ell\leq n$.
Lemma~\ref{lem:allfo} differs in that it refers to a set that contains exactly all $n$-$\forall$embeddings and nothing else.
On the other hand, we will argue in Section~\ref{sec:discussion} that the more general technical development in~\cite{DBLP:conf/icdt/FigueiraPSS23}, which also handles cyclic attack graphs, could be particularly valuable for exploring the polynomial-time computability of $\glbcqa{g()}$, an intriguing open problem that is not the focus of our current submission.
}

\section{Aggregate Logic and CQA}\label{sec:aggregatelogic}

In this section, we first define the notion of \emph{aggregate operator} and then introduce the logic $\withaggr{\FOL}$ which will serve as the target language for our rewritings.
We formally define consistent lower and upper bounds, denoted $\lubcqa{g(\vec{x})}$ and $\glbcqa{g(\vec{x})}$, for arbitrary numerical terms $g(\vec{x})$ in $\withaggr{\FOL}$ with free variables~$\vec{x}$.
Our study will subsequently focus on closed numerical terms~$g()$ in~$\withaggr{\sjfbcq}$, a subclass of  $\withaggr{\FOL}$ introduced in Section~\ref{sec:problemstatement}.

\subsection{Aggregating Non-Negative Numbers}\label{sec:aggregatingpos}

A \emph{(positive) aggregate operator} is a function $\calF$ that takes as argument a finite multiset $X$ of non-negative rational numbers such that $\calF(X)\in\rationals_{\geq 0}$ if $X\neq\emptyset$, and $\calF(\emptyset)=f_{0}$.
Here, $f_{0}$ is a constant (not necessarily in~$\rationals_{\geq 0}$), which is the value returned by $\calF$ on the empty multiset.
In the following definitions, all multisets are understood to be multisets of non-negative rational numbers.

An aggregate operator is \emph{associative} if for all non-empty multisets~$X$ and $Y$ such that $X\neq\emptyset$, we have $\calF(X\uplus Y)=\calF(\bag{\calF(X)}\uplus Y)$, where $\uplus$ denotes union of multisets.

\begin{example}
Examples of associative aggregate operators are $\agmax$, $\agmin$, and $\agsum$.
Examples of aggregate operators that are not  associative are $\agavg$, $\agcount$, and $\agsumdistinct$.
Note, for instance, $\agcount(\bag{5,6,7,8})=4$ and $\agcount(\bag{\agcount(\bag{5,6,7}),8})=\agcount(\bag{3,8})=2$. 
\myqed
\end{example}

An aggregate operator~$\calF$ is \emph{monotone} if for all $m>0$ and every (possibly empty) multiset $Y$, we have $\calF(\bag{x_{1},\ldots,x_{m}})\leq \calF(\bag{ x_{1}',\ldots,x_{m}'}\uplus Y)$ whenever $x_{i}\leq x_{i}'$ for every~$i$.

\begin{example}
Examples of monotone aggregate operators are $\agmax$, $\agsum$, and $\agcount$.
Note that $\agmin$ is not monotone since $\agmin(\bag{3})>\agmin(\bag{2,3})$.
$\COUNTDISTINCT$ also lacks monotonicity: if we increase $3$ to $4$ in the multiset $\bag{3,4}$, the number returned by $\COUNTDISTINCT$ drops from~$2$ to~$1$.  
\myqed
\end{example}

Significantly, an aggregate operator $\calF$ that is not monotone in general may become monotone under a restriction of its underlying domain. For example, $\agprod$, defined by $\agprod(\bag{x_{1},\ldots,x_{m}})\defeq\Pi_{i=1}^{m}x_{i}$, is not monotone over $\rationals_{\geq 0}$ (because $\agprod(\bag{1})>\agprod(\bag{1,\frac{1}{2}})$), but is monotone over $\rationals_{\geq 1}$.


\subsection{The Logic $\withaggr{\FOL}$}

Our treatment of aggregate logic follows the approach in~\cite{DBLP:journals/jacm/HellaLNW01,DBLP:books/sp/Libkin04}.
We write $\withaggr{\FOL}$ for the extension of predicate calculus introduced next. 

Let $q(\vec{x},\vec{y})$ be a formula, where $\vec{x}$, $\vec{y}$ are disjoint sequences that together contain each free variable of~$q$ exactly once.
A \emph{primitive numerical term} is either a non-negative rational number or a numerical variable in $\vec{x}\vec{y}$. 
\revision{For example, the query of Example~\ref{ex:forallembedding} uses the constant numerical term~$35$.}
For every possible aggregate operator $\calF$, a primitive numerical term~$r$, and a formula $q(\vec{x},\vec{y})$, we have a new \emph{numerical term} 
\begin{equation*}\label{eq:gxnt}
g(\vec{x})=\agterm{\calF}{\vec{y}}{r}{q(\vec{x},\vec{y})}.
\end{equation*}
Variables $\vec{y}$ that are free in $q(\vec{x},\vec{y})$ become bound in $\agterm{\calF}{\vec{y}}{r}{q(\vec{x},\vec{y})}$;
in this respect $\agtermquantification{\calF}{\vec{y}}$ behaves like a sort of quantification over~$\vec{y}$.
Next, we define the semantics.

Let $\vec{a}$ be a sequence of constants of length~$\card{\vec{x}}$.
The value $g(\vec{a})$ on a database instance $\db$, denoted $\interpret{g(\vec{a})}{\db}$, is calculated as follows.
If there is no~$\vec{b}$ such that $\db\models q(\vec{a},\vec{b})$, then $\interpret{g(\vec{a})}{\db}=f_{0}$, with $f_{0}$ as defined in Section~\ref{sec:aggregatingpos}.
Otherwise, let $\theta_{1},\ldots,\theta_{m}$ enumerate (without duplicates) all valuations over $\vars{\vec{x}\vec{y}}$ such that $\theta_{i}(\vec{x})=\vec{a}$ and $(\db,\theta_{i})\models q(\vec{x},\vec{y})$ for $1\leq i\leq m$.
Then, $\interpret{g(\vec{a})}{\db}=\calF(\leftbag\theta_{1}(r),\ldots,\theta_{m}(r)\rightbag)$.
Note that the argument of~$\calF$ is in general a multiset, since $\theta_{i}(r)$ may be equal to $\theta_{j}(r)$ for $i\neq j$.
A numerical term without free variables is also called a \emph{numerical query}, denoted $g()$.

\begin{example}\label{ex:nested}
    Retrieve the town(s) with the largest total quantity of stored products.
    First, we provide a formula $q_{0}(t,y)$ which holds true if $y$ is the total quantity of products stored in town~$t$:
    \[q_{0}(t,y)\defeq\exists p\exists z\lrformula{\att{Stock}(\underline{p,t},z)}\land y=\agterm{\agsum}{(p,z)}{z}{\att{Stock}(\underline{p,t},z)}.\]
    The final query is 
    $\exists y\lrformula{q_{0}(t,y)\land y=\agterm{\agmax}{(t,z)}{z}{q_{0}(t,z)}}$,
    with $q_{0}$ as previously defined.
    The variable $z$ is used for clarity, but can be renamed in~$y$ without creating a naming conflict.
 \myqed   
\end{example}

\subsection{Lower and Upper Bounds Across All Repairs}

For each numerical term $g(\vec{x})\defeq\agterm{\calF}{\vec{y}}{r}{q(\vec{x},\vec{y})}$, we define $\glbcqa{g(\vec{x})}$ and $\lubcqa{g(\vec{x})}$ relative to a database instance~$\db$.
Let $\vec{a}$ be a sequence of constants of length~$\card{\vec{x}}$.
If there is a repair~$\rep$ of $\db$ such that $\interpret{g(\vec{a})}{\rep}=f_{0}$, then $\interpret{\glbcqa{g(\vec{a})}}{\db}=\bot$; otherwise 
\[\interpret{\glbcqa{g(\vec{a})}}{\db}\defeq\min\left\{\interpret{g(\vec{a})}{\rep}\mid\rep\in\repairs{\db}\right\}.\] 
The value $\interpret{\lubcqa{g(\vec{a})}}{\db}$ is defined symmetrically by replacing $\min$ with $\max$. 
Note that $\bot$ is returned if (and only if) $\vec{a}$ is not a consistent answer to the query $\{\vec{x}\mid \exists \vec{y}\formula{q(\vec{x},\vec{y})}\}$, a situation that we typically want to distinguish in CQA. Through the introduction of $\bot$, our glb and lub operations become independent from any conventions regarding aggregation over the empty multiset.


\subsection{Range CQA for $\withaggr{\sjfbcq}$}\label{sec:problemstatement}

Our focus will be on glb and lub rewriting  of numerical terms 
$g(\vec{x})\defeq\agterm{\calF}{\vec{y}}{r}{q(\vec{x},\vec{y})}$ 
where $q(\vec{x},\vec{y})$ is a self-join-free conjunction of atoms.
For readability, we often express such a numerical term $g(\vec{x})$ using the following Datalog-like syntax:
\[
\formula{\vec{x},\AGG(r)}\leftarrow q(\vec{x},\vec{y}),
\]
where the aggregate symbol $\AGG$ is interpreted by the aggregate operator~$\calF$, which is often made explicit by writing $\agagg$.
For instance, $\SUM$ is interpreted by $\agsum$, and $\MAX$ by $\agmax$.

Our research question is: Under which conditions can $\glbcqa{g(\vec{x})}$ and $\lubcqa{g(\vec{x})}$ be expressed in $\withaggr{\FOL}$?
In the initial technical treatment, we assume that $\vec{x}$ is empty, i.e., we are dealing with numerical terms $g()\defeq\agterm{\calF}{\vec{y}}{r}{q(\vec{y})}$ without free variables, which are conveniently expressed as $\AGG(r)\leftarrow q(\vec{y})$.
In Section~\ref{sec:freevariables}, we treat the extension to free variables.
The following definition pinpoints the class of queries we are interested in; it is followed by a first inexpressibility result, which captures the ``no''-side of Theorem~\ref{the:glbmain}.

\begin{definition}
    $\withaggr{\sjfbcq}$ is defined as the class of numerical queries $\agterm{\calF}{\vec{y}}{r}{q(\vec{y})}$ where~$q(\vec{y})$ is a self-join-free conjunction of atoms. 
    An alternative syntax is $\AGG(r)\leftarrow q(\vec{y})$, where the aggregate symbol $\AGG$ is interpreted by $\calF$ (which is often made explicit by writing $\agagg$ instead of~$\calF$). 
\myqed    
\end{definition}

\begin{theorem}\label{the:inexpressible}
Let $g()\defeq\AGG(r)\leftarrow q(\vec{y})$ be a numerical query in $\withaggr{\sjfbcq}$.
If the attack graph of $\exists\vec{y}\formula{q(\vec{y})}$ is cyclic, then $\glbcqa{{g}()}$ is not expressible in $\withaggr{\FOL}$.
\end{theorem}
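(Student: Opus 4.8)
The plan is to reduce the claimed inexpressibility of $\glbcqa{g()}$ to the inexpressibility of $\cqa{\exists\vec{y}(q(\vec{y}))}$, exploiting (as already noted in the introduction) that a solution to $\glbcqa{g()}$ also solves the latter problem. So I assume, toward a contradiction, that some numerical query $\varphi()$ in $\withaggr{\FOL}$ satisfies $\interpret{\varphi()}{\db}=\interpret{\glbcqa{g()}}{\db}$ for every instance~$\db$, and I aim to derive that $\cqa{\exists\vec{y}(q(\vec{y}))}$ is expressible as a \emph{Boolean} query in $\withaggr{\FOL}$. First I would observe that $\glbcqa{g()}$ takes the value $\bot\notin\rationals_{\geq 0}$ on some instances (e.g.\ on the empty database, using that $q$ is nonempty since its attack graph has a cycle) and a value in $\rationals_{\geq 0}$ on others (e.g.\ on $\theta(q)$, whose only repair satisfies $\exists\vec{y}(q(\vec{y}))$). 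Inspecting the grammar of $\withaggr{\FOL}$, a closed numerical term can attain a value outside $\rationals_{\geq 0}$ only by being of the shape $\varphi()=\agterm{\calF}{\vec{y}}{r}{\psi(\vec{y})}$ with $\calF(\emptyset)$ equal to that value (a primitive closed numerical term is a non-negative rational constant, and any other aggregate value lies in $\rationals_{\geq 0}$, since $\calF$ maps nonempty multisets into $\rationals_{\geq 0}$). Hence $\calF(\emptyset)=\bot$, so $\interpret{\varphi()}{\db}=\bot$ iff $\db\not\models\exists\vec{y}(\psi(\vec{y}))$. Combined with the fact that $\glbcqa{g()}$ returns $\bot$ exactly on the ``no''-instances of $\cqa{\exists\vec{y}(q(\vec{y}))}$, this yields $\db\cqamodels\exists\vec{y}(q(\vec{y}))$ iff $\db\models\exists\vec{y}(\psi(\vec{y}))$; that is, the $\withaggr{\FOL}$ Boolean query $\exists\vec{y}(\psi(\vec{y}))$ decides $\cqa{\exists\vec{y}(q(\vec{y}))}$.

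It then remains to show that $\cqa{\exists\vec{y}(q(\vec{y}))}$ is \emph{not} expressible as a Boolean query in $\withaggr{\FOL}$ when its attack graph is cyclic. Here Theorem~\ref{the:forewritability} does not suffice on its own, since $\withaggr{\FOL}$ properly extends $\FO$; I would instead lean on the locality of aggregate logic, namely that Boolean queries definable in $\withaggr{\FOL}$ (in the present flavour, where numerical values come from the data together with finitely many rational constants) are Hanf-local~\cite{DBLP:journals/jacm/HellaLNW01,DBLP:books/sp/Libkin04}. It therefore suffices to produce, for each radius $d$, two database instances that are $d$-Hanf-equivalent yet disagree on $\cqa{\exists\vec{y}(q(\vec{y}))}$. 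To build these, I would invoke the structure theory of cyclic attack graphs from~\cite{DBLP:journals/tods/KoutrisW17}: a cyclic attack graph makes $\cqa{\exists\vec{y}(q(\vec{y}))}$ $\LOGSPACE$-hard (or $\coNP$-hard) under first-order reductions, and in the $\LOGSPACE$-hard regime one can push the standard non-locality witness for undirected reachability (two long paths versus one long cycle) through such a reduction; since first-order reductions preserve Hanf-locality, this produces the required witness for $\cqa{\exists\vec{y}(q(\vec{y}))}$ and contradicts its $\withaggr{\FOL}$-definability. The $\coNP$-hard cyclic cases call for an analogous but self-contained gadget, since there no reachability reduction is available.

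The main obstacle is precisely this second part: transporting the failure of expressibility from $\FO$ to the much richer $\withaggr{\FOL}$. A complexity-theoretic shortcut (e.g.\ ``$\coNP$-hard, hence not in $\withaggr{\FOL}$ because the latter has polynomial data complexity'') is unattractive, as it would make the theorem conditional on a separation such as $\PTIME\neq\NP$, whereas the statement is unconditional. So the delicate work is the locality argument, and in particular producing a \emph{uniform} non-locality witness that is valid for \emph{every} self-join-free query with a cyclic attack graph---not merely for a canonical hard pattern such as the two-atom cycle $\{R_1(\underline{x},y),R_2(\underline{y},x)\}$---while covering the $\coNP$-complete cyclic cases on equal footing.
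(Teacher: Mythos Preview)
Your approach is in the same spirit as the paper's---both rest on Hanf-locality of $\withaggr{\FOL}$ versus non-locality of $\cqa{\exists\vec{y}(q(\vec{y}))}$---but you take an unnecessary detour and then misidentify the main difficulty.

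First, the syntactic extraction of $\psi$ from $\varphi()$ is correct (given the paper's grammar for numerical terms) but avoidable. The locality result for aggregate logic from~\cite{DBLP:journals/jacm/HellaLNW01} applies to numerical terms, not just to Boolean formulas: for every closed numerical term $\varphi()$ in $\withaggr{\FOL}$ there is a radius~$d$ such that any two $d$-Hanf-equivalent instances yield the same value of~$\varphi()$. The paper uses this directly: exhibit Hanf-equivalent instances on which $\glbcqa{g()}$ differs (one returns~$\bot$, the other a rational), and conclude. No Boolean query needs to be extracted.

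Second, and more importantly, your ``main obstacle'' is not an obstacle. You worry that the $\coNP$-complete cyclic cases require a bespoke locality gadget because ``no reachability reduction is available.'' But~\cite{DBLP:journals/tods/KoutrisW17} already shows that whenever the attack graph is cyclic, the problem of deciding whether some repair falsifies $\exists\vec{y}(q(\vec{y}))$ is $\LOGSPACE$-hard \emph{under first-order reductions}. This holds uniformly, regardless of whether the query ultimately lands in the $\LOGSPACE$-complete or the $\coNP$-complete stratum of the trichotomy (the latter being, a fortiori, $\LOGSPACE$-hard). Since first-order reductions preserve Hanf-locality and the source problem (reachability) is not Hanf-local, non-locality transfers in one stroke to every cyclic case. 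The paper's proof is accordingly two lines: cite the $\LOGSPACE$-hardness under FO reductions, cite Hanf-locality of $\withaggr{\FOL}$, done.
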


\revision{The crux of the proof of Theorem~\ref{the:inexpressible} is a deep result by Hella et al.~\cite{DBLP:journals/jacm/HellaLNW01}, which implies that every query in $\withaggr{\FOL}$ is Hanf-local; see also~\cite[Corollary~8.26 and Exercise~8.16]{DBLP:books/sp/Libkin04}.
It follows from~\cite{DBLP:journals/tods/KoutrisW17} that $\glbcqa{{g}()}$ is not Hanf-local if the underlying attack graph is cyclic.
}
Moreover, from the proof of Theorem~\ref{the:inexpressible}, it becomes immediately clear that the theorem remains valid if we replace $\glbcqa{q}$ with $\lubcqa{{g}()}$.

\section{Aggregate Operators that are Both Associative and Monotone}\label{sec:lb}

In this section, we show that the inverse of Theorem~\ref{the:inexpressible} holds under the restriction that aggregate operators are monotone and associative.
Later, in Section~\ref{sec:lacking}, we will show that the inverse does not hold without this restriction.

\begin{theorem}\label{the:expressible}
 Let $g()\defeq\AGG(r)\leftarrow q(\vec{y})$ be a numerical query in $\withaggr{\sjfbcq}$ such that~$\agagg$ is monotone and associative, and the attack graph of $\exists\vec{y}\formula{q(\vec{y})}$ is acyclic.
      Then, $\glbcqa{{g}()}$ is expressible in $\withaggr{\FOL}$.
\end{theorem}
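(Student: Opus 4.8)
The plan is to first dispose of the degenerate $\bot$ case, then reduce the glb over all repairs to the glb over superfrugal repairs, and finally evaluate the latter by a recursion along a topological sort of the attack graph that unfolds into a nested $\withaggr{\FOL}$ term. For the first point, acyclicity of the attack graph of $\exists\vec{y}\formula{q(\vec{y})}$ gives, via Theorem~\ref{the:forewritability}, an $\FO$-sentence $\psi$ that is true in a database instance $\db$ exactly when $\db\cqamodels\exists\vec{y}\formula{q(\vec{y})}$; the rewriting I build returns the constant $\bot$ precisely on the complement of $\psi$, so from now on I assume $\db\cqamodels\exists\vec{y}\formula{q(\vec{y})}$ and only need to express $\min\set{\interpret{g()}{\rep}\mid\rep\in\repairs{\db}}$.

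For the reduction to superfrugal repairs, let $\rep$ be a repair attaining this minimum. By Lemma~\ref{lem:frugal} there is a superfrugal repair $\rep^{*}$ of $\db$ whose set of embeddings of $q$ is contained in that of $\rep$, so the multiset of $r$-values that $\rep^{*}$ feeds into $\agagg$ is a sub-multiset of the one fed in by $\rep$; both are non-empty since $\db\cqamodels\exists\vec{y}\formula{q(\vec{y})}$. Monotonicity of $\agagg$ therefore yields $\interpret{g()}{\rep^{*}}\leq\interpret{g()}{\rep}$, so the minimum over all repairs equals the minimum over all superfrugal repairs, and it suffices to express the latter.

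The heart of the argument is a recursive evaluation of this minimum. Fix a topological sort $(F_{1},\dots,F_{n})$ of the attack graph, with the variable blocks $\vec{u}_{\ell},\vec{x}_{\ell},\vec{y}_{\ell}$ of Section~\ref{sec:superfrugal}, and to each $\ell$-$\forall$embedding $\theta$ associate the value $H_{\ell}(\theta)$: the least value, over superfrugal repairs, of $\agagg$ applied to the $r$-values of the embeddings of the repair that extend $\theta$. Three ingredients make $H_{\ell}$ satisfy a local recursion. Acyclicity: the choices a superfrugal repair makes below $\theta$ split block-by-block along $F_{\ell+1}$ and are independent of choices elsewhere. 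Associativity of $\agagg$: with commutativity of $\uplus$ it gives $\agagg(X\uplus Y)=\agagg(\bag{\agagg(X),\agagg(Y)})$ for non-empty $X,Y$, hence a fully recursive evaluation of $\agagg$ over the multiset below $\theta$ once the empty sub-multisets are discarded (legitimate because the overall multiset is non-empty). And the $(\ell+1)$-$\forall$embedding condition depends only on the restriction of the valuation to $\vec{u}_{\ell}\vec{x}_{\ell+1}$, not on the value picked for $\vec{y}_{\ell+1}$, so within a block any fact of $\db$ may be chosen, and by monotonicity the $r$-minimal one is optimal. Combining these, $H_{\ell}$ satisfies a recursion of the shape
\[
H_{\ell}(\vec{u}_{\ell})\;=\;\agterm{\agagg}{(\vec{x}_{\ell+1},w)}{w}{\frugalformula{\ell+1}^{\exists}(\vec{u}_{\ell},\vec{x}_{\ell+1})\;\land\;w=\agterm{\agmin}{(\vec{y}_{\ell+1},w')}{w'}{\frugalformula{\ell+1}(\vec{u}_{\ell+1})\;\land\;w'=H_{\ell+1}(\vec{u}_{\ell+1})}},
\]
with base case $H_{n}(\vec{u}_{n})\defeq\agagg(\bag{r})$, where $\frugalformula{\ell+1}(\vec{u}_{\ell+1})$ and $\frugalformula{\ell+1}^{\exists}(\vec{u}_{\ell},\vec{x}_{\ell+1})$ are $\FO$ formulas expressing ``$\vec{u}_{\ell+1}$ is an $(\ell+1)$-$\forall$embedding'' and ``$(\vec{u}_{\ell},\vec{x}_{\ell+1})$ extends to an $(\ell+1)$-$\forall$embedding'', obtained as in Lemma~\ref{lem:allfo} applied level-wise. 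Since $\db\cqamodels\exists\vec{y}\formula{q(\vec{y})}$ makes the empty valuation a $0$-$\forall$embedding, $H_{0}()$ equals the sought minimum; unfolding the recursion turns $H_{0}()$ into a single nested numerical term of $\withaggr{\FOL}$, which together with the $\bot$-test is the rewriting.

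I expect the main obstacle to be making the last step rigorous: proving that the multiset of all embedding $r$-values of a superfrugal repair genuinely decomposes recursively along the topological sort (where acyclicity enters via the independence of block choices); that the ``active'' $\vec{x}_{\ell+1}$-branches at a context $\theta$ coincide exactly with the $\FO$-definable set of those extending $\theta$ to an $(\ell+1)$-$\forall$embedding, so that no dead ends contaminate the recursion; and that the bookkeeping around empty sub-multisets (on which $\agagg$ returns the out-of-range constant $f_{0}$) and around a variable $r$ that sits in an early atom --- which branching in the later atoms replicates into many copies, a multiplicity that the nested $\agagg$'s must reproduce --- all goes through. Working out these details, together with the quadratic size bound announced in Theorem~\ref{the:glbmain}, is what I would relegate to Appendix~\ref{sec:proofexpressible}.
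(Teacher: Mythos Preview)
Your outline is correct and is essentially the paper's own approach: reduce to superfrugal repairs (equivalently, to MCSs of the set of $\forall$embeddings via Lemma~\ref{lem:mcs} and Corollary~\ref{cor:mcs}), then evaluate the minimum by a recursion along the topological sort that alternates an inner $\agmin$ over $\vec{y}_{\ell+1}$ with an outer $\agagg$ over $\vec{x}_{\ell+1}$, exactly as in the proof in Appendix~\ref{sec:proofexpressible}.

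One calibration: the step you describe as ``independence of block choices'' and list among the obstacles to be relegated to the appendix is not a consequence of acyclicity plus bookkeeping---it is the core of the argument. Concretely, your recursion is sound only if, for every $\ell$-$\forall$embedding $\theta$, the per-branch minima $\min_{\theta^+\supseteq\gamma_i}H_{\ell+1}(\theta^+)$ over the distinct $(\ell+1)$-$\forall$key-embeddings $\gamma_1,\ldots,\gamma_k$ extending $\theta$ can be realized \emph{simultaneously} by a single consistent set of $\forall$embeddings; two branches may agree on $\key{F_{j}}$ for some $j>\ell+1$ yet prefer different $F_j$-facts. The paper isolates this as the Consistent Extension Lemma (Lemma~\ref{lem:branchesNoConflict}), whose proof is the technical heart and relies on a structural decomposition of the residual query $\theta(\{F_{\ell+1},\ldots,F_n\})$ into variable-disjoint weakly connected components of its attack graph (Lemmas~\ref{lem:IndependentComponent} and~\ref{lem:WCCIndependent}). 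The other issues you flag---dead branches, empty sub-multisets, replication of $r$ across later atoms---are indeed routine once this lemma is in place.
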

$\SUM$ and $\MAX$ are probably the most common aggregate operators that are monotone and associative, and are thus covered by Theorem~\ref{the:expressible}. Note that $\COUNT$-queries are also covered, because they can be written in the form $\SUM(1)\leftarrow q(\vec{y})$, \revision{where $1$ is a constant numerical term}.  

\revision{
It follows from Theorems~\ref{the:forewritability}, \ref{the:inexpressible}, and~\ref{the:expressible} that for a numerical query $g()\defeq\AGG(r)\leftarrow q(\vec{y})$ in $\withaggr{\sjfbcq}$ such that $\agagg$ is both monotone and associative,
if $\cqa{\exists y\formula{q(\vec{y})}}$ is expressible in $\FOL$, then $\glbcqa{{g}()}$ is expressible in $\withaggr{\FOL}$ (and the inverse direction is easily seen to hold as well).
However, it will soon become apparent that transitioning from $\cqa{\exists y \formula{q(\vec{y})}}$ to $\glbcqa{g()}$ introduces new challenges that require novel techniques. Briefly, in the former problem, it is sufficient to determine the existence (or non-existence) of a $0$-$\forall$embedding. In contrast, the latter problem, if a $0$-$\forall$embedding exists, additionally requires finding a $\subseteq$-maximal consistent set of $\forall$embeddings whose aggregated value is minimal.
}
The detailed proof of Theorem~\ref{the:expressible} in Appendix~\ref{sec:proofexpressible} consists the main challenge of this paper.
In the remainder of this section, we use a concrete example to introduce the main ingredients of that proof. 
Together with Theorem~\ref{the:inexpressible}, this will imply our main result, Theorem~\ref{the:glbmain}, whose proof is given in Appendix~\ref{sec:proofglbmain}. 

\subsection{Main Ideas for Proving Theorem~\ref{the:expressible}}

\begin{figure}\centering
\begin{small}
\[
\begin{array}{ccc}
\begin{array}{c|*{2}{c}}
 {R} & \underline{x} & {y}\bigstrut\\\cline{2-3}
          & a_1 & b_1\\
          & a_1 & b_2\\\cdashline{2-3}
          & a_2 & b_2\\
          & a_2 & b_3\\\cdashline{2-3}
          & a_3 & b_4
\end{array}
&
\begin{array}{c|*{4}{c}}
{S} & \underline{y} & \underline{z} & {d} & {r} \bigstrut\\\cline{2-5}
          & b_1 &  c_1 & d & 1\\
          & b_1 &  c_1 & d & 2\\\cdashline{2-5}
          & b_1 &  c_2 & d & 3\\\cdashline{2-5}
          & b_2 &  c_3 & d & 5\\
          & b_2 &  c_3 & d & 6\\\cdashline{2-5}
          & b_3 &  c_4 & d & 5\\\cdashline{2-5}
          & b_4 &  c_5 & d & 7\\
          & b_4 &  c_5 & e & 8       
\end{array}
&
\begin{array}{c|*{4}{c}}
M_{0}\defeq\interpret{\phi_{0}}{\db_{0}}  & x & y & z & r\bigstrut\\\cline{2-5}
          & a_1 & b_1 &  c_1 & 1\\
          & a_1 & b_1 &  c_1 & 2\\
          & a_1 & b_1 &  c_2 & 3\\
          & a_1 & b_2 &  c_3 & 5\\
          & a_1 & b_2 &  c_3 & 6\\
          & a_2 & b_2 &  c_3 & 5\\
          & a_2 & b_2 &  c_3 & 6\\ 
          & a_2 & b_3 &  c_4 & 5  
\end{array}
\end{array}
\]
\end{small}
\caption{Example database instance $\db_0$,
and the set~$M_{0}$ of all $\forall$embeddings of $q_{0}$ into $\db_{0}$..
}
\label{fig:exampleRewriting}
\end{figure}

\begin{figure}\centering
\begin{small}
\[
\begin{array}{ll}
\begin{array}{c|*{4}{c}c}
& \multicolumn{4}{l}{\fd{xyz}{r}}\\
\interpret{\phi_{1}}{M_{0}} & \underline{x} & \underline{y} & \underline{z} & r \bigstrut\\\cline{2-5}
          & a_1 & b_1 &  c_1 & 1\\
          & a_1 & b_1 &  c_1 & 2 & \bluekill\\\cdashline{2-5}
          & a_1 & b_1 &  c_2 & 3\\\cdashline{2-5}
          & a_1 & b_2 &  c_3 & 5\\
          & a_1 & b_2 &  c_3 & 6 & \bluekill\\\cdashline{2-5}
          & a_2 & b_2 &  c_3 & 5\\
          & a_2 & b_2 &  c_3 & 6 & \bluekill\\\cdashline{2-5}
          & a_2 & b_3 &  c_4 & 5  
\end{array}
&
\begin{array}{c|*{4}{c}c}
& \multicolumn{4}{l}{\fd{x}{y}}\\
\interpret{\phi_{2}}{M_{0}} & \underline{x} & y & z & r \bigstrut\\\cline{2-5}
          & a_1 & b_1 &  c_1 & 1\\
          & a_1 & b_1 &  c_2 & 3\\
          & a_1 & b_2 &  c_3 & 5 & \bluekill\\\cdashline{2-5}
          & a_2 & b_2 &  c_3 & 5\\
          & a_2 & b_3 &  c_4 & 5 & \redkill  
\end{array}
\end{array}
\]
\end{small}
\caption{Computation of a $\agsum$-minimal MCS relative to $\{\fd{x}{y}, \fd{yz}{r}\}$.}
\label{fig:simple}
\end{figure}

\begin{figure*}\centering
\begin{align*}
\sigma(x,y,z,r) & \defeq R(\underline{x},y)\land S(\underline{y,z},d,r)\land\forall v\forall r'\lrformula{S(\underline{y,z},v,r')\rightarrow v=d}\\
\phi_{0}(x,y,z,r) & \defeq \sigma(x,y,z,r)\land\forall y'\lrformula{R(\underline{x},y')\rightarrow\exists z'\exists r'\lrformula{\sigma(x,y',z',r')}}\\
    \phi_1(x,y,z,r) & \defeq \phi_{0}(x,y,z,r)\land 
    \forall r'\lrformula{\phi_{0}(x,y,z,r')\rightarrow r\leq r'}\\
    t(x,y) & \defeq\agterm{\agsum}{(z,r)}{r}{\phi_{1}(x,y,z,r)}\\
    \phi_2(x,y,z,r) & \defeq \phi_1(x,y,z,r)\land 
    \forall y'\forall z'\forall r'\lrformula{\phi_{1}(x,y',z',r')\rightarrow
    \lrformula{
    \begin{array}{l}
    t(x,y)\leq t(x,y')\land\\
    \lrformula{y'\prec y\rightarrow t(x,y)<t(x,y') }
    \end{array}}
    }
\\
\\
    \psi_2(x,v) & \defeq\exists y\exists z\exists r\lrformula{
    \begin{array}{l}
    \phi_1(x,y,z,r)\land\lrformula{v=t(x,y)}\land\\
    \forall y'\forall z'\forall r'\lrformula{\phi_{1}(x,y',z',r')\rightarrow
    \lrformula{
    t(x,y)\leq t(x,y')}
    }\end{array}
    }\\
    \glbcqa{g_{0}()} & \defeq \agterm{\agsum}{(x,v)}{v}{\psi_{2}(x,v)}
\end{align*}
\caption{
Calculation of both an $\agsum$-minimal MCS (formula $\phi_{2}$) and $\glbcqa{g_{0}()}$ for $\SUM(r)\leftarrow R(\underline{x},y), S(\underline{y,z},d,r)$.
}\label{fig:calcmcs}
\end{figure*}

Our example uses the database instance~$\db_{0}$ of Fig.~\ref{fig:exampleRewriting} and the following $\SUM$-query $g_{0}()$, in which $d$ is a constant:
\begin{equation}\label{eq:rewritex}\tag{$g_{0}()$}
\SUM(r)\leftarrow \overbrace{R(\underline{x},y), S(\underline{y,z},d,r)}^{q_{0}(x,y,z,r)}.
\end{equation}
The attack graph of the underlying Boolean conjunctive query has a single attack, from the $R$-atom to the $S$-atom.
Figure~\ref{fig:exampleRewriting} shows the set~$M_{0}$ of all $\forall$embeddings of $q_{0}$ in $\db_{0}$,
which can be calculated by the $\FOL$ formula $\phi_{0}(x,y,z,r)$ in Fig.~\ref{fig:calcmcs}.
Note incidentally that the embedding (in $\db_{0}$) which maps $(x,y,z,r)$ to $(a_{3},b_{4},c_{5},d,7)$ is not a $\forall$embedding, because of the value~$e$ ($e\neq d$) in the last row of the $S$-relation.
We have $\fdset{q_{0}}=\{\fd{x}{y}, \fd{yz}{r}\}$, and from Fig.~\ref{fig:exampleRewriting}, it is clear that $M_{0}\not\models\fdset{q_{0}}$.
We now introduce a notation for $\subseteq$-maximal subsets of $M_{0}$ that satisfy~$\fdset{q_{0}}$, and then provide a crucial lemma about such subsets.

\begin{definition}
    Let $q$ be a query in $\sjfbcq$, and $\db$ be a database instance.
    Let $M$ be a set of embeddings of $q$ in $\db$.
    A \emph{maximal consistent subset (MCS) of $M$} is a $\subseteq$-maximal subset $N$ of $M$ such that $N \models\fdset{q}$.
    We write $\mymcs{M}{q}$ for the subset of $\powerset{M}$ that contains all (and only) MCSs of~$M$.
\myqed
\end{definition}

\begin{lemma}\label{lem:mcs}
    Let $q$ be a query in $\sjfbcq$ with an acyclic attack graph. Let $\db$ be a database instance, and let $M$ be the set of all $\forall$embeddings of~$q$ in $\db$.
    Then,
    \begin{enumerate}
    \item \label{it:mcs1} for every superfrugal repair $\rep$ of $\db$, the set of all embeddings of~$q$ in~$\rep$ is an MCS of~$M$; and
    \item \label{it:mcs2} whenever $N$ is an MCS of~$M$, there is a superfrugal repair $\rep$ of $\db$ such that the set of embeddings of~$q$ in~$\rep$ is exactly~$N$.
\end{enumerate}
\end{lemma}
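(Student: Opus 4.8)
The plan is to establish the two directions separately, relying on Lemma~\ref{lem:frugal} (existence of superfrugal repairs dominating a given repair) together with the definitions of superfrugal repair and MCS. Throughout, let $q$ have an acyclic attack graph, $\db$ a database instance, and $M$ the set of all $\forall$embeddings of~$q$ in~$\db$. For a repair~$\rep$, write $E(\rep)$ for the set of embeddings of~$q$ in~$\rep$. Two basic facts will be used repeatedly: first, for \emph{any} repair~$\rep$ (superfrugal or not), $E(\rep)\models\fdset{q}$, since $\rep$ is consistent and every embedding of~$q$ in~$\rep$ is obtained by a valuation whose images on key positions determine the images on all positions within each atom of~$\rep$; second, by the definition of superfrugal, if $\rep$ is superfrugal then $E(\rep)\subseteq M$.

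For part~\eqref{it:mcs1}, fix a superfrugal repair~$\rep$. By the two basic facts, $E(\rep)$ is a consistent subset of~$M$; it remains to show it is $\subseteq$-maximal among such subsets. Suppose not: then there is an embedding $\theta\in M\setminus E(\rep)$ with $E(\rep)\cup\{\theta\}\models\fdset{q}$. The idea is to derive a contradiction with the maximality of~$\rep$ as a repair. Since $\theta$ is an embedding of~$q$ in~$\db$ (being a $\forall$embedding, hence in particular an $n$-embedding), the facts $\theta(q)$ all lie in~$\db$; consistency of $E(\rep)\cup\{\theta\}$ with $\fdset{q}$ should let one argue that none of the facts in $\theta(q)\setminus\rep$ is key-equal to a fact already forced into~$\rep$ by some embedding in $E(\rep)$, and more care is needed for blocks of~$\rep$ not touched by $E(\rep)$ at all. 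The cleanest route is probably to invoke Lemma~\ref{lem:frugal} or the underlying machinery that relates superfrugal repairs to $\subseteq$-minimal sets of embeddings (the paper states superfrugal repairs have ``$\subseteq$-minimal sets of embeddings''): concretely, if $E(\rep)\cup\{\theta\}$ were still consistent and strictly larger, one builds a repair $\rep'$ containing $\theta(q)$ and agreeing with $\rep$ outside the blocks hit by~$\theta$, and checks $E(\rep')\supsetneq$ something contradicting minimality, or directly that $\theta$ would have to already be an embedding in~$\rep$. I expect this maximality argument — carefully handling which facts of $\theta(q)$ can be added to~$\rep$ without colliding with existing choices — to be the main obstacle.

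For part~\eqref{it:mcs2}, fix an MCS~$N$ of~$M$. The plan is to construct a repair~$\rep$ whose embeddings are exactly~$N$, using a topological sort $(F_1,\dots,F_n)$ of the attack graph. First, since $N\subseteq M$, each $\theta\in N$ is a $\forall$embedding, so $\theta(q)\subseteq\db$; collect $D\defeq\bigcup_{\theta\in N}\theta(q)$, which is consistent because $N\models\fdset{q}$ and therefore no two facts in~$D$ are key-equal with differing non-key parts. Extend~$D$ to a repair~$\rep$ of~$\db$ by choosing, for every block of~$\db$ not met by~$D$, an arbitrary fact — here one should use superfrugality-preserving choices so that the resulting $\rep$ is superfrugal; the natural way is to invoke Lemma~\ref{lem:frugal} starting from any repair extending~$D$, obtaining a superfrugal $\rep^{*}$ with $E(\rep^{*})$ "below" that repair, and then argue $D\subseteq\rep^{*}$ is preserved because the embeddings in~$N$ are already $\forall$embeddings (so dropping them would violate the domination property of Lemma~\ref{lem:frugal}). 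Then show $E(\rep)=N$: the inclusion $N\subseteq E(\rep)$ is immediate from $D\subseteq\rep$; for $E(\rep)\subseteq N$, use that $\rep$ is superfrugal (so $E(\rep)\subseteq M$) together with $E(\rep)\models\fdset{q}$ and $E(\rep)\supseteq N$, and maximality of~$N$ as an MCS to conclude $E(\rep)=N$. The delicate point is ensuring the chosen extension of~$D$ is genuinely superfrugal and does not introduce spurious embeddings outside~$N$; this is where the interplay between Lemma~\ref{lem:frugal} and the $\forall$embedding definition must be deployed carefully, and is the secondary difficulty of the proof.
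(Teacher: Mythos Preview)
Your plan has genuine gaps in both parts.

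\textbf{Part~\eqref{it:mcs1}.} Your instinct to ``derive a contradiction with the maximality of~$\rep$ as a repair'' points the wrong way, and the subsidiary claim that consistency of $E(\rep)\cup\{\theta\}$ forces no key-collision between $\theta(q)\setminus\rep$ and facts used by $E(\rep)$ is neither true nor what is needed. The argument the paper uses goes through the $\forall$embedding definition directly, and you never invoke it. Concretely: fix a topological sort $(F_{1},\ldots,F_{n})$ and let $\ell$ be maximal with $\theta(\{F_{1},\ldots,F_{\ell}\})\subseteq\rep$; since $\theta\notin E(\rep)$ we have $\ell<n$. Let $\gamma\defeq\restrict{\theta}{\vars{\{F_{1},\ldots,F_{\ell}\}}\cup\key{F_{\ell+1}}}$. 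Because $\theta$ is a $\forall$embedding, $(\db,\gamma)\cqamodels\{F_{\ell+1},\ldots,F_{n}\}$, so in particular $(\rep,\gamma)\models\{F_{\ell+1},\ldots,F_{n}\}$, yielding some $\theta'\in E(\rep)$ extending~$\gamma$. Now $\theta$ and $\theta'$ agree on $\key{F_{\ell+1}}$ but $\theta(F_{\ell+1})\neq\theta'(F_{\ell+1})$ (the former is not in~$\rep$, the latter is), so $\{\theta,\theta'\}\not\models\fd{\key{F_{\ell+1}}}{\vars{F_{\ell+1}}}$, contradicting consistency of $E(\rep)\cup\{\theta\}$. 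The point is precisely that there \emph{is} a key-collision, and it produces the FD violation; no modification of~$\rep$ or appeal to Lemma~\ref{lem:frugal} is needed.

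\textbf{Part~\eqref{it:mcs2}.} Your containment is backwards. Lemma~\ref{lem:frugal} applied to a repair $\rep_{0}\supseteq D$ gives a superfrugal $\rep^{*}$ with $E(\rep^{*})\subseteq E(\rep_{0})$; it says nothing about $D\subseteq\rep^{*}$, and your justification (``dropping them would violate the domination property'') reads the domination in the wrong direction. The paper instead shows $E(\rep^{*})\subseteq N$ first: observe that since $N$ is an MCS and the set of all $\forall$embeddings $\mu$ with $\mu(q)\subseteq\rep_{0}$ is a consistent subset of~$M$ containing~$N$, maximality gives that $N$ already contains every such~$\mu$. Now any $\mu\in E(\rep^{*})$ is a $\forall$embedding (superfrugality) and satisfies $\mu(q)\subseteq\rep_{0}$ (since $E(\rep^{*})\subseteq E(\rep_{0})$), hence $\mu\in N$. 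Finally, part~\eqref{it:mcs1} says $E(\rep^{*})$ is itself an MCS, so $E(\rep^{*})\subseteq N$ forces $E(\rep^{*})=N$. The inclusion $N\subseteq E(\rep^{*})$ you aimed for is a \emph{consequence} of this argument, not an input to it.
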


The following Corollary~\ref{cor:mcs}, which expresses $\glbcqa{g()}$ in terms of the construct of MCS, will be very helpful.
It requires monotonicity, but not associativity, of aggregate operators.


\begin{corollary}\label{cor:mcs}
    Let $g()\defeq\AGG(r)\leftarrow q(\vec{u})$ be a query in $\withaggr{\sjfbcq}$ such that the attack graph of $\exists\vec{u}\formula{q(\vec{u})}$ is acyclic. Let $\db$ be a database instance, and let $M$ be the set of all $\forall$embeddings of~$q$ in $\db$.
    If $\agagg$ is monotone, then
    \begin{equation}\label{eq:cormcs}
    \interpret{\glbcqa{g()}}{\db}=\min_{N\in\mymcs{M}{q}}\agagg(\bag{\theta(r)\mid \theta\in N}).
    \end{equation}
\end{corollary}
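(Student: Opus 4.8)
The plan is to prove the identity~\eqref{eq:cormcs} by establishing the two inequalities ``$\le$'' and ``$\ge$'' separately, using Lemma~\ref{lem:frugal} and Lemma~\ref{lem:mcs} as black boxes together with monotonicity of $\agagg$; associativity plays no role here. Before that I would dispose of the degenerate case. By the Basis clause in the definition of $\forall$embedding, a $0$-$\forall$embedding exists only when every repair of $\db$ satisfies $\exists\vec{u}\formula{q(\vec{u})}$, and the Step clause propagates this requirement to every $\ell$-$\forall$embedding; conversely, when every repair satisfies the query, Lemma~\ref{lem:frugal} provides a superfrugal repair, which satisfies the query and hence contributes at least one embedding to~$M$. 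Thus $M=\emptyset$ exactly when some repair falsifies $\exists\vec{u}\formula{q(\vec{u})}$, i.e.\ exactly when $\interpret{\glbcqa{g()}}{\db}=\bot$; this boundary case is settled directly from the definitions, and from now on I assume $M\neq\emptyset$, so in particular every repair satisfies the query and (since a single embedding trivially satisfies $\fdset{q}$) every MCS of~$M$ is nonempty.

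For ``$\le$'', I would fix an arbitrary MCS $N\in\mymcs{M}{q}$. By the second part of Lemma~\ref{lem:mcs} there is a superfrugal repair~$\rep$ of $\db$ whose set of embeddings of~$q$ is exactly~$N$; by the semantics of~$g()$ this gives $\interpret{g()}{\rep}=\agagg(\bag{\theta(r)\mid\theta\in N})$. Hence $\interpret{\glbcqa{g()}}{\db}\le\interpret{g()}{\rep}=\agagg(\bag{\theta(r)\mid\theta\in N})$, and since $N$ ranges over all MCSs, the left-hand side of~\eqref{eq:cormcs} is at most its right-hand side. This direction uses neither monotonicity nor associativity.

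For ``$\ge$'', I would fix an arbitrary repair~$\rep$ of $\db$ and exhibit one MCS~$N$ with $\agagg(\bag{\theta(r)\mid\theta\in N})\le\interpret{g()}{\rep}$. Applying Lemma~\ref{lem:frugal} yields a superfrugal repair~$\rep^{*}$ such that every tuple of constants embedding~$q$ into~$\rep^{*}$ also embeds it into~$\rep$; equivalently, the set~$N^{*}$ of embeddings of~$q$ in~$\rep^{*}$ is a subset of the set~$N_{\rep}$ of embeddings of~$q$ in~$\rep$. By the first part of Lemma~\ref{lem:mcs}, $N^{*}\in\mymcs{M}{q}$, and $N^{*}\neq\emptyset$ since $\rep^{*}$ satisfies the query. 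Then I would split $\bag{\theta(r)\mid\theta\in N_{\rep}}=\bag{\theta(r)\mid\theta\in N^{*}}\uplus\bag{\theta(r)\mid\theta\in N_{\rep}\setminus N^{*}}$ and invoke monotonicity of~$\agagg$, with both $\bag{x_{1},\ldots,x_{m}}$ and $\bag{x_{1}',\ldots,x_{m}'}$ taken to be $\bag{\theta(r)\mid\theta\in N^{*}}$ (so $m=\card{N^{*}}>0$ and $x_{i}\le x_{i}'$ trivially) and $Y=\bag{\theta(r)\mid\theta\in N_{\rep}\setminus N^{*}}$, obtaining $\agagg(\bag{\theta(r)\mid\theta\in N^{*}})\le\agagg(\bag{\theta(r)\mid\theta\in N_{\rep}})=\interpret{g()}{\rep}$. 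Taking $N=N^{*}$ and letting~$\rep$ range over all repairs completes this direction, and the proof.

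The substantive work is already packed into Lemmas~\ref{lem:frugal} and~\ref{lem:mcs} (and, upstream, into the theory of superfrugal repairs and Theorem~\ref{the:expressible}), so the derivation above is short; the places needing care are (i) matching the semantics of~$g()$ on a repair with~$\agagg$ applied to the \emph{multiset} of $r$-values over that repair's embeddings, since distinct embeddings may contribute equal $r$-values and the set/multiset bookkeeping matters; (ii) invoking monotonicity in exactly the stated shape — identity on the common submultiset together with an arbitrary extra multiset~$Y$ — rather than the vaguer slogan ``adding more elements cannot decrease the aggregate''; and (iii) arranging the empty-multiset bookkeeping so that $\bot$ is handled consistently and so that the hypothesis $m>0$ of monotonicity is met. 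I expect (ii) to be the easiest to get subtly wrong, but none of these is deep.
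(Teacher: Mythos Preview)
Your proposal is correct and is precisely the argument the paper has in mind: the paper's proof is the one-liner ``Immediate corollary of Lemma~\ref{lem:mcs},'' and you have spelled out the two inequalities that make it work, including the explicit appeal to Lemma~\ref{lem:frugal} and to monotonicity (in its exact stated form) needed to pass from arbitrary repairs to superfrugal ones. Your handling of the boundary case $M=\emptyset$ and of the multiset bookkeeping is also right; there is nothing to add.
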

\begin{proof}
    Immediate corollary of Lemma~\ref{lem:mcs}.
\end{proof}

The proof of Theorem~\ref{the:glbmain} constructs an $\withaggr{\FOL}$-formula which correctly calculates the right-hand side of equation~\eqref{eq:cormcs}, provided that~$\agagg$ is not only monotone, but also associative.
In the remainder, we illustrate the construction by our running example, which uses $q_{0}$ and $M_{0}$. 

First, we focus on calculating an MCS $N^ {*}\in\mymcs{M_{0}}{q_{0}}$ at which the minimum of~\eqref{eq:cormcs} is reached, that is,
\[\agsum(\bag{\theta(r)\mid \theta\in N^ {*}})=\min_{N\in\mymcs{M_{0}}{q_{0}}}\agsum(\bag{\theta(r)\mid \theta\in N}).\] 
Informally, to obtain such an MCS~$N^{*}$, we must delete a $\subseteq$-minimal set of tuples from~$M_{0}$ in order to satisfy~$\{\fd{x}{y}, \fd{yz}{r}\}$, in a way that minimizes the $\SUM$ over the remaining $r$-values.
Such deletions are represented in Fig.~\ref{fig:calcmcs} by struck-through tuples.

 The left-hand table of Fig.~\ref{fig:simple} shows how, in a first step, we delete tuples from~$M_{0}$ in order to satisfy~$\fd{xyz}{r}$ (which is logically implied by~$\fd{yz}{r}$), in a way that minimizes the $\SUM$ over the remaining $r$-values: within each set of tuples that agree on $xyz$, we pick the one with the smallest $r$-value. The rationale for this step relies on the monotonicity of $\SUM$: smaller arguments will result in a smaller $\SUM$. 
 This computation is readily expressed by the formula $\phi_{1}(x,y,z,r)$ in Fig.~\ref{fig:calcmcs}, where for readability we assume a vocabulary with~$\leq$.
 In $\withaggr{\FOL}$, $t_{1}(\vec{x})\leq t_{2}(\vec{x})$ can be expressed as
 $t_{1}(\vec{x})=\agterm{\agmin}{v}{v}{v=t_{1}(\vec{x})\lor v=t_{2}(\vec{x})}$.
In this simple example, it is evident that the remaining tuples will satisfy $\fd{yz}{r}$, as desired, since the minimal value of~$r$ within each $xyz$-group is completely determined by $yz$, regardless of~$x$. Proving that such independences hold in general is a major challenge in the general proof of Theorem~\ref{the:glbmain}.
 
 The next step is to delete more tuples in order to also satisfy~$\fd{x}{y}$, which is illustrated by the right-hand table of Fig.~\ref{fig:simple}.
 Within each group of tuples that agree on~$x$, we pick the $y$-value that results in the smallest $\SUM$.
 In the example, since $1+3<5$, the tuple with $r$-value~$5$ is deleted, indicated by a blue strike-through.
 There is a tie among the tuples where the $x$-value equals~$a_{2}$. To break this tie, we opt for the smallest $y$-value according to lexicographical order. Specifically, $b_{2}$ is chosen over $b_{3}$, indicated by a red strike-through.
 This computation is expressed by the formula $\phi_{2}(x,y,z,r)$ in Fig.~\ref{fig:calcmcs}, where $\preceq$ is used for lexicographical order.
 
 To conclude this example, we note that to calculate the value at the right-hand side of~\eqref{eq:cormcs}, there is actually no need to entirely compute the right-hand table of Fig.~\ref{fig:simple}. We only need to know, for every set of tuples that share the same $x$-value, the total sum of their $r$-values. This is achieved by the formula $\psi_{2}$ in Fig.~\ref{fig:calcmcs}, which does not rely on lexicographical order. In our example, $\psi_{2}(a_{1},4)$ ($4=3+1$) and $\psi_{2}(a_{2},5)$ hold true.
 Finally, $\glbcqa{g_{0}()}$ is obtained by applying $\SUM$ over the second column of $\psi_{2}$, which yields~$9$ in our example. 
 Note that associativity is needed to enable incremental aggregation.




\subsection{Aggregation Queries with Free Variables}\label{sec:freevariables}

So far, we have focused on numerical terms $g()$ without free variables.
We now explain how our results extend to numerical terms $g(\vec{x})=\agterm{\calF}{\vec{y}}{r}{q(\vec{x},\vec{y})}$ with free variables~$\vec{x}\defeq(x_{1},\ldots,x_{k})$, where $q(\vec{x},\vec{y})$ is self-join-free.
Let $\vec{c}=(c_{1},\ldots,c_{k})$ be a sequence of distinct constants.
Let $q_{\vec{c}}(\vec{y})$ be the conjunction obtained from $q(\vec{x},\vec{y})$ by replacing, for $i\in\{1,\ldots,k\}$, each occurrence of each~$x_{i}$ by~$c_{i}$. 
Then,
\begin{itemize}
    \item 
if the attack graph of $\exists\vec{y}\formula{q_{\vec{c}}(\vec{y})}$ is cyclic, then by Theorem~\ref{the:inexpressible}, $\glbcqa{g(\vec{c})}$ is not in $\withaggr{\FOL}$; and 
    \item 
if the attack graph of $\exists\vec{y}\formula{q_{\vec{c}}(\vec{y})}$ is acyclic, then by Theorem~\ref{the:expressible}, there is a formula $\varphi_{\vec{c}}$ in $\withaggr{\FOL}$ that solves $\glbcqa{g(\vec{c})}$.     
\end{itemize}
It is now not hard to show that since $q(\vec{x},\vec{y})$ is self-join-free, different sequences of constants involve the same calculations up to a renaming of constants.
This implies that we can perform the calculation once by treating the free variables in~$\vec{x}$ as distinct constants. This treatment of free variables is often used in consistent query answering (and in logic in general~\cite[Lemma~2.3]{DBLP:books/sp/Libkin04}); however, it fails for CQA in the presence of self-joins.
This is one of the reasons why assuming self-join-freeness serves as a simplifying assumption in much work on CQA.


\section{Aggregate Operators Lacking  Monotonicity or Associativity}\label{sec:lacking}

Let $g()\defeq\AGG(r)\leftarrow q(\vec{y})$ be a numerical query in~$\withaggr{\sjfbcq}$.
In this section, we explore the scenario left open by Theorems~\ref{the:inexpressible} and~\ref{the:expressible}: we assume that the attack graph of $\exists\vec{y}\formula{q(\vec{y})}$ is acyclic, but that $\agagg$ lacks either monotonicity, associativity, or both.
We introduce in Section~\ref{sec:dc} a manifestation of non-monotonicity, called \emph{descending chains}, which is used to identify cases where $\glbcqa{g()}$ is not expressible in $\withaggr{\FOL}$ under the above assumptions.
Descending chains are also used in Section~\ref{sec:ub} in the study of $\lubcqa{g()}$ via the use of dual aggregate operators, and in Section~\ref{sec:unconstrained} to demonstrate that unconstrained numeric columns (i.e., columns not constrained to $\rationals_{\geq 0}$) suffice for moving from expressibility in $\withaggr{\FOL}$ to non-expressibility.
We conclude with a positive result: when $g()$ is a  $\MIN$- or a $\MAX$-query, it is decidable whether or not $\glbcqa{g()}$ and $\lubcqa{g()}$ are expressible in $\withaggr{\FOL}$.



\subsection{Aggregate Operators with Descending Chains}\label{sec:dc}

We show that the inverse of Theorem~\ref{the:inexpressible} does not hold: Lemmas~\ref{lem:dc} and~\ref{lem:bdc} introduce numerical queries with acyclic attack graphs that however do not allow glb rewriting in $\withaggr{\FOL}$.
Their proofs rely on the existence of a (possibly bounded) descending chain for an aggregate operator~$\agagg$, which implies that $\agagg$ lacks monotonicity. 
Concrete examples of such aggregate operators are $\agavg$ and $\agprod$, as expressed by Corollary~\ref{cor:avgprod} (assuming that the numeric domain is $\rationals_{\geq 0}$).

\begin{definition}[Descending chain]\label{def:dc}
For $i\in\naturals$ and $t\in\rationals$, we write $\copies{i}{t}$ as a shorthand for $\overbrace{t,t,\ldots,t}^{\textnormal{$i$ times}}$, i.e., $i$ occurrences of~$t$.
We say that an aggregate operator $\agagg$ has a \emph{descending chain} if there exist $s,t\in\rationals_{\geq 0}$ such that for every $i\in\naturals$,
$\agagg(\bag{s,\copies{i}{t}})>\agagg(\bag{s,\copies{(i+1)}{t}})$.
Note that $s$ and $t$ need not be distinct.
Such a descending chain is said to be \emph{bounded} if for every $i\in\naturals$,
there exists $m_{i}\in\rationals_{\geq 0}$ such that for all $j\in\naturals_{>0}$, for all $k',k\in\naturals$ such that $k'\leq k\leq i$,  we have $\agagg(\bag{s,\copies{k'}{t}})<\agagg(\bag{\copies{j}{m_{i}},s,\copies{k}{t}})$. 
Informally, this expression means that $\agagg$ will strictly increase if at least one copy of~$m_{i}$ is added, regardless of any addition of copies of~$t$.
Notice that $m_{i}$ depends on~$i$.
See the proof of Lemma~\ref{lem:chainavgmod} for examples.
\myqed
\end{definition}


\begin{lemma}\label{lem:dc}
Let $\agagg$ be an aggregate operator with a descending chain (which may not be bounded).
Then, $\glbcqa{g()}$ is $\NL$-hard for 
$g()\defeq\AGG(r)\leftarrow R(\underline{x},y,r), S_{1}(\underline{y},x), S_{2}(\underline{y},x)$.
Consequently, $\glbcqa{g()}$ is not expressible in $\withaggr{\FOL}$.
\end{lemma}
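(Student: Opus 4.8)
\textbf{Proof plan for Lemma~\ref{lem:dc}.}
The plan is to reduce from a classical $\NL$-hard problem over the query $\exists x \exists y \exists r\formula{R(\underline{x},y,r) \land S_1(\underline{y},x) \land S_2(\underline{y},x)}$, whose attack graph is acyclic but whose $\cqa{\cdot}$ counterpart is already known to be $\NL$-hard from~\cite{DBLP:journals/tods/KoutrisW17} (the atoms $S_1$ and $S_2$ form the standard ``two-path'' gadget that forces reachability-style reasoning even though the Boolean version lies in $\LOGSPACE$). First I would recall that $\NL$-hardness of $\cqa{q}$ for a $\sjfbcq$-query~$q$ transfers almost verbatim to $\glbcqa{g()}$ whenever $q$ is the body of~$g()$: given an instance $\db$ of $\cqa{q}$, the output of $\glbcqa{g()}$ is $\bot$ exactly on the ``no''-instances, so any algorithm for $\glbcqa{g()}$ decides $\cqa{q}$. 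Hence, if I take any $\db$ on which $q$ holds in every repair, I still must compute a genuine minimum, and the descending-chain hypothesis is what lets me encode a second $\NL$-hard sub-problem into the numerical value itself rather than merely into the $\bot$-versus-not dichotomy.

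The heart of the construction is to exploit the descending chain $\agagg(\bag{s, \copies{i}{t}}) > \agagg(\bag{s, \copies{(i+1)}{t}})$. The idea is to build, from an instance of (say) \textsc{st-Connectivity} or the $\NL$-complete problem reducible to $\cqa{q}$ in~\cite{DBLP:journals/tods/KoutrisW17}, a database instance in which each repair corresponds to a choice of path/selection, and in which the multiset of $r$-values produced by the surviving embeddings is of the form $\bag{s, \copies{i}{t}}$ where the number~$i$ of copies of~$t$ depends monotonically on some combinatorial quantity controlled by the reachability instance. Because $\agagg$ strictly decreases as $i$ grows, the glb over all repairs is attained at the repair maximizing~$i$, so a single numerical comparison of $\interpret{\glbcqa{g()}}{\db}$ against a threshold decides the underlying $\NL$ problem. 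The two atoms $S_1(\underline{y},x)$ and $S_2(\underline{y},x)$ with identical content but separate relation names are precisely the device that makes $\cqa{q}$ hard in the self-join-free setting (one cannot simplify them away), and I would keep the same gadget, adding the $r$-column to $R$ only to carry the aggregated payload.

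I would then verify the two directions of correctness: (i) on ``yes''-instances, there is a repair realizing the large value of~$i$, so $\glbcqa{g()}$ falls below the threshold; and (ii) on ``no''-instances, every repair realizes a strictly smaller~$i$, so by the strict inequality in the descending chain, $\glbcqa{g()}$ stays above the threshold. Care is needed that $q$ holds in \emph{every} repair (so the answer is never $\bot$ and the reduction is to the numerical value, not to the $\bot$-test) — this typically requires padding each block so that at least one ``trivial'' embedding always survives; I would set that trivial embedding to contribute the single value~$s$. Finally, the non-expressibility conclusion is immediate: every query in $\withaggr{\FOL}$ is Hanf-local and hence computable in $\FO \subseteq \LOGSPACE$ by the data-complexity bound, so an $\NL$-hard function problem cannot lie in $\withaggr{\FOL}$ unless $\NL = \LOGSPACE$; more directly, expressibility in $\withaggr{\FOL}$ would place $\glbcqa{g()}$ in uniform $\mathsf{TC}^0 \subseteq \LOGSPACE$, contradicting $\NL$-hardness (under the standard assumption $\LOGSPACE \neq \NL$, or unconditionally if one invokes the locality argument used for Theorem~\ref{the:inexpressible}).

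\textbf{Main obstacle.} The delicate part will be engineering the gadget so that the \emph{count} of $t$-valued embeddings in each repair is pinned exactly to the combinatorial parameter of the $\NL$-instance, while simultaneously keeping $q$ true in all repairs and keeping the attack graph acyclic. The asymmetry in how $S_1$ and $S_2$ interact with the key of~$R$ must be arranged so that repair choices in the $R$-relation (choosing which $(y,r)$ pair to keep for a given key~$x$) are in bijection with the path-extension choices of the reachability instance, and so that no repair can ``cheat'' by producing extra or fewer copies of~$t$. If the descending chain is not bounded I can let $i$ range freely; if it happens to be bounded (the hypothesis explicitly allows either), I would simply cap the instance size, which costs nothing since $\NL$-hardness already manifests on instances of bounded-but-growing size. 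I do not expect the locality/complexity-theoretic wrap-up to pose any difficulty.
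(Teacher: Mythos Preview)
Your high-level strategy is right and matches the paper: engineer an instance where every repair yields a multiset $\bag{s,\copies{i}{t}}$, so that the descending chain makes the glb determined by the \emph{maximum} attainable~$i$, and then compare against a threshold. The padding with a single $s$-valued embedding to avoid~$\bot$ is also exactly what the paper does.

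However, your framing contains a genuine misconception that would derail the construction. You assert that $\cqa{q}$ for this~$q$ is $\NL$-hard and that the duplicated atoms $S_1,S_2$ are ``the device that makes $\cqa{q}$ hard.'' This is backwards: the attack graph of~$q$ is acyclic (the second copy $S_2$ supplies the functional dependency $\fd{y}{x}$ in $\fdset{q\setminus\{S_1\}}$, which kills the attack $S_1\attacks{q}R$), so by Theorem~\ref{the:forewritability} $\cqa{q}$ is in~$\FO$. The whole point of the lemma is that $\glbcqa{g()}$ is hard \emph{despite} $\cqa{q}$ being trivial; the duplication of~$S$ is there to keep the attack graph acyclic, not to import hardness from~$\cqa{q}$. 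Consequently your plan to ``reuse the $\NL$-complete problem reducible to $\cqa{q}$ in~\cite{DBLP:journals/tods/KoutrisW17}'' has no content, and your \textsc{st-Connectivity} suggestion remains a placeholder with no indication of how a path structure would be encoded by the key constraints of $R,S_1,S_2$.

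The paper instead reduces from \problem{2-DIMENSIONAL MATCHING} (bipartite perfect matching, $\NL$-hard by~\cite{DBLP:journals/siamcomp/ChandraSV84}), which fits the query's shape directly: for $M\subseteq A\times B$ with $\card{A}=\card{B}=n$, insert $R(\underline{a},b,t)$, $S_1(\underline{b},a)$, $S_2(\underline{b},a)$ for each $(a,b)\in M$, plus one fresh triple contributing~$s$. A repair then chooses one $b$ per~$a$ (via $R$'s key) and one $a$ per~$b$ (via each $S_i$'s key), and the surviving $t$-embeddings form exactly a partial matching in~$M$; hence the glb equals $\agagg(\bag{s,\copies{n}{t}})$ iff $M$ has a perfect matching. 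This is the concrete gadget your proposal is missing.
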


\begin{lemma}\label{lem:bdc}
Let $\agagg$ be an aggregate operator with a bounded descending chain.
Then, $\glbcqa{g()}$ is $\NP$-hard for $g()\defeq
\AGG(r)\leftarrow S_{1}(\underline{x},c_{1}), S_{2}(\underline{y},c_{2}), T(\underline{x,y},r)$, where $c_{1}$ and $c_{2}$ are (not necessarily distinct) constants.
Consequently, $\glbcqa{g()}$ is not expressible in $\withaggr{\FOL}$.
\end{lemma}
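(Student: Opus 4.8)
The plan is to reduce from a suitable $\NP$-complete problem, and the natural candidate is \problem{MONOTONE-SAT} or, more directly, \problem{SET-COVER} / \problem{VERTEX-COVER}, exploiting the structure $S_{1}(\underline{x},c_{1}), S_{2}(\underline{y},c_{2}), T(\underline{x,y},r)$. Observe first that in any repair, the relation $S_{1}$ selects one $x$-value, $S_{2}$ selects one $y$-value, and $T$ — whose key is $xy$ — selects one $r$-value for each surviving $(x,y)$ pair. This means a repair effectively chooses a partial function: for each $x$-block of $S_{1}$ and each $y$-block of $S_{2}$, at most one witness survives, but the body only produces an embedding when the chosen $x$ matches the key of some surviving $S_{1}$-fact and likewise for $y$. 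The key combinatorial leverage is that the multiset of $r$-values aggregated depends on \emph{which} blocks of $S_{1},S_{2}$ were ``resolved'' in a way that a $T$-fact witnesses them; so minimizing the aggregate amounts to a covering-type optimization over block choices.

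The intended encoding is roughly as follows. Given a \problem{VERTEX-COVER} instance $(G,k)$ with $G=(V,E)$, introduce for each vertex $v\in V$ a key value, populate $S_{1}$ and $S_{2}$ so that their blocks encode a selection of one vertex ``per slot'' (using $k$ slots, or a single slot with $|V|$ competing facts — the precise gadget is a routine design choice), and populate $T$ so that an embedding with a large $r$-value is forced precisely when an edge is \emph{not} covered, whereas covered edges contribute small $r$-values, say copies of $t$. The bounded-descending-chain hypothesis is exactly what makes this work: by Definition~\ref{def:dc}, there are $s,t\in\rationals_{\geq 0}$ with $\agagg(\bag{s,\copies{i}{t}})$ strictly decreasing in $i$, and boundedness supplies values $m_{i}$ such that inserting even one copy of $m_{i}$ strictly outweighs any number of copies of $t$ (up to the relevant bound $i$). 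We set the ``penalty'' $r$-values in $T$ to be such an $m_{i}$ (with $i$ the total number of edges, which bounds how many $t$'s can ever appear), and the ``reward'' $r$-values to be $t$, together with one mandatory copy of $s$. Then a repair's aggregate is strictly smaller exactly when it avoids \emph{all} penalty embeddings, i.e., when the chosen vertices form a vertex cover; and among covers, the descending-chain monotonicity $\agagg(\bag{s,\copies{i}{t}})>\agagg(\bag{s,\copies{i+1}{t}})$ ensures the glb is attained, with a threshold value that is computable and that separates ``$G$ has a cover of size $\le k$'' from ``it does not.'' The consistency of the body ($\exists$ some embedding in every repair) is arranged by a dummy block so that $\bot$ never arises.

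The main obstacle I anticipate is \emph{the gadget that forces the ``one choice per slot'' semantics to coincide with a genuine vertex/cover selection}, while simultaneously guaranteeing that (i) every repair produces the single mandatory $s$-copy, (ii) a penalty embedding appears \emph{if and only if} some edge is uncovered (not merely ``if''), and (iii) the number of reward copies of $t$ is controlled tightly enough to stay within the regime where the bound $m_{i}$ from Definition~\ref{def:dc} applies. Point (ii) is delicate because $T$ has key $xy$: one must ensure that for an uncovered edge there is genuinely \emph{no} repair choice of the $T$-block that avoids the large value, which typically requires that block to contain \emph{only} penalty-valued facts, whereas for a covered edge the block must offer a small-valued alternative selectable consistently with the $S_{1},S_{2}$ choices. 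Getting these blocks to interlock correctly — so that the $S_{1}/S_{2}$ selection propagates unambiguously to a forced $T$-selection — is where the real work lies; once the gadget is correct, the numerical separation is an immediate consequence of the bounded-descending-chain inequalities, and the non-expressibility in $\withaggr{\FOL}$ follows since every $\withaggr{\FOL}$ query is Hanf-local (hence in particular computable in polynomial time), as already used in the proof of Theorem~\ref{the:inexpressible}.
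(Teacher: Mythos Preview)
Your high-level strategy---reduce from an $\NP$-complete problem and use the bounded-descending-chain inequalities to manufacture a numerical threshold---is right, but the paper executes it differently in two ways worth noting.

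First, the paper reduces from \problem{SIMPLE MAX CUT} rather than \problem{VERTEX COVER}, and this fits the query shape much more directly. For each vertex $v$ the block $\{S_{1}(\underline{v},c_{1}),S_{1}(\underline{v},d)\}$ encodes ``$v\in V_{1}$'' versus ``$v\notin V_{1}$'', and symmetrically for $S_{2}$ and $V_{2}$. For each edge $\{u,v\}$ one inserts $T(\underline{u,v},t)$ and $T(\underline{v,u},t)$; for each vertex a self-loop $T(\underline{v,v},m_{e})$ with $e\defeq 2\card{E}$; and a fresh constant $\bot$ carries the mandatory $s$ via $S_{1}(\underline{\bot},c_{1})$, $S_{2}(\underline{\bot},c_{2})$, $T(\underline{\bot,\bot},s)$. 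A repair then yields exactly $\agagg(\bag{\copies{j}{m_{e}},s,\copies{k}{t}})$ where $j=\card{V_{1}\cap V_{2}}$ and $k$ counts directed cut edges; boundedness forces the glb to have $j=0$, and the descending chain makes the threshold $\agagg(\bag{s,\copies{K}{t}})$ decide whether a cut of size $\geq K$ exists.

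Second---and this dissolves what you identified as the main obstacle---the $T$-relation is \emph{consistent} in the paper's construction: every $T$-block is a singleton. There is no ``interlock'' problem between the $S_{1}/S_{2}$ choices and $T$, because $T$ makes no choices at all. Your worry that ``for an uncovered edge there is genuinely no repair choice of the $T$-block that avoids the large value'' simply does not arise; all the combinatorics live in $S_{1}$ and $S_{2}$. Your \problem{VERTEX COVER} route might be made to work, but it would need an extra gadget to tie the $S_{1}$- and $S_{2}$-choices for the same vertex together, which \problem{MAX CUT} gets for free via the self-loop penalty.

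Two minor corrections. Your sentence ``$S_{1}$ selects one $x$-value'' has the key backwards: $x$ is the primary key of $S_{1}$, so every $x$ survives in every repair; what a repair chooses is the non-key column ($c_{1}$ versus $d$), and that choice determines whether $x$ can participate in an embedding. And Hanf-locality does not by itself entail polynomial-time computability; the inexpressibility in $\withaggr{\FOL}$ follows here because $\withaggr{\FOL}$ is efficiently evaluable, so under $\PTIME\neq\NP$ an $\NP$-hard function cannot be expressed in it.
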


\begin{lemma}\label{lem:chainavgmod}
    $\agavg$ and $\agprod$ have bounded descending chains.
\end{lemma}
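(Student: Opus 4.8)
The plan is to exhibit explicit witnesses. For each of $\agavg$ and $\agprod$ I would give (i) a pair $s,t\in\rationals_{\geq 0}$ realizing a descending chain, and (ii) for every $i\in\naturals$, an explicit $m_i\in\rationals_{\geq 0}$ witnessing that this chain is bounded. The two operators are handled separately but by the same recipe: pick $s,t$ so that appending copies of $t$ drives $\agagg(\bag{s,\copies{i}{t}})$ strictly down in $i$ toward a finite limit, and then, with $i$ fixed, pick $m_i$ large enough that inserting even a single copy of $m_i$ overshoots the threshold $\agagg(\bag{s})$ no matter how many copies of $t$ (at most $i$ of them) are also present.

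For $\agavg$ I would take $s\defeq 1$ and $t\defeq 0$, so that $\agavg(\bag{s,\copies{i}{t}})=\frac{1}{i+1}$, which is strictly decreasing in $i$; hence this is a descending chain. (More generally any $s>t\geq 0$ works, since a one-line computation gives $\agavg(\bag{s,\copies{i}{t}})-\agavg(\bag{s,\copies{i+1}{t}})=\frac{s-t}{(i+1)(i+2)}>0$.) For boundedness, fix $i$ and set $m_i\defeq i+2$. Unfolding the definition, what remains is to check that for all $j\geq 1$ and all $k'\leq k\leq i$,
\[
\frac{1}{k'+1}=\agavg(\bag{s,\copies{k'}{t}})<\agavg(\bag{\copies{j}{m_i},s,\copies{k}{t}})=\frac{j(i+2)+1}{\,j+1+k\,}.
\]
The left-hand side is $\leq 1$; the right-hand side is increasing in $j$ and decreasing in $k$, so over the allowed range it is minimized at $j=1,k=i$, where it equals $\frac{i+3}{i+2}>1$. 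Thus the right side is $>1\geq$ the left side throughout, as required.

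For $\agprod$ I would take $s\defeq t\defeq\frac{1}{2}$ (the definition permits $s=t$), so that $\agprod(\bag{s,\copies{i}{t}})=\frac{1}{2^{\,i+1}}$, strictly decreasing in $i$; hence this is a descending chain. For boundedness, fix $i$ and set $m_i\defeq 2^{\,i+1}$ (any value exceeding $2^{i}$ would do). Here the inequality to check, for all $j\geq 1$ and $k'\leq k\leq i$, is
\[
\frac{1}{2^{\,k'+1}}<m_i^{\,j}\cdot\frac{1}{2^{\,k+1}},\qquad\text{equivalently}\qquad m_i^{\,j}=2^{(i+1)j}>2^{\,k-k'},
\]
i.e.\ $(i+1)j>k-k'$; and this holds because $k-k'\leq i<i+1\leq(i+1)j$. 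That settles $\agprod$ as well.

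I do not expect a genuine obstacle: the lemma is about producing good witnesses, and once the witnesses above are fixed the verifications are one-line inequalities. The one point needing care is the quantifier order: $m_i$ is allowed to depend on $i$, and it must, because in $\agagg(\bag{\copies{j}{m},s,\copies{k}{t}})$ the number $k$ of $t$-copies is a priori unbounded and (for $\agavg$) would eventually pull the average back below $\agagg(\bag{s})$; the notion of \emph{bounded} descending chain sidesteps this by fixing $i$ first, thereby capping $k\leq i$. After that, spotting the worst case of the remaining inequality ($j$ smallest, $k-k'$ largest) is routine in both cases.
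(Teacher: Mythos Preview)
Your proposal is correct and follows essentially the same approach as the paper's proof: you choose the identical witnesses ($s=1$, $t=0$, $m_i=i+2$ for $\agavg$; $s=t=\tfrac{1}{2}$, $m_i=2^{i+1}$ for $\agprod$) and verify the required inequalities, in fact supplying more detail than the paper, which simply asserts the inequality for $\agavg$ ``as desired'' and for $\agprod$ merely names $m_i$.
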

\begin{proof}
    For $\agavg$, we have $\agavg(\bag{1})>\agavg(\bag{1,0})>\agavg(\bag{1,0,0}>\agavg(\bag{1,0,0,0}>\dotsm$.
    Take $s=1$ and $t=0$.
    For every $i\in\naturals$, let $m_{i}=i+2$.
    Whenever $j>0$ and $0\leq k'\leq k\leq i$, we have
    $\frac{1}{k'+1}=\agavg(\bag{1,\copies{k'}{0}}<\agavg(\bag{\copies{j}{(i+2)},1,\copies{k}{0}}=\frac{j*(i+2)+1}{j+k+1}$, as desired.

    For $\agprod$, we have $\agprod(\bag{\frac{1}{2}})>\agprod(\bag{\frac{1}{2},\frac{1}{2}})>\agprod(\bag{\frac{1}{2},\frac{1}{2},\frac{1}{2}})>\dotsm$.
    Take $s=t=\frac{1}{2}$.
    Let $i\in\naturals$.
    For $j>0$ and $0\leq k'\leq k\leq i$, we obtain
    $\agprod(\bag{\frac{1}{2},\copies{k'}{\frac{1}{2}}}<\agprod(\bag{\copies{j}{m_{i}},\frac{1}{2},\copies{k}{\frac{1}{2}}}$ by choosing $m_{i}=2^{i+1}$.
\end{proof}

\begin{corollary}\label{cor:avgprod}
    $\glbcqa{g()}$ is not expressible in $\withaggr{\FOL}$ with $g()$ as in Lemma~\ref{lem:dc} or Lemma~\ref{lem:bdc} and $\AGG\in\{\AVG, \PROD\}$.
\end{corollary}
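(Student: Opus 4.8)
The plan is simply to chain together the three preceding lemmas, so the argument is short. First I would record the (essentially definitional) observation that a bounded descending chain is in particular a descending chain: the adjective ``bounded'' in Definition~\ref{def:dc} only strengthens an already-exhibited descending chain by imposing the extra requirement involving the values $m_{i}$. Combined with Lemma~\ref{lem:chainavgmod}, this gives that both $\agavg$ and $\agprod$ possess a descending chain, and in fact a bounded one (under the standing assumption that the numeric domain is $\rationals_{\geq 0}$, which is precisely where the chains exhibited in the proof of Lemma~\ref{lem:chainavgmod} live).

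Next I would dispatch the two cases of the corollary separately. For $g()$ of the shape appearing in Lemma~\ref{lem:dc}, instantiating $\AGG$ by $\AVG$ or by $\PROD$ yields a numerical query in $\withaggr{\sjfbcq}$ whose aggregate operator has a descending chain; hence Lemma~\ref{lem:dc} applies verbatim and already delivers that $\glbcqa{g()}$ is $\NL$-hard and therefore not expressible in $\withaggr{\FOL}$. Symmetrically, for $g()$ of the shape in Lemma~\ref{lem:bdc}, the same instantiations give an aggregate operator with a \emph{bounded} descending chain by Lemma~\ref{lem:chainavgmod}, so Lemma~\ref{lem:bdc} applies and yields $\NP$-hardness of $\glbcqa{g()}$, again precluding expressibility in $\withaggr{\FOL}$.

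There is no real obstacle here: the technical content lives entirely in Lemmas~\ref{lem:dc}, \ref{lem:bdc}, and~\ref{lem:chainavgmod}, and the corollary is pure bookkeeping. The only points worth a sentence of care are (i) confirming that ``bounded descending chain'' subsumes ``descending chain'', so that the $\NL$-hardness branch (Lemma~\ref{lem:dc}) is applicable to $\AVG$ and $\PROD$ and not merely the $\NP$-hardness branch; and (ii) recalling that the domain restriction to $\rationals_{\geq 0}$ is what makes $\agprod$ non-monotone and hence permits the descending chain to exist at all---over $\rationals_{\geq 1}$ the argument would fail, since $\agprod$ is monotone there and Theorem~\ref{the:expressible} would instead apply whenever the attack graph is acyclic. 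I would close with a remark that, by the duality noted after Theorem~\ref{the:inexpressible}, one obtains the analogous non-expressibility statements for $\lubcqa{g()}$ with the dual operators, although this is not needed for the corollary as stated.
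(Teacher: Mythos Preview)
Your proposal is correct and matches the paper's approach: the corollary is stated without proof in the paper precisely because it is immediate from Lemmas~\ref{lem:dc}, \ref{lem:bdc}, and~\ref{lem:chainavgmod}, exactly as you outline. Your observation that a bounded descending chain is in particular a descending chain is the only small bookkeeping step, and you handle it correctly.
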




\subsection{Dual Aggregate Operators}\label{sec:ub}

\begin{definition}[Dual aggregate operator]
The \emph{dual} of a positive aggregate operator~$\calF$, denoted $\dual{\calF}$, is defined as the function that takes as argument a finite multiset~$X$ of non-negative rational numbers such that $\dual{\calF}(X)=-1*\calF(X)$ if $X\neq\emptyset$, and $\dual{\calF}(\emptyset)=\calF(\emptyset)$. 
The dual of a positive aggregate operator is also called a \emph{dual aggregate operator}.
\myqed
\end{definition}

Note that a dual aggregate operator is not a positive aggregate operator itself, as it can return negative rational numbers. Despite this, Definition~\ref{def:dc} of (bounded) descending chain applies also to dual aggregate operators. Additionally, it can be verified that Lemmas~\ref{lem:dc} and~\ref{lem:bdc} remain valid for dual aggregate operators, since their proofs do not rely on the signs of the aggregated values.

As argued in Section~\ref{sec:introduction} and formalized by Proposition~\ref{pro:dual}, the function problem $\lubcqa{g()}$ for $g()\defeq\AGG(r)\leftarrow q(\vec{u})$ is the same up to a sign as $\glbcqa{h()}$ where $h()$ has the same body as~$g()$, but uses the dual of $\AGG$ in its head:
\[
h()\defeq\dual{\AGG}(r)\leftarrow q(\vec{u}),
 \]
where the aggregate symbol $\dual{\AGG}$ is interpreted by $\dual{\agagg}$, i.e., the dual of $\agagg$.
 The semantics of~$h()$ on a database instance $\db$ is naturally defined: if $g()$ returns a rational number $r$ distinct from~$f_{0}$, then $h()$ returns $-1*r$; and if $g()$ returns~$f_{0}$, then so does~$h()$. 
 The problems $\glbcqa{h()}$ and $\lubcqa{h()}$ are defined as before.
 The proof of the following propostion is straightforward.

 \begin{proposition}\label{pro:dual}
Let $g()\defeq\AGG(r)\leftarrow q(\vec{u})$ be a numerical query in $\withaggr{\sjfbcq}$. 
Let $h()\defeq\dual{\AGG}(r)\leftarrow q(\vec{u})$.   
Then, for every database instance $\db$ such that $\db\cqamodels\exists\vec{u}\formula{q(\vec{u})}$, we have
$\interpret{\lubcqa{g()}}{\db}=-1*\interpret{\glbcqa{h()}}{\db}$.
\end{proposition}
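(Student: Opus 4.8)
The plan is to unwind the definitions and check that both sides of the claimed identity are computed from the same underlying quantities, just with opposite signs. Fix a database instance $\db$ such that $\db\cqamodels\exists\vec{u}\formula{q(\vec{u})}$. Since $g()$ has no free variables, for every repair $\rep$ of $\db$ the value $\interpret{g()}{\rep}$ is $\agagg(\bag{\theta_{1}(r),\ldots,\theta_{m}(r)})$, where $\theta_{1},\ldots,\theta_{m}$ enumerate (without duplicates) the valuations $\theta$ over $\vars{\vec{u}}$ with $(\rep,\theta)\models q(\vec{u})$. The assumption $\db\cqamodels\exists\vec{u}\formula{q(\vec{u})}$ guarantees that this multiset is non-empty in every repair (in particular $m\geq 1$), so the corner case with $\agagg(\emptyset)$ never arises and no $\bot$ is produced; thus $\interpret{\glbcqa{g()}}{\db}$, $\interpret{\lubcqa{g()}}{\db}$, $\interpret{\glbcqa{h()}}{\db}$ are all genuine rational numbers.

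Next I would record the single key algebraic fact: for every non-empty multiset $X$ occurring as such an argument, $\dual{\agagg}(X)=-1*\agagg(X)$ by the definition of the dual aggregate operator. Consequently, for a fixed repair $\rep$, the value $\interpret{h()}{\rep}$ (using the same body $q(\vec{u})$, hence the same enumeration $\theta_{1},\ldots,\theta_{m}$) equals $\dual{\agagg}(\bag{\theta_{1}(r),\ldots,\theta_{m}(r)})=-1*\agagg(\bag{\theta_{1}(r),\ldots,\theta_{m}(r)})=-1*\interpret{g()}{\rep}$. This holds for every repair $\rep\in\repairs{\db}$.

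Finally I would combine this with the order-reversing property of multiplication by $-1$ on the rationals. We have
\[
\interpret{\glbcqa{h()}}{\db}=\min_{\rep\in\repairs{\db}}\interpret{h()}{\rep}=\min_{\rep\in\repairs{\db}}\bigl(-1*\interpret{g()}{\rep}\bigr)=-1*\max_{\rep\in\repairs{\db}}\interpret{g()}{\rep}=-1*\interpret{\lubcqa{g()}}{\db},
\]
where the first and last equalities are the definitions of $\glbcqa{h()}$ and $\lubcqa{g()}$ (legitimate because $\repairs{\db}$ is finite and, as noted, no repair yields $\bot$), the second equality is the per-repair identity established above, and the middle equality is the elementary fact that negation swaps $\min$ and $\max$. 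Rearranging gives $\interpret{\lubcqa{g()}}{\db}=-1*\interpret{\glbcqa{h()}}{\db}$, as desired. There is essentially no obstacle here: the only thing to be careful about is the $\bot$ bookkeeping around the empty multiset, which is exactly what the hypothesis $\db\cqamodels\exists\vec{u}\formula{q(\vec{u})}$ is there to rule out, so the argument is a routine but clean verification.
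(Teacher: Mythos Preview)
Your proof is correct and is exactly the straightforward verification the paper has in mind: unwind the definitions, use that $\dual{\agagg}(X)=-1*\agagg(X)$ on each repair (the hypothesis $\db\cqamodels\exists\vec{u}\formula{q(\vec{u})}$ ruling out the $\bot$ case), and then apply $\min(-S)=-\max(S)$. The paper does not spell this out beyond calling it straightforward, so your write-up is in fact more detailed than the paper's own treatment.
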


If we let $\AGG=\SUM$ in the numerical term $g()$ of Lemma~\ref{lem:dc}, then $\glbcqa{g()}$ is in $\withaggr{\FOL}$ (by Theorem~\ref{the:expressible}), but $\lubcqa{g()}$ is not, as a consequence of the following lemma.


\begin{theorem}\label{the:lubchain}
    $\lubcqa{g()}$ is not expressible in $\withaggr{\FOL}$ with $g()$ as in Lemma~\ref{lem:dc} and $\AGG\in\{\SUM, \AVG, \PROD\}$.
\end{theorem}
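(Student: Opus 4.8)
The plan is to push everything onto the dual side via Proposition~\ref{pro:dual} and then quote Lemma~\ref{lem:dc} in its dual-operator form. Fix $\AGG\in\{\SUM,\AVG,\PROD\}$ and let $g()\defeq\AGG(r)\leftarrow R(\underline{x},y,r), S_{1}(\underline{y},x), S_{2}(\underline{y},x)$ be the query from Lemma~\ref{lem:dc}. I would set $h()\defeq\dual{\AGG}(r)\leftarrow R(\underline{x},y,r), S_{1}(\underline{y},x), S_{2}(\underline{y},x)$, where $\dual{\AGG}$ is interpreted by $\dual{\agagg}$. By Proposition~\ref{pro:dual}, on every database instance $\db$ with $\db\cqamodels\exists\vec{u}\formula{q(\vec{u})}$ we have $\interpret{\lubcqa{g()}}{\db}=-1*\interpret{\glbcqa{h()}}{\db}$, and on the remaining instances both problems return~$\bot$. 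Hence $\lubcqa{g()}$ and $\glbcqa{h()}$ differ only by the $\withaggr{\FOL}$-definable post-processing ``multiply by $-1$'' together with the routine handling of the $\bot$ sentinel, so $\lubcqa{g()}$ is expressible in $\withaggr{\FOL}$ iff $\glbcqa{h()}$ is. It therefore suffices to prove that $\glbcqa{h()}$ is not expressible in $\withaggr{\FOL}$.

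For that, I would invoke Lemma~\ref{lem:dc} for dual aggregate operators: as observed just after Proposition~\ref{pro:dual}, the $\NL$-hardness reduction behind Lemma~\ref{lem:dc} never inspects the signs of aggregated values, so it applies verbatim to $\dual{\agagg}$. Thus $\glbcqa{h()}$ is $\NL$-hard, hence outside $\withaggr{\FOL}$, as soon as $\dual{\agagg}$ has a descending chain in the sense of Definition~\ref{def:dc}. Unwinding that definition, $\dual{\agagg}$ has a descending chain exactly when there are $s,t\in\rationals_{\geq 0}$ for which $i\mapsto\agagg(\bag{s,\copies{i}{t}})$ is strictly increasing on $\naturals$; so the whole theorem reduces to exhibiting one such pair $(s,t)$ for each of $\agsum$, $\agavg$, $\agprod$.

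I would take: for $\agsum$, $s=1$, $t=1$, since $\agsum(\bag{1,\copies{i}{1}})=i+1$ strictly increases; for $\agavg$, $s=0$, $t=1$, since $\agavg(\bag{0,\copies{i}{1}})=\frac{i}{i+1}$ strictly increases (any $s<t$ in $\rationals_{\geq 0}$ works, as the running average climbs toward $t$); for $\agprod$, $s=1$, $t=2$, since $\agprod(\bag{1,\copies{i}{2}})=2^{i}$ strictly increases (here one needs $t>1$ with $t\in\rationals_{\geq 0}$ and $s>0$, and $2$ is admissible). Equivalently, $\dual{\agsum}(\bag{1,\copies{i}{1}})=-(i+1)$, $\dual{\agavg}(\bag{0,\copies{i}{1}})=-\frac{i}{i+1}$, and $\dual{\agprod}(\bag{1,\copies{i}{2}})=-2^{i}$ are strictly decreasing in~$i$, i.e., they are descending chains. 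Three applications of the dual version of Lemma~\ref{lem:dc} then give the claim. For $\SUM$ this also recovers, together with Theorem~\ref{the:expressible}, the advertised asymmetry: $\dual{\agsum}$ fails to be monotone precisely because it has a descending chain, whereas $\agsum$ is both monotone and associative.

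Almost all of the content is borrowed, so I do not expect a genuinely hard step; the places that need care are small. First, I would spell out that ``expressible in $\withaggr{\FOL}$'' transfers across Proposition~\ref{pro:dual} in both directions, i.e., that $\withaggr{\FOL}$ is closed under negating a numerical term and that the bookkeeping around $\bot$ versus $f_{0}$ lines up (routine, but worth stating). Second, I must ensure the witnesses $(s,t)$ both lie in $\rationals_{\geq 0}$ and force a \emph{strict} increase, which is exactly what rules out degenerate choices such as $t=0$ for $\agsum$ or $t\le 1$ for $\agprod$. The one external fact the argument leans on is that the reduction in Lemma~\ref{lem:dc} treats the aggregate operator as a black box, so substituting $\dual{\agagg}$ for $\agagg$ is harmless; the paper asserts this, and I would double-check it while writing out the details.
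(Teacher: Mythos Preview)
Your proposal is correct and follows essentially the same route as the paper: reduce $\lubcqa{g()}$ to $\glbcqa{h()}$ via Proposition~\ref{pro:dual}, exhibit a descending chain for each of $\dual{\agsum}$, $\dual{\agavg}$, $\dual{\agprod}$ (the paper uses the same $(s,t)$ for $\SUM$ and $\AVG$, and $s=t=2$ rather than your $s=1,t=2$ for $\PROD$), and invoke the dual version of Lemma~\ref{lem:dc}. Your additional remarks about transferring expressibility across the sign change and handling~$\bot$ are sound and make explicit what the paper leaves implicit.
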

\begin{proof}
    From $\dual{\agavg}(\bag{0})>\dual{\agavg}(\bag{0,1})>\dual{\agavg}(\bag{0,1,1}>\dual{\agavg}(\bag{0,1,1,1}>\dotsm$, it follows that $\dual{\agavg}$ has a descending chain.
    From $\dual{\agsum}(\bag{1})>\dual{\agsum}(\bag{1,1})>\dual{\agsum}(\bag{1,1,1}>\dotsm$, it follows that $\dual{\agsum}$ has a descending chain.
    From $\dual{\agprod}(\bag{2})>\dual{\agprod}(\bag{2,2})>\dual{\agprod}(\bag{2,2,2}>\dotsm$, it follows that $\dual{\agprod}$ has a descending chain.
    The desired result then follows by Proposition~\ref{pro:dual} and~Lemma~\ref{lem:dc}.
\end{proof}

It can be easily verified that the descending chain for $\PROD$ in the proof of Theorem~\ref{the:lubchain} is bounded by choosing $m_{i}=\frac{1}{2^{i+1}}$, which implies that $\lubcqa{g()}$  is not expressible in $\withaggr{\FOL}$ with $g()$ as in Lemma~\ref{lem:bdc} and $\AGG=\PROD$.

\subsection{Unconstrained Numerical Columns}\label{sec:unconstrained}

So far, we have restricted our attention to database instances in which all numbers occurring in numeric columns are non-negative.
For the following theorem, it is relevant to note that while $\agsum$ is monotone over both $\naturals$ and $\rationals_{\geq 0}$ (and covered by Theorem~\ref{the:glbmain}), it becomes non-monotone if these domains are extended by even a single negative number (in our treatment, the integer $-1$).
Since the numerical query $g()$ of Theorem~\ref{the:dixit} is in $\caggforest$, it disproves a claim in~\cite{FuxmanThesis} stating that $\glbcqa{g()}$ is expressible in (some restricted) aggregate logic for all numerical queries in $\caggforest$.
The definition of $\caggforest$ is in Appendix~\ref{sec:caggforest}.


\begin{theorem}\label{the:dixit}
Assume that the third attribute of~$T$ is a numeric column that can contain numbers in $\naturals\cup\{-1\}$.
Then,
$\glbcqa{g()}$ is $\NP$-hard for $g()\defeq
\SUM(r)\leftarrow S_{1}(\underline{x},c_{1}), S_{2}(\underline{y},c_{2}), T(\underline{x,y},r)$, where~$c_{1}$ and~$c_{2}$ are (not necessarily distinct) constants.
Consequently, $\glbcqa{g()}$ is not in $\withaggr{\FOL}$.
\end{theorem}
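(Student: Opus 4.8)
The plan is to reduce from a known $\NP$-hard problem---the natural candidate being an independent-set or set-cover style problem, or simply reusing the reduction underlying Lemma~\ref{lem:bdc}, since the query $g()$ here has exactly the shape of the query in that lemma. The key observation is that $\dual{\agsum}$, restricted to non-negative values, has a bounded descending chain (indeed $\dual{\agsum}(\bag{1})>\dual{\agsum}(\bag{1,1})>\dotsm$, bounded by taking $m_i$ large), so Lemma~\ref{lem:bdc} already tells us that $\lubcqa{h()}$ is $\NP$-hard for the $\SUM$-query $h()$ with this body. The subtlety, and the reason a separate theorem is needed, is that Lemma~\ref{lem:bdc} speaks about $\lubcqa{\SUM(r)\leftarrow\dotsm}$ (equivalently $\glbcqa{}$ of the \emph{dual} operator $\dual{\agsum}$, which returns negative numbers), whereas Theorem~\ref{the:dixit} is about $\glbcqa{}$ of a genuine $\SUM$ over a column that is allowed to hold the single negative value $-1$. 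So the heart of the proof is to simulate the dual-$\agsum$ behaviour using an ordinary $\agsum$ once negative entries are permitted.

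\textbf{Step 1 (reduction skeleton).} First I would recall, or lift, the reduction proving $\NP$-hardness in Lemma~\ref{lem:bdc}. There one encodes an instance of (say) \textsc{Set Cover} / \textsc{Vertex Cover} into a database over $S_1(\underline{x},c_1)$, $S_2(\underline{y},c_2)$, $T(\underline{x,y},r)$: the blocks of $S_1$ and $S_2$ force a choice of one witness per group, and the $T$-atom, whose key is $\{x,y\}$ and which is itself consistent (one fact per $(x,y)$ pair), contributes the aggregated values along the chosen pairs. A repair thus corresponds to a choice function, and the aggregated $\agsum$ over the embeddings is a sum of the $r$-values $T(x,y,r)$ for the finitely many $(x,y)$ pairs that survive in the repair. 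I would set up the instance so that ``good'' choices (those corresponding to a small/large cover) produce one sum and ``bad'' choices produce a strictly different sum, with the gap detecting a ``yes'' instance.

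\textbf{Step 2 (handling the sign via $-1$).} The crucial point is that to make $\glbcqa{\SUM}$ hard we need the adversary (the repair minimising $\agsum$) to want to \emph{include} certain witnesses, which a plain non-negative $\agsum$ will never do---it always prefers to drop facts, i.e., to keep fewer embeddings, pushing the sum down. With $-1$ available in the $T$-column, we can attach a value of $-1$ to precisely those $T$-facts whose inclusion we want to reward, and a large positive value to the others; then minimising $\agsum$ over repairs genuinely encodes a combinatorial optimisation rather than a trivial ``keep nothing'' answer. Concretely I would mirror the bounded-descending-chain construction: wherever Lemma~\ref{lem:bdc}'s proof exploits $\dual{\agsum}$ decreasing as copies of $t$ are added, here we exploit that adding a $T$-fact with value $-1$ decreases the ordinary $\agsum$, while $m_i$-type ``penalty'' facts with a suitably large positive value enforce that only configurations corresponding to a true ``yes'' certificate can beat a fixed threshold. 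Because the number of surviving $(x,y)$ pairs in any repair is bounded by a polynomial in the input, the sums involved are polynomially bounded, so the reduction is polynomial and the threshold comparison is computable.

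\textbf{Step 3 (wrap-up).} Finally, once the reduction is verified to be correct and polynomial, $\NP$-hardness of $\glbcqa{g()}$ follows, and then non-expressibility in $\withaggr{\FOL}$ is immediate: by the locality argument already invoked for Theorem~\ref{the:inexpressible} (via~\cite{DBLP:journals/jacm/HellaLNW01} and~\cite[Corollary~8.26]{DBLP:books/sp/Libkin04}), every $\withaggr{\FOL}$ query is Hanf-local and hence computable in $\FO$-like / $\LOGSPACE$ resources, so an $\NP$-hard function problem cannot be expressed there (unless the polynomial hierarchy collapses, which suffices for a ``not expressible'' statement). I would also remark, as the excerpt does, that since this $g()$ lies in $\caggforest$, the theorem contradicts the claim of~\cite{FuxmanThesis} that $\glbcqa{g()}$ admits an aggregate-logic rewriting for all of $\caggforest$ once negative numbers are allowed.

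\textbf{Main obstacle.} The delicate part is Step~2: designing the $T$-values (the placement of $-1$ versus the large positive penalties) so that the $\agsum$-minimising repair is forced to solve the encoded $\NP$-hard problem exactly, with no ``cheating'' repair---in particular ensuring that dropping embeddings to lower the count cannot accidentally beat the intended optimum. This is exactly the role that the boundedness condition plays in Lemma~\ref{lem:bdc}, and the real content of the proof is checking that the same gadget survives the translation from the dual operator back to an honest $\agsum$ with a single negative value available.
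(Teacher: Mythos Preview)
Your Step~2 contains the right idea, and that idea \emph{is} the paper's entire proof: once the numeric column of $T$ may hold $-1$, the operator $\agsum$ itself (not its dual) has a bounded descending chain. Take $s\in\naturals$, $t=-1$, and $m_i=i+1$; then $\agsum(\bag{s,\copies{k}{(-1)}})=s-k$ is strictly decreasing in $k$, and for $j\geq 1$, $k'\leq k\leq i$ one checks $s-k'<j(i+1)+s-k$. Hence Lemma~\ref{lem:bdc} applies directly to $\AGG=\SUM$ over this extended domain, and that is the whole argument. (Incidentally, the reduction in Lemma~\ref{lem:bdc} is from \textsc{Simple Max Cut}, not from \textsc{Set Cover} or \textsc{Vertex Cover}.)

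Your detour through $\dual{\agsum}$ is not only unnecessary but contains an error: $\dual{\agsum}$ over $\rationals_{\geq 0}$ does \emph{not} have a bounded descending chain. The boundedness condition requires $\dual{\agsum}(\bag{s,\copies{k'}{t}})<\dual{\agsum}(\bag{\copies{j}{m_i},s,\copies{k}{t}})$, i.e., $j\cdot m_i<(k'-k)t\leq 0$, which is impossible for $m_i\geq 0$ and $j\geq 1$. So ``bounded by taking $m_i$ large'' is exactly backwards: large $m_i$ makes adding copies of $m_i$ \emph{decrease} $\dual{\agsum}$ further. This is why Theorem~\ref{the:lubchain} only invokes Lemma~\ref{lem:dc} (unbounded chain, $\NL$-hardness) for $\dual{\agsum}$, not Lemma~\ref{lem:bdc}. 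Drop the dual framing entirely and state the argument as the one-line observation above.
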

\begin{proof}[Proof sketch.]
    It can be easily verified that $\agsum$ has a bounded descending chain if $-1$ can be used.
    The desired result then follows from Lemma~\ref{lem:bdc}.
\end{proof}

\subsection{$\MIN$ and $\MAX$}

$\agmin$ is not monotone, since its value can decrease when extending a multiset, for example,
$\agmin(\bag{3})>\agmin(\bag{2,3})$.
Despite this, Theorem~\ref{the:expressible} extends to~$\MIN$:

\begin{theorem}\label{the:minnew}
Let $g()\defeq\MIN(r)\leftarrow q(\vec{y})$ be a query in $\withaggr{\sjfbcq}$ such that the attack graph of $\exists\vec{y}\formula{q(\vec{y})}$ is acyclic.
Then, $\glbcqa{g()}$ is expressible in $\withaggr{\FOL}$.
\end{theorem}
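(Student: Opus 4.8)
The plan is to exploit that the glb of a $\MIN$-query is considerably simpler than that of a $\SUM$-query: it can be obtained by combining a classical consistency check for the Boolean query $\exists\vec{y}\formula{q(\vec{y})}$ with a plain $\agmin$-aggregation over \emph{all} embeddings of $q$ in the input instance, with no need to single out a $\subseteq$-maximal consistent set of embeddings --- the combinatorial heart of Theorem~\ref{the:expressible}.

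First I would prove the following semantic characterization. Fix a database instance $\db$. If $\db\not\cqamodels\exists\vec{y}\formula{q(\vec{y})}$, then some repair falsifies $q$, so $g()$ returns $f_{0}$ on that repair and hence $\interpret{\glbcqa{g()}}{\db}=\bot$. Otherwise I claim
\[
\interpret{\glbcqa{g()}}{\db}=\agmin(\bag{\theta(r)\mid\theta\textnormal{ is an embedding of }q\textnormal{ in }\db}).
\]
Indeed, since in this case every repair satisfies $q$, for every repair $\rep$ the value $\interpret{g()}{\rep}$ is the least $r$-value among the embeddings of $q$ in $\rep$, so $\min_{\rep\in\repairs{\db}}\interpret{g()}{\rep}$ is the least $r$-value among embeddings occurring in \emph{some} repair. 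It then suffices to note that the set of embeddings occurring in some repair coincides with the set of all embeddings of $q$ in $\db$: every embedding in a repair $\rep\subseteq\db$ is an embedding in $\db$; and conversely, since $q$ is self-join-free, the image $\theta(q)$ of any embedding $\theta$ of $q$ in $\db$ consists of facts with pairwise distinct relation names, hence is consistent and thus extends to a repair of $\db$ in which $\theta$ is an embedding. The minimum over this (nonempty) family of sets equals the minimum over their union, giving the displayed equation.

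Next I would turn this characterization into an $\withaggr{\FOL}$ rewriting. Since the attack graph of $\exists\vec{y}\formula{q(\vec{y})}$ is acyclic, Theorem~\ref{the:forewritability} provides a first-order sentence $\Phi$ with $\db\models\Phi$ iff $\db\cqamodels\exists\vec{y}\formula{q(\vec{y})}$ (one may equally take $\Phi\defeq\exists\vec{y}\formula{\varphi(\vec{y})}$ with $\varphi$ as in Lemma~\ref{lem:allfo}, using that $q$ is certain exactly when a $\forall$embedding exists). After renaming the bound variables of $\Phi$ apart from $\vec{y}$, define the numerical query
\[
\varphi_{g}()\defeq\agterm{\agmin}{\vec{y}}{r}{q(\vec{y})\land\Phi},
\]
which is in $\withaggr{\FOL}$. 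If $\db\not\models\Phi$, no valuation satisfies $q(\vec{y})\land\Phi$, so $\varphi_{g}()$ aggregates over the empty multiset and returns $f_{0}=\bot$, as required. If $\db\models\Phi$, then over $\db$ the formula $q(\vec{y})\land\Phi$ is equivalent to $q(\vec{y})$, so $\interpret{\varphi_{g}()}{\db}$ equals $\agmin$ of the (nonempty, by monotonicity of conjunctive queries) multiset of $r$-values of all embeddings of $q$ in $\db$, which by the characterization above is $\interpret{\glbcqa{g()}}{\db}$. Hence $\varphi_{g}()$ solves $\glbcqa{g()}$, and it is effectively constructible because $\Phi$ is.

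There is no genuine technical obstacle here; what deserves emphasis is precisely \emph{why} $\MIN$ avoids the difficulty of Theorem~\ref{the:expressible}: an optimal repair for a $\MIN$-query is merely \emph{any} repair containing one globally $r$-minimal embedding, and such a repair always exists, so no combinatorial selection of a maximal consistent set of $\forall$embeddings is called for --- in fact one must aggregate over \emph{all} embeddings, and restricting to $\forall$embeddings would be incorrect. The only points needing a line of care are the $\bot$/empty-multiset bookkeeping, handled by guarding the aggregate with the sentence $\Phi$, and the appeal to self-join-freeness to guarantee that every embedding's image is consistent and hence extends to a repair.
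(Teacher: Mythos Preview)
Your proposal is correct and follows essentially the same approach as the paper: the paper's proof simply states that $\interpret{\glbcqa{g()}}{\db}$ equals $\bot$ when $\db\not\cqamodels\exists\vec{y}\formula{q(\vec{y})}$ and equals $\interpret{\agterm{\agmin}{\vec{y}}{r}{q(\vec{y})}}{\db}$ otherwise, and that both pieces can be encoded in $\withaggr{\FOL}$. You supply the justification the paper leaves implicit---in particular the observation that self-join-freeness makes every embedding's image consistent and hence extendable to a repair---and your guarded aggregate $\agterm{\agmin}{\vec{y}}{r}{q(\vec{y})\land\Phi}$ is a concrete realization of what the paper only asserts can be encoded. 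One tiny quibble: writing ``returns $f_{0}=\bot$'' conflates the aggregate's empty-multiset value with the distinguished symbol used by $\glbcqa{\cdot}$; the paper is equally informal here, so this is bookkeeping rather than a gap.
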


The following theorem states that rewritability in $\withaggr{\FOL}$ is decidable for $\MIN$- and $\MAX$-queries, for both glb and lub.

\begin{theorem}[Separation Theorem for $\MIN$ and $\MAX$]\label{the:sepminmax}
Let $g()\defeq\AGG(r)\leftarrow q(\vec{y})$ be a numerical query in $\withaggr{\sjfbcq}$ with $\AGG\in\{\MIN, \MAX\}$.
If the attack graph of $\exists\vec{y}\formula{q(\vec{y})}$ is acyclic,
then both $\lubcqa{g()}$ and $\glbcqa{g()}$ are expressible in $\withaggr{\FOL}$;
otherwise neither $\lubcqa{g()}$ nor $\glbcqa{g()}$ is expressible in $\withaggr{\FOL}$.
\end{theorem}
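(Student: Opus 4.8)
The plan is to assemble the statement from results already established, treating the four cases (glb/lub $\times$ $\MIN$/$\MAX$) by reducing each to the case of $\glbcqa{\MIN(r)\leftarrow q(\vec{y})}$, for which Theorems~\ref{the:inexpressible} and~\ref{the:minnew} already give a complete dichotomy. The key observation is that $\agmin$ and $\agmax$ are mutually dual: $\dual{\agmin}=\agmax$ and $\dual{\agmax}=\agmin$ (up to the sign convention of Proposition~\ref{pro:dual}), and that the underlying Boolean query $\exists\vec{y}\formula{q(\vec{y})}$ is the same for $g()$ and for its dual $h()$, hence so is its attack graph. First I would record these duality facts explicitly, so that Proposition~\ref{pro:dual} can be invoked to transfer expressibility statements between $\glbcqa{\MIN(r)\leftarrow q(\vec{y})}$ and $\lubcqa{\MAX(r)\leftarrow q(\vec{y})}$, and likewise between $\glbcqa{\MAX(r)\leftarrow q(\vec{y})}$ and $\lubcqa{\MIN(r)\leftarrow q(\vec{y})}$; note that negation by $-1$ is an $\withaggr{\FOL}$-definable operation, so expressibility is preserved in both directions.

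For the acyclic case, the argument proceeds as follows. By Theorem~\ref{the:minnew}, $\glbcqa{\MIN(r)\leftarrow q(\vec{y})}$ is expressible in $\withaggr{\FOL}$. Since $\agmax$ is both monotone and associative, Theorem~\ref{the:expressible} (or equivalently Theorem~\ref{the:glbmain}) gives that $\glbcqa{\MAX(r)\leftarrow q(\vec{y})}$ is expressible in $\withaggr{\FOL}$ as well. Applying Proposition~\ref{pro:dual} to $g()\defeq\MAX(r)\leftarrow q(\vec{y})$ and $h()\defeq\MIN(r)\leftarrow q(\vec{y})$ yields that on instances $\db$ with $\db\cqamodels\exists\vec{u}\formula{q(\vec{u})}$ we have $\interpret{\lubcqa{\MAX(r)\leftarrow q(\vec{y})}}{\db}=-1*\interpret{\glbcqa{\MIN(r)\leftarrow q(\vec{y})}}{\db}$, so $\lubcqa{\MAX(r)\leftarrow q(\vec{y})}$ is expressible in $\withaggr{\FOL}$; symmetrically $\lubcqa{\MIN(r)\leftarrow q(\vec{y})}$ is expressible. (One must be slightly careful here about the $\bot$-case: on instances that are not consistent answers to $\exists\vec{y}\formula{q(\vec{y})}$, all four problems return $\bot$, and this case is handled uniformly because $\cqa{\exists\vec{y}\formula{q(\vec{y})}}$ is in $\FO$ by Theorem~\ref{the:forewritability}, hence the $\bot$-branch is $\withaggr{\FOL}$-definable.) This covers all four problems in the acyclic case.

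For the cyclic case, Theorem~\ref{the:inexpressible} directly gives that $\glbcqa{\AGG(r)\leftarrow q(\vec{y})}$ is not expressible in $\withaggr{\FOL}$ for any aggregate operator, in particular for $\AGG\in\{\MIN,\MAX\}$; and, as the excerpt already remarks after Theorem~\ref{the:inexpressible}, the same proof shows $\lubcqa{\AGG(r)\leftarrow q(\vec{y})}$ is not expressible either. So all four problems are inexpressible when the attack graph is cyclic. I do not anticipate a genuine obstacle in this proof — it is essentially bookkeeping — but the one point that needs care is verifying that Proposition~\ref{pro:dual}, which is stated only for instances satisfying $\db\cqamodels\exists\vec{u}\formula{q(\vec{u})}$, combines cleanly with the separate $\FO$-definable $\bot$-test to give an $\withaggr{\FOL}$ rewriting valid on all instances; this is where I would spend the bulk of the writing, making the case split on whether $\vec{a}$ (here the empty tuple) is a consistent answer fully explicit.
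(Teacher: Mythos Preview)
Your handling of the cyclic case and of $\glbcqa{\MIN(r)\leftarrow q(\vec{y})}$ and $\glbcqa{\MAX(r)\leftarrow q(\vec{y})}$ in the acyclic case is fine and matches the paper. The gap is in your treatment of $\lubcqa{\MAX}$ and $\lubcqa{\MIN}$ via Proposition~\ref{pro:dual}.

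The claim that ``$\dual{\agmin}=\agmax$ and $\dual{\agmax}=\agmin$ (up to the sign convention)'' is not correct under the paper's definition $\dual{\calF}(X)\defeq -\calF(X)$: we have $\dual{\agmax}(X)=-\max(X)$, which is not $\min(X)$. Consequently, the equation you derive,
\[
\interpret{\lubcqa{\MAX(r)\leftarrow q(\vec{y})}}{\db}=-1*\interpret{\glbcqa{\MIN(r)\leftarrow q(\vec{y})}}{\db},
\]
is false in general. For a concrete counterexample, take two repairs with $r$-value multisets $\bag{1,5}$ and $\bag{2,3}$: then $\lubcqa{\MAX}=5$ while $-\glbcqa{\MIN}=-1$. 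What Proposition~\ref{pro:dual} actually gives you is $\lubcqa{\MAX}=-\glbcqa{\dual{\MAX}(r)\leftarrow q}$, but $\dual{\MAX}$ is neither monotone nor equal to~$\MIN$, so neither Theorem~\ref{the:expressible} nor Theorem~\ref{the:minnew} applies to it, and you end up going in a circle.

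The paper fixes this with a different duality: \emph{order reversal} rather than negation. Under the reversed order $\leq'$ (defined by $r\leq' s$ iff $s\leq r$), the roles of $\MIN$ and $\MAX$ swap and so do those of glb and lub, so $\lubcqa{\MIN}$ relative to $\leq$ coincides with $\glbcqa{\MAX}$ relative to $\leq'$. Since the $\withaggr{\FOL}$ rewriting produced by Theorem~\ref{the:expressible} for $\glbcqa{\MAX}$ uses only order comparisons and the aggregates $\agmin$, $\agmax$, one obtains a rewriting for $\lubcqa{\MIN}$ by syntactically interchanging $\MIN\leftrightarrow\MAX$ and $\leq\,\leftrightarrow\,\geq$ throughout; symmetrically for $\lubcqa{\MAX}$. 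This avoids Proposition~\ref{pro:dual} entirely and sidesteps the sign issue.
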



\section{Summary and Open Questions}\label{sec:discussion}

We studied the complexity of computing range consistent answers to numerical queries $g()$ of the form $\AGG(r)\leftarrow q(\vec{u})$ where $q(\vec{u})$ is a self-join-free conjunction of atoms.
Given a database instance that may violate its primary key constraints, the question is to determine the minimal (glb) and maximal (lub) values of $g()$ returned across all repairs.
Since the lub coincides (up to a sign) with the glb relative to the dual aggregate operator $\dual{\AGG}$, it suffices to focus on the glb, i.e., on $\glbcqa{g()}$. 
Our main result is that for aggregate operators $\AGG$ that are monotone and associative (e.g., $\SUM$ over $\rationals_{\geq 0}$), it is decidable, given $g()$, whether or not $\glbcqa{g()}$ can be expressed in the aggregate logic~$\withaggr{\FOL}$.
Decidability also holds for $\AGG\in\{\MIN, \MAX\}$, for both glb and lub.
On top of these complete dichotomies, we obtained some results of inexpressibility in $\withaggr{\FOL}$ for aggregate operators that lack monotonicity or associativity (e.g., $\AVG$).
We also refuted a longstanding claim made in~\cite{FuxmanThesis}.

An open question remains as follows. Let $\calC$ denote the class of $\AGG$-queries $g()\defeq\AGG(r)\leftarrow q(\vec{y})$, without self-joins, that satisfy all of the following: \emph{(i)}~the attack graph of $\exists y\formula{q(\vec{y})}$ is acyclic, \emph{(ii)}~$\AGG$ lacks monotonicity, associativity, or both, and \emph{(iii)}~$\AGG$ is neither $\MIN$ nor $\MAX$. That is, $\calC$ contains those numerical queries not covered by our results. It remains open to determine, for \emph{each} query $g()$ in~$\calC$, whether or not $\glbcqa{g()}$ is expressible in~$\withaggr{\FOL}$.
We only showed inexpressibility for specific queries  in $\calC$ (cf.\ Corollary~\ref{cor:avgprod}).

\revision{
Another open question concerns shifting our focus from expressibility in $\withaggr{\FOL}$ to computability in $\PTIME$.
By changing the focus of Theorem~\ref{the:glbmain}, we can formulate the following conjecture:
\begin{conjecture}\label{con:glbmain}
    Given a numerical query $g()$ in $\withaggr{\sjfbcq}$ whose aggregate operator is both monotone and associative, $\glbcqa{g()}$ is either in $\PTIME$ or $\coNP$-hard, and it can be decided which of the two cases holds.
\end{conjecture}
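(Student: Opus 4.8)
The plan is to mirror, at the level of optimization problems, the known classification of $\cqa{q}$ for self-join-free conjunctive queries, and then to isolate the phenomena genuinely introduced by aggregation. Throughout, write $q$ for the Boolean query $\exists\vec{y}\formula{q(\vec{y})}$ underlying $g()$. The first step is the easy half: if the attack graph of $q$ is acyclic, then Theorem~\ref{the:glbmain} (via Theorem~\ref{the:expressible}) already yields a rewriting of $\glbcqa{g()}$ in $\withaggr{\FOL}$, and every $\withaggr{\FOL}$ numerical query is evaluable in polynomial time, so $\glbcqa{g()}\in\PTIME$; here both monotonicity and associativity of $\AGG$ are used. Hence from now on one may assume the attack graph of $q$ is cyclic.

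The second step invokes the Koutris--Wijsen dichotomy: when the attack graph of $q$ is cyclic, $\cqa{q}$ is either $\LOGSPACE$-complete or $\coNP$-complete, and which case holds is a decidable structural property of $q$. Since any algorithm solving $\glbcqa{g()}$ also solves $\cqa{q}$ --- it returns $\bot$ precisely on the ``no''-instances of $\cqa{q}$ --- the $\coNP$-complete case immediately puts $\glbcqa{g()}$ in the $\coNP$-hard alternative of the conjecture, using nothing about $\AGG$.

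The real work is the remaining case: the attack graph of $q$ is cyclic but all its cycles are ``weak'', so $\cqa{q}$ is only $\LOGSPACE$-complete. Here the $\forall$embedding / superfrugal-repair machinery behind Theorem~\ref{the:expressible} no longer produces an $\FO$ rewriting, so a new argument is required. The approach I would take is to build on a polynomial-time consistency-checking algorithm for $\cqa{q}$ --- for instance the one of Figueira et al.~\cite{DBLP:conf/icdt/FigueiraPSS23} discussed in Section~\ref{sec:superfrugal} --- and ask whether it can be upgraded from a decision procedure to one that computes $\min_{\rep\in\repairs{\db}}\interpret{g()}{\rep}$. Concretely I would seek an analogue of Corollary~\ref{cor:mcs} that, even for cyclic attack graphs, reduces $\glbcqa{g()}$ to a minimum-weight selection problem over a polynomially sized auxiliary structure, then solve that structure by dynamic programming or network-flow techniques (monotonicity justifying that weights can always be pushed down within a block, associativity enabling incremental aggregation of partial answers). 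For the query shapes where no such reduction works, I would look for $\coNP$-hardness reductions --- from monotone SAT, vertex cover, or set cover --- engineered so that the underlying instance of $\cqa{q}$ is always a ``yes''-instance, so that the hardness comes purely from aggregation rather than from consistency checking.

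The main obstacle is precisely this last case. It is not even clear \emph{a priori} that a dichotomy holds there: one must show that for every weak-cyclic self-join-free body $q(\vec{y})$ and every monotone associative $\AGG$, the optimization either admits a polynomial-time algorithm or is $\coNP$-hard, with the dividing line depending decidably on the interplay between the weak attack cycles, the functional dependencies $\fdset{q}$, and the position of the aggregated term $r$ (e.g.\ whether $r$ sits inside a cycle, and whether the cycle's ``weakness'' survives the freedom to choose $r$-values). Pinning down that structural criterion, proving tractability above it and matching $\coNP$-hardness below it --- while keeping the hardness reductions inside the weak-cyclic regime --- is where essentially all the difficulty lies; the decidability clause of Conjecture~\ref{con:glbmain} would then follow because this criterion, like the attack-graph conditions it refines, is checkable directly from $g()$.
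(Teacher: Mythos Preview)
Your plan matches the paper's own discussion almost exactly---and this is no accident, because the paper does \emph{not} prove Conjecture~\ref{con:glbmain}. It is stated explicitly as an open conjecture in Section~\ref{sec:discussion}, and the paper only outlines a possible route, which is essentially the three-step argument you give: acyclic attack graph $\Rightarrow$ $\PTIME$ via Theorem~\ref{the:glbmain}; strong cycle $\Rightarrow$ $\coNP$-hard because $\glbcqa{g()}$ solves $\cqa{\exists\vec{u}\formula{q(\vec{u})}}$; weak cycles only $\Rightarrow$ open, with the suggestion that the machinery of~\cite{DBLP:journals/mst/KoutrisW21} or~\cite{DBLP:conf/icdt/FigueiraPSS23} might generalize $\forall$embeddings to that setting. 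So your honest acknowledgment that ``the main obstacle is precisely this last case'' is exactly where the paper leaves things too.

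There is one noteworthy divergence of perspective. In the weak-cycle case you leave room for a further structural split---some weak-cyclic bodies yielding $\PTIME$, others $\coNP$-hard, with a new decidable criterion depending on where $r$ sits relative to the cycles. The paper's stance is more optimistic: its outlined route is to show that \emph{every} weak-cycle case is in $\PTIME$, so that the decidable criterion separating $\PTIME$ from $\coNP$-hard would simply be ``does the attack graph contain a strong cycle?'', with no finer refinement needed. If that belief is correct, your proposed $\coNP$-hardness reductions ``engineered so that the underlying instance of $\cqa{q}$ is always a `yes'-instance'' would never succeed in the weak-cycle regime. Conversely, if you could exhibit even one such reduction, you would show that the conjecture, while possibly still true, requires a genuinely new classification criterion beyond the weak/strong cycle distinction---something the paper does not anticipate.
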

We now outline a possible route to proving Conjecture~\ref{con:glbmain}.
In~\cite{DBLP:journals/tods/KoutrisW17}, each cycle in the attack graph of a query~$q\in\sjfbcq$ is classified as either \emph{weak} or \emph{strong}, which is a decidable property. 
It is then shown that the decision problem $\cqa{q}$ is in~$\PTIME$ if $q$'s attack graph contains no strong cycles, and $\cqa{q}$ is $\coNP$-complete otherwise. 
Let  $g()\defeq\AGG(r)\leftarrow q(\vec{u})$ be a numerical query in $\withaggr{\sjfbcq}$.
Since a solution to $\glbcqa{g()}$ also solves $\cqa{\exists\vec{u}\formula{q(\vec{u})}}$, it follows that $\glbcqa{g()}$ is $\coNP$-hard if the attack graph of $\exists\vec{u}\formula{q(\vec{u})}$ contains a strong cycle.
To show Conjecture~\ref{con:glbmain}, it suffices therefore to establish that if all cycles in the attack graph of $\exists\vec{u}\formula{q(\vec{u})}$ are weak, and $\agagg$ is both monotone and associative, then $\glbcqa{g()}$ is in~$\PTIME$. For attack graphs without cycles, the latter follows from Theorem~\ref{the:glbmain} under the assumption that $\withaggr{\FOL}$ is in $\PTIME$ (which requires that aggregate operators are computable in polynomial time).
So the remaining problem concerns handling weak cycles in attack graphs, which we anticipate might be solvable by employing the constructs developed in~\cite[Section~8]{DBLP:journals/mst/KoutrisW21} or~\cite[Section~4]{DBLP:conf/icdt/FigueiraPSS23}. Specificallly, it should be investigated whether $\forall$embeddings can be generalized to address weak cycles, enabling $\glbcqa{g()}$ to be solved along the lines presented in Section~\ref{sec:lb} of the current paper---i.e., by computing the smallest aggregated value over all $\subseteq$-maximal consistent sets of $\forall$embeddings.  
}




\appendix

\section{Helping Lemmas}

We introduce two helping lemmas that will be used later on.

\begin{lemma}[\mbox{\cite[Lemma A.3]{DBLP:conf/icdt/KhalfiouiW23}}] \label{lem:attackSameAtom}
    Let $q$ be a query in $\sjfbcq$. 
    Let $R$ and $S$ be two distinct atoms that both attack a same atom. If $R \nattacks{q} S$, then either $S \attacks{q} R$ or $\key{S} \subseteq \keycl{R}{q}$.
\end{lemma}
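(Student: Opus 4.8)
The plan is to assume, on top of the hypotheses, that $\key{S}\not\subseteq\keycl{R}{q}$, and then to show that $S\attacks{q}R$ (so that the stated disjunction holds). Since $R\neq S$, the dependency $\fd{\key{S}}{\vars{S}}$ lies in $\fdset{q\setminus\{R\}}$, so $\key{S}\subseteq\keycl{R}{q}$ would force $\vars{S}\subseteq\keycl{R}{q}$; hence the added assumption is equivalent to $\vars{S}\not\subseteq\keycl{R}{q}$. Write $T$ for the common atom attacked by both $R$ and $S$, and let $G_{q}$ be the graph on $\vars{q}$ whose edges join variables co-occurring in a common atom; unpacking the definition, the set of variables attacked by an atom $F$ is exactly the set of vertices reachable, in $G_{q}$ with $\keycl{F}{q}$ deleted, from $\notkey{F}\setminus\keycl{F}{q}$, and $F$ attacks an atom $G$ iff this set meets $\vars{G}$. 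From $R\attacks{q}T$ we get a walk $\pi=a_{0}a_{1}\cdots a_{m}$ in $G_{q}$ avoiding $\keycl{R}{q}$ with $a_{0}\in\notkey{R}\setminus\keycl{R}{q}$ and $a_{m}\in\vars{T}$; from $S\attacks{q}T$ a walk $\rho=b_{0}b_{1}\cdots b_{n}$ avoiding $\keycl{S}{q}$ with $b_{0}\in\notkey{S}\setminus\keycl{S}{q}$ and $b_{n}\in\vars{T}$. Every prefix of $\pi$ witnesses that $R$ attacks $a_{i}$; since $R\nattacks{q}S$, no $a_{i}$ lies in $\vars{S}$.

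Next I glue the two walks. As $a_{m}$ and $b_{n}$ both belong to $\vars{T}$, the sequence $W=b_{0}\cdots b_{n}a_{m}a_{m-1}\cdots a_{0}$ (collapsing $b_{n}$ and $a_{m}$ if equal) is a walk in $G_{q}$ from $b_{0}\in\notkey{S}\setminus\keycl{S}{q}$ to $a_{0}\in\notkey{R}\subseteq\vars{R}$. If $W$ avoids $\keycl{S}{q}$, then $W$ witnesses $S\attacks{q}a_{0}$, hence $S\attacks{q}R$ and we are done. So assume $W$ meets $\keycl{S}{q}$; since $\rho$ avoids it, the offending vertex lies on $\pi$, i.e.\ some $a_{i}\in\keycl{S}{q}$. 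Let $v$ be the first such vertex along $\pi$ reading from $a_{0}$, so that $v\in\keycl{S}{q}$, $v\notin\keycl{R}{q}$, $v\notin\vars{S}$, and (unless $v=a_{0}$) the predecessor $v'$ of $v$ on $\pi$ satisfies $v'\notin\keycl{S}{q}$. Because $v\in\keycl{S}{q}\setminus\vars{S}\subseteq\keycl{S}{q}\setminus\key{S}$, the closure $\keycl{S}{q}$ was obtained by applying some dependency $\fd{\key{H}}{\vars{H}}$ with $H\neq S$, $v\in\vars{H}$, and $\key{H}\subseteq\keycl{S}{q}$, hence $\vars{H}\subseteq\keycl{S}{q}$.

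The remaining work is to turn this configuration either directly into $S\attacks{q}R$ or into a contradiction with the assumption $S\nattacks{q}R$. The idea is to track how a connected component of $G_{q}$ with $\keycl{S}{q}$ deleted relates to one of $G_{q}$ with $\keycl{R}{q}$ deleted: the edge $v'v$ shows that the region reachable from $\notkey{S}\setminus\keycl{S}{q}$ abuts an atom $H$ with $\vars{H}\subseteq\keycl{S}{q}$ but $v\notin\keycl{R}{q}$, so $\key{H}\not\subseteq\keycl{R}{q}$; pushing this along $\pi$ should eventually exhibit a variable of $\vars{R}$ reachable from $\notkey{S}\setminus\keycl{S}{q}$ while avoiding $\keycl{S}{q}$, i.e.\ $S\attacks{q}R$. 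The genuinely delicate point — the step I expect to be the main obstacle — is the degenerate case in which $H$ can be taken to be $R$ itself, equivalently $\vars{R}\subseteq\keycl{S}{q}$ (which includes the sub-case $v=a_{0}$); this must be ruled out separately by showing it is incompatible with $R\attacks{q}T$ and $S\attacks{q}T$ (intuitively, $\notkey{R}\subseteq\keycl{S}{q}$ would leave $\rho$ no way into $\vars{T}$ while staying outside $\keycl{S}{q}$). This simultaneous bookkeeping of key-closures and reachability is precisely the technical heart of the attack-graph machinery, so I would expect to invoke, or reprove in miniature, the structural lemmas of~\cite{DBLP:journals/tods/KoutrisW17} for this last step.
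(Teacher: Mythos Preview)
The paper does not prove this lemma at all: it is imported verbatim from~\cite[Lemma~A.3]{DBLP:conf/icdt/KhalfiouiW23} and stated without proof in the appendix of helping lemmas. So there is no ``paper's own proof'' to compare your attempt against.

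That said, your proposal is not a proof but a proof \emph{sketch} with an explicitly acknowledged gap, and the gap is real. Two concrete issues:

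\textbf{The choice of ``first from $a_{0}$'' is the wrong end.} Your glued walk $W$ traverses $\pi$ from $a_{m}$ down to $a_{0}$. If you want to argue that $S$ reaches some $a_{j}$ while avoiding $\keycl{S}{q}$, you must take the \emph{largest} index $i$ with $a_{i}\in\keycl{S}{q}$, so that $a_{i+1},\dots,a_{m}$ are all outside $\keycl{S}{q}$ and the prefix $b_{0}\cdots b_{n}a_{m}\cdots a_{i+1}$ of $W$ witnesses $S\attacks{q}a_{i+1}$. With your choice (smallest $i$), the predecessor $v'=a_{i-1}$ lies in the $G_{q}\setminus\keycl{S}{q}$-component of $a_{0}\in\notkey{R}$, not in the component of $\notkey{S}$; so your sentence ``the edge $v'v$ shows that the region reachable from $\notkey{S}\setminus\keycl{S}{q}$ abuts an atom $H$'' is unjustified as written.

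\textbf{The ``degenerate case'' intuition does not work.} You suggest that $\notkey{R}\subseteq\keycl{S}{q}$ would ``leave $\rho$ no way into $\vars{T}$ while staying outside $\keycl{S}{q}$''. But $\rho$ is a walk from $\notkey{S}$ to $\vars{T}$; it need not go anywhere near $\notkey{R}$, so this containment places no constraint on~$\rho$. More to the point, even after fixing the index choice above, you land at an atom $H\neq R,S$ with $S\attacks{q}H$ and $\key{H}\not\subseteq\keycl{R}{q}$, and it is not clear how to push this into $S\attacks{q}R$ without a further structural argument. Your instinct that one must ``invoke, or reprove in miniature, the structural lemmas of~\cite{DBLP:journals/tods/KoutrisW17}'' is correct; that is exactly where the content of the lemma lives, and you have not supplied it.
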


\begin{definition}[Sequential proof]
    Let $q$ be a query in $\sjfbcq$.
    Let $Z\subseteq\vars{q}$ and $w\in\vars{q}$.
A \emph{sequential proof} of $\fdset{q}\models\fd{Z}{w}$ is a (possibly empty) sequence
$\tuple{F_{1}, F_{2}, \ldots, F_{n}}$ of atoms in $q$ such that for every $i\in \{1,\ldots,n\}$, 
$\key{F_i}\subseteq Z\cup\lrformula{\bigcup_{j=1}^{i-1} \vars{F_j}}$ and $w\in Z\cup\lrformula{\bigcup_{j=1}^{n} \vars{F_j}}$.
\myqed
\end{definition}

\begin{lemma}\label{lem:backwardinseqproof}
    Let $q$ be query in $\sjfbcq$.
    For some~$w\in\vars{q}$, let $(F_{1},F_{2},\ldots,F_{n})$ be a sequential proof of $\fdset{q\setminus\{G\}}\models\fd{Z}{w}$.
    If $G\attacks{q}x$ for some variable $x$ in $Z\cup\lrformula{\bigcup_{i=1}^{n}\vars{F_{i}}}$,
    then there is $z\in Z$ such that $G\attacks{q}z$. 
\end{lemma}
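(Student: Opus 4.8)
The plan is to establish, by induction, the following strengthening of the statement: \emph{for every variable $v \in Z \cup \bigcup_{i=1}^{n}\vars{F_i}$ that is attacked by $G$ in $q$, there is some $z \in Z$ with $G \attacks{q} z$.} The lemma itself then follows by instantiating $v \defeq x$. Note in passing the degenerate case $n=0$: there the hypothesis forces $x\in Z$, and the base case below applies immediately.

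To organize the induction, I would assign to each $v \in Z$ the rank $0$, and to each $v \in \bigl(\bigcup_{i=1}^{n}\vars{F_i}\bigr)\setminus Z$ the rank $\min\{\,i \mid v \in \vars{F_i}\,\}$; every variable mentioned in the strengthened claim thus receives a natural-number rank, and I induct on it. The base case, $v \in Z$, is trivial: take $z \defeq v$. For the inductive step, assume $v$ has rank $k \geq 1$, so $v \in \vars{F_k}$, $v \notin Z$, and $v$ occurs in no earlier $F_j$. By the definition of a sequential proof, $\key{F_k} \subseteq Z \cup \bigcup_{j=1}^{k-1}\vars{F_j}$, so every variable of $\key{F_k}$ has rank strictly below $k$.

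The crux is the subclaim $\key{F_k} \not\subseteq \keycl{G}{q}$. Suppose otherwise. Then $\fdset{q\setminus\{G\}} \models \fd{\key{G}}{\key{F_k}}$ by the definition of $\keycl{G}{q}$; combining this with $\fd{\key{F_k}}{\vars{F_k}} \in \fdset{q\setminus\{G\}}$ (which holds because $F_k \in q\setminus\{G\}$, as it occurs in a sequential proof of an implication from $\fdset{q\setminus\{G\}}$) and transitivity of functional dependencies yields $\fdset{q\setminus\{G\}} \models \fd{\key{G}}{\vars{F_k}}$, hence $v \in \keycl{G}{q}$. But $G \attacks{q} v$ forces $v \notin \keycl{G}{q}$, a contradiction. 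So pick $z' \in \key{F_k}\setminus\keycl{G}{q}$. Since $z'$ and $v$ co-occur in the atom $F_k$ of $q$ and $z' \notin \keycl{G}{q}$, any sequence of variables witnessing $G \attacks{q} v$ can be extended by one step to $z'$, showing $G \attacks{q} z'$. As $z'$ has rank $< k$, the induction hypothesis applies to $z'$ and produces the desired $z \in Z$.

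I expect the subclaim $\key{F_k}\not\subseteq\keycl{G}{q}$ to be the only nonroutine point, as it is where the definitions of attack and key-closure interact via FD reasoning; the remainder is bookkeeping — verifying that the rank is well-defined and strictly decreases along the argument, and that a witness of an attack is closed under appending a co-occurring variable lying outside $\keycl{G}{q}$.
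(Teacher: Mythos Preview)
Your proof is correct and follows essentially the same approach as the paper's: both argue by descent through the sequential proof, using the key observation that $\key{F_k}\not\subseteq\keycl{G}{q}$ (which the paper states tersely as ``from $G\attacks{q}x$, it follows that there is $x'\in\key{F_n}$ such that $G\attacks{q}x'$'') to step back from an attacked variable to an earlier one. The only cosmetic difference is the induction parameter---the paper inducts on the length~$n$ of the sequential proof and peels off~$F_n$, whereas you induct on the rank of the attacked variable---but the substance is identical.
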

\begin{proof}
Assume that $G\attacks{q}x$ for some variable $x$ in $Z\cup\lrformula{\bigcup_{i=1}^{n}\vars{F_{i}}}$.
The proof runs by induction on increasing $n$.    
For the basis of the induction, $n=0$, we obtain $x\in Z$ with $G\attacks{q}x$, as desired. 

For the induction step, $n-1\rightarrow n$, assume that the lemma holds for sequential proofs of length~$<n$.
Notice that for every $m\in\{0,1,\ldots,n-1\}$, there is some $w'$ such that  $(F_{1},F_{2},\ldots,F_{m})$ is a sequential proof of $\fdset{q\setminus\{G\}}\models\fd{Z}{w'}$.
Therefore, if $x$ occurs in $Z\cup\lrformula{\bigcup_{i=1}^{n-1}\vars{F_{i}}}$, then the desired result holds by the induction hypothesis.
Assume from here on that $x\notin Z\cup\lrformula{\bigcup_{i=1}^{n-1}\vars{F_{i}}}$.
Then, $x\in\notkey{F_{n}}$.
From $G\attacks{q}x$, it follows that there is $x'\in\key{F_{n}}$ such that $G\attacks{q}x'$.
By the definition of sequential proof, it must be the case that $x'\in Z\cup\lrformula{\bigcup_{i=1}^{n-1}\vars{F_{i}}}$.
Again, the desired result holds by the induction hypothesis.
\end{proof}

\begin{lemma} \label{lem:attacksSameAfterValuation}
    Let $q$ be a query in $\sjfbcq$.
    Let $\theta$ be a valuation over a subset of $\vars{q}$.
    Let $F \in q$ such that for every variable $v \in \domain{\theta}$, $F \nattacks{q} v$.
    Then, for every variable $v \in \vars{q} \setminus \domain{\theta}$,
    $F \attacks{q} v$ if and only if $\theta(F) \attacks{\theta(q)} v$.
\end{lemma}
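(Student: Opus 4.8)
The plan is to prove both directions by transferring attack witnesses between $q$ and $\theta(q)$. Recall that $F\attacks{q}v$ holds precisely when there is a sequence $(x_1,\ldots,x_m)$ of variables with $x_1\in\notkey{F}$, $x_m=v$, no $x_i$ in $\keycl{F}{q}$, and consecutive $x_i,x_{i+1}$ occurring together in some atom of~$q$. Two preliminary remarks make the transfer possible. Since $q$ is self-join-free, $\theta(F)$ differs from every $\theta(G)$ with $G\in q\setminus\{F\}$, so $\theta(q)\setminus\{\theta(F)\}=\theta(q\setminus\{F\})$. Moreover, $\notkey{\theta(F)}=\notkey{F}\setminus\domain{\theta}$, $\key{\theta(F)}=\key{F}\setminus\domain{\theta}$, and two variables $u,v\notin\domain{\theta}$ occur together in an atom of~$\theta(q)$ if and only if they occur together in an atom of~$q$. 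Hence the only substantive differences between attacks in $q$ and in $\theta(q)$ are that a witness in $\theta(q)$ cannot visit a variable of $\domain{\theta}$, and that $\keycl{F}{q}$ is replaced by $\keycl{\theta(F)}{\theta(q)}$.

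I would first pin down $\keycl{\theta(F)}{\theta(q)}$, as follows: for every $x\in\vars{q}\setminus\domain{\theta}$, one has $x\in\keycl{\theta(F)}{\theta(q)}$ if and only if $\fdset{q\setminus\{F\}}\models\fd{\key{F}\cup\domain{\theta}}{x}$. This is proved by moving sequential proofs back and forth: applying $\theta$ atomwise to a sequential proof of $\fdset{q\setminus\{F\}}\models\fd{\key{F}\cup\domain{\theta}}{x}$ and intersecting the variable sets involved with $\vars{q}\setminus\domain{\theta}$ yields a sequential proof over $\theta(q)\setminus\{\theta(F)\}$ of $\fd{\key{\theta(F)}}{x}$; conversely, self-join-freeness lets one recover the pre-image atoms, and re-adding $\domain{\theta}$ on the left turns a sequential proof over $\theta(q)\setminus\{\theta(F)\}$ into one over $q\setminus\{F\}$. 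The only facts used here are $\key{\theta(G)}=\key{G}\setminus\domain{\theta}$, $\vars{\theta(G)}=\vars{G}\setminus\domain{\theta}$, and $x\notin\domain{\theta}$.

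The crux---and the only step that uses the hypothesis on $F$---is to show that $F$ attacks no variable of $\keycl{\theta(F)}{\theta(q)}\cup\domain{\theta}$. For variables of $\domain{\theta}$ this is precisely the hypothesis. Now suppose $x\in\keycl{\theta(F)}{\theta(q)}$ and, for contradiction, $F\attacks{q}x$. By the characterization just established there is a sequential proof $(G_1,\ldots,G_l)$ of $\fdset{q\setminus\{F\}}\models\fd{\key{F}\cup\domain{\theta}}{x}$, and $x$, being its right-hand side, lies in $\key{F}\cup\domain{\theta}\cup\bigcup_{i}\vars{G_i}$; so Lemma~\ref{lem:backwardinseqproof}, applied with removed atom $F$ and left-hand side $Z=\key{F}\cup\domain{\theta}$, yields $F\attacks{q}z$ for some $z\in\key{F}\cup\domain{\theta}$. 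But $F$ attacks nothing in $\keycl{F}{q}\supseteq\key{F}$ (a witness never reaches a variable of $\keycl{F}{q}$) and, by hypothesis, nothing in $\domain{\theta}$: contradiction. I expect this to be the main obstacle, since it is where the hypothesis, the characterization, and Lemma~\ref{lem:backwardinseqproof} must be combined; everything else is routine witness chasing.

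Finally, the two implications. If $F\attacks{q}v$, fix a witness $(x_1,\ldots,x_m)$ with $x_m=v$; no $x_i$ can lie in $\domain{\theta}$, for otherwise the prefix $(x_1,\ldots,x_i)$ would witness $F\attacks{q}x_i$ with $x_i\in\domain{\theta}$, contrary to the hypothesis that $F\nattacks{q}v'$ for all $v'\in\domain{\theta}$. Hence every $x_i$ is a variable of $\theta(q)$, consecutive ones occur together in atoms of $\theta(q)$, and $x_1\in\notkey{\theta(F)}$; and since each prefix $(x_1,\ldots,x_i)$ witnesses $F\attacks{q}x_i$, the crux gives $x_i\notin\keycl{\theta(F)}{\theta(q)}$. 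So $(x_1,\ldots,x_m)$ is an attack witness in $\theta(q)$, that is, $\theta(F)\attacks{\theta(q)}v$. Conversely, if $\theta(F)\attacks{\theta(q)}v$, fix a witness $(y_1,\ldots,y_k)$ with $y_k=v$; its variables all lie outside $\domain{\theta}$, consecutive ones occur together in atoms of $q$, and $y_1\in\notkey{F}$; and no $y_i$ lies in $\keycl{F}{q}$, since $\fdset{q\setminus\{F\}}\models\fd{\key{F}}{y_i}$ would imply $\fdset{q\setminus\{F\}}\models\fd{\key{F}\cup\domain{\theta}}{y_i}$ and hence, by the characterization, $y_i\in\keycl{\theta(F)}{\theta(q)}$, contradicting the choice of witness. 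Thus the same sequence is an attack witness in $q$, that is, $F\attacks{q}v$.
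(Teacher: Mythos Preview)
Your proof is correct and follows essentially the same approach as the paper: both arguments pivot on the observation that $x\in\keycl{\theta(F)}{\theta(q)}$ amounts to $\fdset{q\setminus\{F\}}\models\fd{\key{F}\cup\domain{\theta}}{x}$, and then invoke Lemma~\ref{lem:backwardinseqproof} with $Z=\key{F}\cup\domain{\theta}$ to derive the contradiction $F\attacks{q}z$ for some $z\in\key{F}\cup\domain{\theta}$. The paper dismisses the $\impliedby$ direction as ``Easy'' and leaves the characterization of $\keycl{\theta(F)}{\theta(q)}$ implicit, whereas you spell both out; but the substance is identical.
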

\begin{proof}
    \framebox{$\implies$} 
    Let $v \in \vars{q} \setminus \domain{\theta}$ such that
    $F \attacks{q} v$.
    Since $F \attacks{q} v$, there is a sequence of variables $\tuple{w_1, \ldots, w_n}$ such that $w_1 \in \notkey{F}$, $w_n = v$, every two adjacent variables appear together in some atom of $q$, and for each $j \in \{1, \ldots, n\}$, we have $F \attacks{q} w_j$, which implies $w_j \not \in \keycl{F}{q}$ and $w_j \not \in \domain{\theta}$.
    If for every $j \in \{1, \ldots, n\}$, $w_j \not \in \keycl{\theta(F)}{\theta(q)}$, then it follows that $\theta(F) \attacks{\theta(q)} v$.
    Assume, for the sake of contradiction, that there is a $j \in \{1, \ldots, n\}$ such that $w_j \in \keycl{\theta(F)}{\theta(q)}$.
    It follows that, $\fdset{\theta(q) \setminus \{\theta(F)\}} \models \fd{\key{\theta(F)}}{w_j}$.
Consequently, $\fdset{q \setminus \{F\}} \models \fd{\key{F} \cup \domain{\theta}}{w_j}$.
Since $F\attacks{q}w_{j}$, it follows by Lemma~\ref{lem:backwardinseqproof} that $F\attacks{q}w$ for some variable $w \in \key{F} \cup \domain{\theta}$, a contradiction.
    \framebox{$\impliedby$}
    Easy.
\end{proof}

\section{Proof of Lemma~\ref{lem:anysort}}

\begin{proof}[Proof of Lemma~\ref{lem:anysort}]
Assume that $\theta$ is an $n$-$\forall$embedding of~$q$ in $\db$ relative to $(q,\sort_{1})$.
We first show that the desired result holds for a topological sort $(q,\sort_{2})$ obtained by swapping two adjacent atoms, say~$F_{k}$ and $F_{k+1}$.  
To this end, let 
\[
\setlength{\arraycolsep}{1pt}
\begin{array}{lll}
    (q,\sort_{2}) & =(F_{1},\ldots,F_{k-1}, F_{k},F_{k+1}, F_{k+2},\ldots,F_{n}),\\
     &\phantom{F_{1},\ldots,F_{k-1},F_{k+1}}\nearrow\!\!\!\!\!\!\nwarrow\\
    (q,\sort_{1}) & =(F_{1},\ldots,F_{k-1},F_{k+1}, F_{k}, F_{k+2},\ldots,F_{n}).
\end{array}
\]
We have $F_{k}\nattacks{q}F_{k+1}$ and $F_{k+1}\nattacks{q}F_{k}$.

Let $\theta'$ be the restriction of $\theta$ to $\bigcup_{i=1}^{k-1}\vars{F_{i}}$.
Let $p=\{\theta'(F_{k})$, $\theta'(F_{k+1})$, \dots, $\theta'(F_{n})\}$, and for $i\in\{1,2,\ldots,n-k+1\}$, let $G_{i}=\theta'(F_{i+k-1})$.
Let 
\begin{align*}
    (p,\sort_{2}') &=\tuple{G_{1},G_{2},G_{3},\ldots,G_{n-k+1}},\\
    (p,\sort_{1}') &=\tuple{G_{2},G_{1},G_{3},\ldots,G_{n-k+1}},
\end{align*}
where the topological sorts $\sort_{2}'$ and $\sort_{1}'$ are inherited from $\sort_{2}$ and~$\sort_{1}$.
It is known that these are indeed topological sorts of $p$'s attack graph.
Informally, $p$ is obtained from $q$ by first applying the partial valuation $\theta'$ on~$q$ and then omitting the facts $\theta'(F_{1})$, $\theta'(F_{2})$, \dots, $\theta'(F_{k-1})$.
We have $G_{1}\nattacks{p}G_{2}$ and $G_{2}\nattacks{p}G_{1}$.
Let $\mu$ be the restriction of $\theta$ to $\vars{p}$.
It can be verified from our definitions that 
\begin{equation}\label{eq:muforallpath}
\mbox{$\mu$ is a $(n-k+1)$-$\forall$embedding of~$p$ in~$\db$ relative to  $(p,\sort_{1}')$}.
\end{equation}
It suffices now to show that $\mu$ is a $(n-k+1)$-$\forall$embedding of~$p$ in~$\db$ relative to  $(p,\sort_{2}')$.

For $i\in\{1,2\}$, let $G_{i}=R_{i}(\underline{\vec{x}_{i}},\vec{y}_{i})$, in which $\vec{x}_{i}\vec{y}_{i}$ needs not be constant-free, and let $\mu(G_{i})=R_{i}(\underline{\vec{a}_{i}},\vec{b}_{i})$.
By~\eqref{eq:muforallpath}, 
\begin{align}
\db & \cqamodels\substitute{p}{\vec{x}_{2}}{\vec{a}_{2}}\label{eq:inst}\\
\db & \cqamodels\substitute{\lrformula{p\setminus\{G_{2}\}}}{\vec{x}_{2}\vec{y}_{2}\vec{x}_{1}}{\vec{a}_{2}\vec{b}_{2}\vec{a}_{1}}\label{eq:instonetwo}
\end{align}

To show the desired result for one swap, it suffices to show $\db\cqamodels\substitute{p}{\vec{x}_{1}}{\vec{a}_{1}}$ and $\db\cqamodels\substitute{\lrformula{p\setminus\{G_{1}\}}}{\vec{x}_{1}\vec{y}_{1}\vec{x}_{2}}{\vec{a}_{1}\vec{b}_{1}\vec{a}_{2}}$.
In what follows, a fact is said to be \emph{relevant} for a conjunctive query in a database instance if there is an embedding that maps a query atom to the fact.
\begin{description}
\item[Proof that $\db\cqamodels\substitute{p}{\vec{x}_{1}}{\vec{a}_{1}}$.]
Let $\rep$ be an arbitrary repair of~$\db$.
Let $\vec{b}_{2}'$ be the (unique) sequence of constants, of length~$\card{\vec{y}_{2}}$, such that $R_{2}(\underline{\vec{a}_{2}},\vec{b}_{2}')\in\rep$.
From~\eqref{eq:inst}, it follows $\rep\models\substitute{p}{\vec{x}_{2}}{\vec{a}_{2}}$, hence $R_{2}(\underline{\vec{a}_{2}},\vec{b}_{2}')$ is relevant for~$p$ in~$\rep$.
From~\eqref{eq:instonetwo}, it follows
\begin{equation*}
    \rep\models\substitute{\lrformula{p\setminus\{G_{2}\}}}{\vec{x}_{2}\vec{y}_{2}\vec{x}_{1}}{\vec{a}_{2}\vec{b}_{2}\vec{a}_{1}}.
\end{equation*}
Consequently,
\begin{equation*}
    \lrformula{\rep\setminus\{R_{2}(\underline{\vec{a}_{2}},\vec{b}_{2}')\}}\cup\{R_{2}(\underline{\vec{a}_{2}},\vec{b}_{2})\}\models\substitute{p}{\vec{x}_{2}\vec{y}_{2}\vec{x}_{1}}{\vec{a}_{2}\vec{b}_{2}\vec{a}_{1}}.
\end{equation*}
It follows
\begin{equation*}
    \lrformula{\rep\setminus\{R_{2}(\underline{\vec{a}_{2}},\vec{b}_{2}')\}}\cup\{R_{2}(\underline{\vec{a}_{2}},\vec{b}_{2})\}\models\substitute{p}{\vec{x}_{1}}{\vec{a}_{1}}.
\end{equation*}
Since $G_{2}\nattacks{p}G_{1}$, it follows by~\cite[Lemma~B.1]{DBLP:journals/tods/KoutrisW17} that
\begin{equation}\label{eq:relevant}
    \rep\models\substitute{p}{\vec{x}_{1}}{\vec{a}_{1}}.
\end{equation}
\item[Proof that $\db\cqamodels\substitute{\lrformula{p\setminus\{G_{1}\}}}{\vec{x}_{1}\vec{y}_{1}\vec{x}_{2}}{\vec{a}_{1}\vec{b}_{1}\vec{a}_{2}}$.]
Let $\rep$ be an arbitrary repair of~$\db$.
Let $\vec{b}_{1}'$ be the (unique) sequence of constants, of length~$\card{\vec{y}_{1}}$, such that $R_{1}(\underline{\vec{a}_{1}},\vec{b}_{1}')\in\rep$.
From~\eqref{eq:relevant}, it follows that $R_{1}(\underline{\vec{a}_{1}},\vec{b}_{1}')$ is relevant for~$p$ in~$\rep$.
By~\eqref{eq:inst},
\begin{equation*}
    \rep\models\substitute{p}{\vec{x}_{2}}{\vec{a}_{2}}.
\end{equation*}
Since $G_{1}\nattacks{p}G_{2}$, it follows by~\cite[Lemma~B.1]{DBLP:journals/tods/KoutrisW17} that
\begin{equation*}
    \lrformula{\rep\setminus\{R_{1}(\underline{\vec{a}_{1}},\vec{b}_{1}')\}}\cup\{R_{1}(\underline{\vec{a}_{1}},\vec{b}_{1})\}\models\substitute{p}{\vec{x}_{2}}{\vec{a}_{2}}.
\end{equation*}
Consequently,
\begin{equation*}
    \rep\models\substitute{\lrformula{p\setminus\{G_{1}\}}}{\vec{x}_{1}\vec{y}_{1}\vec{x}_{2}}{\vec{a}_{1}\vec{b}_{1}\vec{a}_{2}}.
\end{equation*}
This concludes the proof for one swap.
\end{description}
To conclude the proof of Lemma~\ref{lem:anysort}, it suffices to observe that every topological sort can be obtained from $(q,\sort_{1})$ by zero, one, or more swaps, and that any swap results in an $n$-$\forall$embedding.
\end{proof}

\section{Proof of Lemma~\ref{lem:allfo}}\label{sec:frugalformula}

\begin{proof}[Proof of Lemma~\ref{lem:allfo}]
We define a formula $\frugalformula{q}(\vec{u})$ such that for every database instance $\db$,
$\db\models\frugalformula{q}(\vec{c})$ if and only if the valuation $\theta$ over $\vars{\vec{u}}$ such that $\theta(\vec{u})=\vec{c}$ is an $n$-$\forall$embedding in $\db$.

Let $p(\vec{x})$ be a conjunctive query with free variables~$\vec{x}$.
A \emph{consistent first-order rewriting of $p(\vec{x})$} is a first-order formula $\omega(\vec{x})$ such that for every database instance $\db$, for every sequence $\vec{c}$ of constants, of length $\card{\vec{x}}$, we have $\db\cqamodels p(\vec{c})$ if and only if $\db\models\omega(\vec{c})$.
The following problem has been solved in~\cite{DBLP:journals/tods/KoutrisW17}:
given a self-join-free conjunctive query $p(\vec{x})$, decide whether $p(\vec{x})$ has a first-order rewriting, and if affirmative, construct such a a first-order rewriting. 

Let $q$ be a query in $\sjfbcq$ with an acyclic attack graph.
Assume that $q$'s body is $F_{1}\land F_{2}\land\dotsm\land F_{n}$, where the atoms are listed in a topological sort of the attack graph.
For $j\in\{0, 1, 2, \ldots, n\}$, we inductively define a formula $\psi_{j}(\vec{u}_{j})$ expressing that $\vec{u}_{j}$ is a $j$-$\forall$embedding.
The basis of the induction is $\psi_{0}()=\mathsf{true}$.
For the induction step, $j\rightarrow j+1$, the formula $\psi_{j+1}(\vec{u}_{j+1})$ reads as follows:
\[
\psi_{j+1}(\overbrace{\vec{u}_{j},\vec{x}_{j+1},\vec{y}_{j+1}}^{\vec{u}_{j+1}})\defeq
\psi_{j}(\vec{u}_{j})\land\omega_{j+1}(\vec{u}_{j},\vec{x}_{j+1})\land  F_{j+1},
\]
where $\omega_{j+1}(\vec{u}_{j},\vec{x}_{j+1})$ is a consistent first-order rewriting of 
\[\
\exists\vec{y}_{j+1}\exists\vec{x}_{j+2}\exists\vec{y}_{j+2}\dotsm\exists\vec{x}_{n}\exists\vec{y}_{n}\lrformula{F_{j+1}\land F_{j+2}\land\dotsm\land F_{n}}.
\]
It follows from~\cite{DBLP:journals/tods/KoutrisW17} that  $\omega_{j+1}(\vec{u}_{j},\vec{x}_{j+1})$ exists, \revision{and can be constructed in linear time in the length~$\card{q}$ of~$q$}.
Then, our desired formula $\frugalformula{q}(\vec{u}_{n})$ is equal to $\psi_{n}(\vec{u}_{n})$:
\[
\frugalformula{q}(\vec{u}_{n})\defeq\psi_{n}(\vec{u}_{n}).
\]
\revision{
It remains to show the quadratic upper bound on the construction of $\psi_{n}(\vec{u}_{n})$.  If $T_{j+1}$ denotes the time for constructing $\psi_{j+1}$, then $T_{0}=1$ and for some constant~$c$, we have  $T_{j+1}\leq T_{j}+c\cdot\card{q}$.
It follows that $T_{n}\leq n\cdot c\cdot\card{q}$.
Since $n\leq\card{q}$, $T_{n}$ is quadratic in the length of~$q$. 
}
\end{proof}

We illustrated the construction with an example.
\begin{example}
Let $q(x,y,z)=\exists x\exists y\exists z\lrformula{R(\underline{x},y)\land S(\underline{y,z},c)}$.
\begin{align*}
    \psi_{1}(x,y) &=\mathsf{true}\land\omega_{1}(x)\land R(\underline{x},y),\\
    \psi_{2}(x,y,z) &=\psi_{1}(x,y)\land\omega_{2}(x,z)\land S(\underline{y,z},c),
\end{align*}
where
\begin{align*}
\omega_{1}(x) & =
\begin{array}[t]{ll}
\exists y R(\underline{x},y)\land\\
\forall y\lrformula{R(\underline{x},y)\rightarrow\exists z\lrformula{S(\underline{y,z},c)\land\forall u\lrformula{S(\underline{y,z},u)\rightarrow u=c}}},
\end{array}\\
\omega_{2}(x,z) & = S(\underline{y,z},c)\land\forall u\lrformula{S(\underline{y,z},u)\rightarrow u=c}.
\end{align*}
Putting all together, with some simplifications:
\[
\psi_{2}(x,y,z)=
\setlength{\arraycolsep}{1pt}
\begin{array}[t]{ll}
& R(\underline{x},y)\land S(\underline{y,z},c)\\
\land & \forall y\lrformula{R(\underline{x},y)\rightarrow\exists z\lrformula{S(\underline{y,z},c)\land\forall u\lrformula{S(\underline{y,z},u)\rightarrow u=c}}}\\
\land & \forall u\lrformula{S(\underline{y,z},u)\rightarrow u=c}.
\end{array}
\]
\myqed
\end{example}

\section{Proof of Lemma~\ref{lem:frugal}}

We introduce some helping constructs and lemmas.

\begin{definition}
    Let $q$ be a query in $\sjfbcq$.
    Let $\db$ be a consistent database instance.
    Let $V$ be a non-empty subset of $\vars{q}$.
    We define $\rifi{q}{\db}{V}$ as the $\subseteq$-minimal set of valuations over~$V$ that contains $\theta$ if $\theta$ can be extended to a valuation $\mu$ over $\vars{q}$ such that $(\db,\mu)\models q$.
\myqed    
\end{definition}

The following helping lemmas extend~\cite[Lemma~4.4]{DBLP:journals/tods/KoutrisW17}.

\begin{lemma}\label{lem:rifi}
    Let $q$ be a query in $\sjfbcq$.
    Let $\db$ be a database instance.
    Let $V$ be a non-empty subset of $\vars{q}$ such that no variable of~$V$ is attacked in~$q$.
    Then, there is a repair $\rep$ of $\db$ such that \[\rifi{q}{\rep}{V} = \bigcap_{\sep\in \repairs{\db}} \rifi{q}{\sep}{V}.\]
\end{lemma}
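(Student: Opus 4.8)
The plan is to dispose of one inclusion for free and then reduce the remaining content to a merging property of repairs, thereby extending \cite[Lemma~4.4]{DBLP:journals/tods/KoutrisW17} from a single unattacked variable to a set of them. One direction is immediate: for \emph{every} repair $\rep$ of $\db$ we have $\bigcap_{\sep\in\repairs{\db}}\rifi{q}{\sep}{V}\subseteq\rifi{q}{\rep}{V}$, since $\rep$ is among the $\sep$'s. So it suffices to produce a single repair $\rep$ with $\rifi{q}{\rep}{V}\subseteq\bigcap_{\sep\in\repairs{\db}}\rifi{q}{\sep}{V}$. Unfolding the definition of $\rifi{\cdot}{\cdot}{\cdot}$, this says that whenever some embedding of $q$ in $\rep$ restricts on $V$ to a valuation $\theta$, then every repair of $\db$ has an embedding of $q$ that restricts on $V$ to $\theta$ as well --- equivalently, $\db\cqamodels q'$ where $q'$ is the Boolean query obtained from $\exists\vec{u}\,q(\vec{u})$ by freezing each variable of $V$ to its $\theta$-image. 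Since $\repairs{\db}$ is finite, such a ``$\subseteq$-least'' repair exists as soon as the family $\{\rifi{q}{\sep}{V}\mid\sep\in\repairs{\db}\}$ is downward directed, so it is enough to prove the following pairwise merging property: for any two repairs $\rep_{1},\rep_{2}$ of $\db$ there is a repair $\rep_{3}$ of $\db$ with $\rifi{q}{\rep_{3}}{V}\subseteq\rifi{q}{\rep_{1}}{V}\cap\rifi{q}{\rep_{2}}{V}$. (Given this, fix an enumeration $\sep_{1},\ldots,\sep_{m}$ of the repairs, merge $\sep_{1}$ with $\sep_{2}$, merge the result with $\sep_{3}$, and so on; the repair obtained after the last merge is the desired $\rep$.)

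I would prove the merging property by induction on the number $d$ of blocks of $\db$ on which $\rep_{1}$ and $\rep_{2}$ disagree. If $d=0$ then $\rep_{1}=\rep_{2}$ and we take $\rep_{3}=\rep_{1}$. If $d\geq1$, choose a disagreement block $B$ consisting of $F$-facts for an atom $F$ of $q$ that is minimal in the attack graph of $q$ among all atoms owning a disagreement block --- when the attack graph is acyclic (the case needed in the application to Lemma~\ref{lem:frugal}), $B$ is simply the disagreement block whose atom comes first in a topological sort. Writing $\{f_{1}\}=B\cap\rep_{1}$ and $\{f_{2}\}=B\cap\rep_{2}$, the core claim is that, for at least one of the two directions, flipping the chosen fact in one repair toward the other does not enlarge the corresponding reified set; say, after possibly renaming, $\rep_{1}'\defeq\lrformula{\rep_{1}\setminus\{f_{1}\}}\cup\{f_{2}\}$ satisfies $\rifi{q}{\rep_{1}'}{V}\subseteq\rifi{q}{\rep_{1}}{V}$. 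Granting this, $\rep_{1}'$ and $\rep_{2}$ disagree on only $d-1$ blocks, so the induction hypothesis yields a repair $\rep_{3}$ with $\rifi{q}{\rep_{3}}{V}\subseteq\rifi{q}{\rep_{1}'}{V}\cap\rifi{q}{\rep_{2}}{V}\subseteq\rifi{q}{\rep_{1}}{V}\cap\rifi{q}{\rep_{2}}{V}$, as wanted.

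It remains to justify the ``harmless flip'' claim, which is where the hypothesis that no variable of $V$ is attacked is used essentially, and which I expect to be the main obstacle. Two ingredients are involved. First, by Lemma~\ref{lem:attacksSameAfterValuation}, since no atom of $q$ attacks any variable of $V$, freezing $V$ to the constants of a valuation $\theta$ leaves the attack relation on the remaining variables unchanged; this lets us argue about one frozen query $q'$ at a time rather than about $\rifi{\cdot}{\cdot}{V}$ directly, and in particular preserves acyclicity when it holds. Second, because $F$ was chosen attack-minimal among the atoms with a disagreement block, $F$ does not attack any atom whose block still differs between the two repairs, so \cite[Lemma~B.1]{DBLP:journals/tods/KoutrisW17} (the swapping lemma already invoked in the proof of Lemma~\ref{lem:anysort}) applies: replacing the $F$-fact witnessing an embedding of $q'$ by the key-equal fact $f_{1}$ or $f_{2}$ does not affect the satisfaction of the rest of $q'$. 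Pushing this swap through the --- possibly cyclic, in the general statement --- attack structure while tracking which embeddings are created and which are destroyed is the delicate part; everything else (the reduction to pairs, the finiteness argument, and the bookkeeping with the definition of $\rifi{\cdot}{\cdot}{\cdot}$) is routine.
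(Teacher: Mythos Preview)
The paper does not actually prove this lemma; it merely records it as an extension of \cite[Lemma~4.4]{DBLP:journals/tods/KoutrisW17} and then uses it as a black box in the proof of Lemma~\ref{lem:frugalRepair}. So there is no paper proof to compare against, and your proposal is an attempt to fill in what the paper leaves to the cited reference.

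Your overall plan --- reduce to a pairwise merging property and induct on the number of disagreement blocks --- is a reasonable shape for such an argument, and is in the spirit of how the cited lemma is established. However, the justification of the ``harmless flip'' step contains a concrete error and a genuine gap.

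The error: you choose $F$ attack-\emph{minimal} among disagreement atoms (``comes first in a topological sort''), and then conclude that ``$F$ does not attack any atom whose block still differs.'' This is backwards: minimality means no disagreement atom attacks $F$, not that $F$ attacks none of them. More importantly, this whole consideration is beside the point for invoking \cite[Lemma~B.1]{DBLP:journals/tods/KoutrisW17}: that lemma requires the swapped atom not to attack the variables being held fixed, which here is $V$ --- and $F\nattacks{q}v$ for every $v\in V$ holds for \emph{every} atom~$F$ by hypothesis, regardless of which block you pick.

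The gap: what \cite[Lemma~B.1]{DBLP:journals/tods/KoutrisW17} does require, as its use in the proofs of Lemma~\ref{lem:anysort} and Claim~\ref{cla:rifibis} makes explicit, is that the fact being swapped out be \emph{relevant} (i.e., used by some embedding) in the repair you are modifying. To get $\rifi{q}{\rep_1'}{V}\subseteq\rifi{q}{\rep_1}{V}$ you need $f_1$ relevant in $\rep_1$; otherwise the swap can create new $V$-projections (take $q=R(\underline{x},y)\land S(\underline{y})$, $V=\{x\}$, $\db=\{R(a,1),R(a,2),S(2)\}$: flipping $\rep_1=\{R(a,1),S(2)\}$ to $\rep_2$ enlarges $\rifi{\cdot}{\cdot}{V}$ from $\emptyset$ to $\{x\mapsto a\}$). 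Your ``at least one direction works'' clause would then need the claim that at least one of $f_1,f_2$ is relevant in its own repair, or that when neither is, both flips are harmless --- neither of which you argue. This relevance issue, not the attack-order of disagreement blocks, is where the proof actually lives, and it is precisely the ``delicate part'' you flag but leave open.
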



The following lemma extends Lemma~\ref{lem:rifi} by allowing $V$ to contain variables~$v$ that are attacked in~$q$, provided that $v$ is only attacked by atoms whose corresponding relation in $\db$ is consistent.

\begin{lemma}\label{lem:frugalRepair}
    Let $q$ be a query in $\sjfbcq$.
    Let $\db$ be a database instance.
    Let $V$ be a non-empty subset of $\vars{q}$ such that for every $v\in V$, for every atom $F=R(\underline{\vec{x}}, \vec{y})$ in $q$, if $F\attacks{q}v$, then 
    the $R$-relation of $\db$ is consistent.
    Then, there is a repair $\rep$ of $\db$ such that 
    \[\rifi{q}{\rep}{V} = \bigcap_{\sep \in\repairs{\db}} \rifi{q}{\sep}{V}.\]
\end{lemma}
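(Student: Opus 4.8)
The plan is to reduce Lemma~\ref{lem:frugalRepair} to Lemma~\ref{lem:rifi} by a syntactic transformation that turns every atom attacking a variable of~$V$ into an ``inert'' full-key atom. Let $A$ be the set of atoms $F$ of $q$ such that $F\attacks{q}v$ for some $v\in V$; by hypothesis, for each $F=R(\underline{\vec{x}},\vec{y})\in A$, the $R$-relation of $\db$ is consistent. For each such~$F$, introduce a fresh full-key relation name~$\hat{R}$ of the same arity, replace the atom~$F$ by $\hat{F}\defeq\hat{R}(\underline{\vec{x},\vec{y}})$, and replace every $R$-fact $R(\underline{\vec{a}},\vec{b})$ of $\db$ by $\hat{R}(\underline{\vec{a},\vec{b}})$; atoms outside $A$ are left untouched. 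Call the resulting query $\hat{q}$ and instance $\hat{\db}$. Consistency of the original $R$-relation makes this renaming injective, and a full-key relation is automatically consistent, so the blocks of $\hat{\db}$ are those of $\db$ except that every relation in $A$ is now trivially consistent. Note $\vars{q}=\vars{\hat{q}}$, and two variables occur together in an atom of $\hat{q}$ iff they do so in~$q$.

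First I would record two routine facts. (i) $\repairs{\hat{\db}}$ is in bijection with $\repairs{\db}$: the blocks outside $A$ are unchanged, and the (singleton) blocks of the relations in $A$ are merely renamed; write $\hat{\sep}$ for the repair of $\hat{\db}$ corresponding to $\sep\in\repairs{\db}$. (ii) Since the transformation is a fact-level bijective renaming, a valuation $\mu$ over $\vars{q}$ is an embedding of~$q$ into~$\sep$ iff it is an embedding of~$\hat{q}$ into~$\hat{\sep}$; hence $\rifi{q}{\sep}{V}=\rifi{\hat{q}}{\hat{\sep}}{V}$ for every $\sep\in\repairs{\db}$. Granting the key claim below that \emph{no variable of~$V$ is attacked in~$\hat{q}$}, Lemma~\ref{lem:rifi} applied to $\hat{q},\hat{\db},V$ yields a repair $\hat{\rep}$ of $\hat{\db}$ with $\rifi{\hat{q}}{\hat{\rep}}{V}=\bigcap_{\hat{\sep}\in\repairs{\hat{\db}}}\rifi{\hat{q}}{\hat{\sep}}{V}$; translating through the bijection and~(ii), the repair $\rep$ of $\db$ with $\hat{\rep}$ as its image satisfies $\rifi{q}{\rep}{V}=\bigcap_{\sep\in\repairs{\db}}\rifi{q}{\sep}{V}$, as desired.

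It remains to prove the claim, which I expect to be the main obstacle. Suppose toward a contradiction that $G\attacks{\hat{q}}v$ for some $v\in V$. If $G$ is (the image of) an atom of $A$, then $\notkey{G}=\emptyset$ in $\hat{q}$ since $\hat{F}$ is full-key, so $G$ attacks nothing---contradiction. Hence $G$ is an atom outside $A$, so $G\nattacks{q}v$. Promoting the atoms of $A$ to full-key only removes non-trivial functional dependencies, so $\fdset{\hat{q}\setminus\{G\}}$ is logically weaker than $\fdset{q\setminus\{G\}}$ and $\keycl{G}{\hat{q}}\subseteq\keycl{G}{q}$; since the occurrence graph is unchanged and $G\nattacks{q}v$, the $\hat{q}$-attack path from $\notkey{G}$ to $v$ must run through some variable $w\in\keycl{G}{q}\setminus\keycl{G}{\hat{q}}$. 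Analysing a sequential proof of $\fd{\key{G}}{w}$ from $\fdset{q\setminus\{G\}}$, it must use some atom $F\in A$ (a proof using only atoms outside $A\cup\{G\}$ would also witness $w\in\keycl{G}{\hat{q}}$); taking the first such $F$, its prefix proves $\fd{\key{G}}{\key{F}}$ using only atoms outside $A\cup\{G\}$, so $\key{F}\subseteq\keycl{G}{\hat{q}}$ and, by closure under $q$'s dependencies, $\vars{F}\subseteq\keycl{G}{q}$. The final step is to combine $\vars{F}\subseteq\keycl{G}{q}$ with $F\attacks{q}v_{0}$ for some $v_{0}\in V$ (which holds because $F\in A$) to conclude that $G$ already attacks some variable of $V$ in $q$---contradicting $G\notin A$---by pushing the attack from $F$ back to $G$ along the witnessing path (for which Lemmas~\ref{lem:backwardinseqproof} and~\ref{lem:attackSameAtom} are the natural tools). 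I anticipate this last step to need the most care, since one must rule out that the detour through $\keycl{G}{q}$ absorbs the entire attack path of~$F$.
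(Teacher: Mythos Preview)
Your central claim---that no variable of~$V$ is attacked in~$\hat{q}$---is false, and the ``final step'' you flagged as delicate is in fact impossible. Here is a four-atom counterexample. Take
\[
q=\{\,G(\underline{x},y),\ F(\underline{y},z),\ H(\underline{z},v),\ K(\underline{x},z)\,\},\qquad V=\{v\}.
\]
One checks $\keycl{K}{q}=\{x,y,z,v\}$ and $\keycl{G}{q}=\{x,z,v\}$, so neither $G$ nor $K$ attacks~$v$; the atoms attacking~$v$ are exactly $F$ and~$H$, hence $A=\{F,H\}$. After your transformation, $\hat{q}=\{G(\underline{x},y),\hat{F}(\underline{y,z}),\hat{H}(\underline{z,v}),K(\underline{x},z)\}$, and now $\keycl{K}{\hat{q}}=\{x,y\}$ (the only surviving nontrivial FD in $\hat{q}\setminus\{K\}$ is $\fd{x}{y}$). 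The path $(z,v)$ witnesses $K\attacks{\hat{q}}v$. So an atom outside~$A$---whose relation may well be inconsistent---attacks a variable of~$V$ in~$\hat{q}$, and Lemma~\ref{lem:rifi} does not apply.

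Tracing your argument on this example exposes exactly where it breaks: you correctly locate $w=z\in\keycl{K}{q}\setminus\keycl{K}{\hat{q}}$, correctly find $F\in A$ with $\key{F}=\{y\}\subseteq\keycl{K}{\hat{q}}$ and $\vars{F}\subseteq\keycl{K}{q}$, and correctly note $F\attacks{q}v$. But the attack path $(z,v)$ of~$F$ lies \emph{entirely} inside $\keycl{K}{q}=\{x,y,z,v\}$, so there is no way to ``push the attack back'' to~$K$; indeed $K$ attacks nothing in~$q$. The phenomenon you worried about---the detour through $\keycl{G}{q}$ absorbing the whole attack path of~$F$---is not a corner case to be ruled out but the generic behavior.

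The underlying reason is directional: promoting atoms to full-key \emph{removes} functional dependencies, which can only \emph{shrink} $\keycl{G}{\cdot}$ and hence \emph{create} new attacks. The paper's proof goes the opposite way: for every atom whose relation is consistent it \emph{adds} a fresh copy $R'(\underline{\vec{x}},\vec{y})$ with the same key, forming $q'=q\cup M$. This only adds FDs, so $\keycl{G}{q'}\supseteq\keycl{G}{q}$ for every~$G$, and no new attacks can arise; meanwhile each atom of~$A$ now has a twin supplying its own FD in $q'\setminus\{F\}$, so it attacks nothing. The bijection between repairs and the equality of $\rifi{\cdot}{\cdot}{V}$ that you set up are fine and are essentially what the paper uses; it is only the choice of transformation that needs to change.
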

\begin{proof}
    Let $M$ be a set containing a fresh atom $R'(\underline{\vec{x}}, \vec{y})$ for every atom $R(\underline{\vec{x}}, \vec{y})$ in $q$ such that the $R$-relation of $\db$ is consistent.
    Let $q' = q \cup M$.
    Let $\db'$ be the smallest database instance that includes~$\db$ and includes $\{R'(\underline{\vec{a}}, \vec{c}) \mid R(\underline{\vec{a}}, \vec{c})\in\db\}$ for every atom $R'(\underline{\vec{x}}, \vec{y})$ in $M$.
    It follows from the hypotheses of the lemma that no variable of $V$ is attacked in~$q'$.
    For every repair $\rep$ of $\db$, let~$f(\rep)$ the smallest database instance that includes~$\rep$ and includes $\{R'(\underline{\vec{a}},\vec{b})\mid R(\underline{\vec{a}},\vec{b})\in\db\}$ for every atom $R'(\underline{\vec{x}}, \vec{y})$ in~$M$.
    One can easily verify that every $n$-embedding of~$q$ in $\rep$ is also an $n$-embedding of~$q'$ in~$f(\rep)$, and vice versa.
    Thus, for every repair $\rep$ of $\db$, $\rifi{q}{\rep}{V} = \rifi{q'}{f(\rep)}{V}$.
    Since $\db'\setminus\db$ is consistent by construction, we have that $f:\repairs{\db}\rightarrow\repairs{\db'}$ is a bijective mapping.
    Since no variable of $V$ is attacked in~$q'$, it follows by Lemma~\ref{lem:rifi} that there is a repair $\rep'$ of $\db'$ such that  
       \[\rifi{q'}{\rep'}{V} = \bigcap_{\sep'\in\repairs{\db'}} \rifi{q'}{\sep'}{V}.\]
    From what precedes, it follows
    \begin{align*}
        \rifi{q}{f^{-1}(\rep')}{V} & = \bigcap_{\sep'\in\repairs{\db'}} \rifi{q}{f^{-1}(\sep')}{V}\\
                                   & = \bigcap_{\sep\in\repairs{\db}} \rifi{q}{\sep}{V}
    \end{align*}
    Then, $f^{-1}(\rep')$ is a repair of $\db$ that proves the lemma. 
\end{proof}



The proof of Lemma~\ref{lem:frugal} follows.

\begin{proof}[Proof of Lemma~\ref{lem:frugal}]
The desired result is obvious if there is a repair~$\rep^{*}$ such that $\rep^{*}\models\neg\exists\vec{u}\lrformula{q(\vec{u}})$.
Assume from here on that for every repair $\sep$, there is $\vec{c}$ such that $\sep\models q(\vec{c})$.

For every $i\in\{1,\ldots,n\}$, let $R_{i}$ be the relation name of~$F_{i}$.
Let~$\rep$ be a repair of $\db$.
We show that for every $i\in\{0,1,2,\ldots,n\}$, there is a repair $\sep_{i}$ of~$\db$ such that:
\begin{enumerate}[label=(\alph*)]
    \item\label{it:condSuperFrugal1} every $i$-embedding of~$q$ in $\sep_i$ is an $i$-embedding of~$q$ in $\rep$; and
    \item\label{it:condSuperFrugal2} every $i$-embedding of~$q$ in $\sep_i$ is an $i$-$\forall$embedding of~$q$ in~$\db$.
\end{enumerate}
The construction runs by induction on increasing~$i$.
For the induction basis, $i=0$, let $\sep_{0}$ be an arbitrary repair of~$\db$. 
From our assumption that every repair of $\db$ satisfies $\exists\vec{u}\lrformula{q(\vec{u}})$, it follows that the empty set is both a $0$-embedding in $\rep$ and a $0$-$\forall$embedding in~$\db$, as desired.

For the induction step, $i\rightarrow i+1$, the induction hypothesis is that there is a repair $\sep_{i}$ of~$\db$ that satisfies conditions~\ref{it:condSuperFrugal1} and ~\ref{it:condSuperFrugal2}.
Let $\db^{(i)}$ be the smallest database instance that includes $\sep_{i}$ and includes, for every $j\in\{i+1, i+2,\ldots, n\}$, the $R_{j}$-relation of $\db$. 
For every $m\in\{1,2,\ldots,n\}$, if $F_{m}\attacks{q}v$ with $v\in\bigcup_{j=1}^{i}\vars{F_{j}}$, then $m\leq i$, and hence, by construction, the $R_{m}$-relation of $\db^{(i)}$ is consistent.
Then, by Lemma~\ref{lem:frugalRepair}, there is a repair $\sep^*$ of $\db^{(i)}$ such that 
\begin{equation}\label{eq:rifiintersect}
\rifi{q}{\sep^{*}}{\vec{u}_{i}\vec{x}_{i+1}}=\bigcap_{\tep\in\repairs{\db^{(i)}}}\rifi{q}{\tep}{\vec{u}_{i}\vec{x}_{i+1}}.
\end{equation}

\begin{claim}\label{cla:rifi}
    For every $\nu\in\rifi{q}{\sep^{*}}{\vec{u}_{i}\vec{x}_{i+1}}$,
    we have $(\db,\nu)\cqamodels\{F_{i+1}, F_{i+2}, \ldots, F_{n}\}$.
\end{claim}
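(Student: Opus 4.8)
Fix a valuation $\nu\in\rifi{q}{\sep^{*}}{\vec{u}_{i}\vec{x}_{i+1}}$ and an arbitrary repair $\rep'$ of $\db$; it suffices to show $(\rep',\nu)\models\{F_{i+1},\dots,F_{n}\}$, since $\rep'$ ranges over all repairs of $\db$. The guiding idea is that $\db^{(i)}$ was constructed so as to coincide with $\db$ exactly on the relations $R_{i+1},\dots,R_{n}$, and to be consistent (one fact per block, inherited from $\sep_{i}$) on every other relation. Consequently, the restriction of $\rep'$ to $R_{i+1},\dots,R_{n}$ can be completed to a repair $\tep$ of $\db^{(i)}$, and equation~\eqref{eq:rifiintersect} transfers membership of $\nu$ from $\sep^{*}$ to $\tep$; since $F_{i+1},\dots,F_{n}$ mention only $R_{i+1},\dots,R_{n}$, the facts that witness these atoms in $\tep$ are already present in $\rep'$.

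\textbf{Step 1: the mirroring repair $\tep$.}
I would let $\tep$ be the instance that agrees with $\sep_{i}$ on every relation name other than $R_{i+1},\dots,R_{n}$, and agrees with $\rep'$ on each of $R_{i+1},\dots,R_{n}$; this is an unambiguous prescription precisely because $q$ is self-join-free, so $R_{1},\dots,R_{n}$ are pairwise distinct. I then verify that $\tep$ is a repair of $\db^{(i)}$: it is consistent because $\sep_{i}$ and $\rep'$ are and distinct relation names cannot clash; it satisfies $\tep\subseteq\db^{(i)}$ because $\sep_{i}\subseteq\db^{(i)}$ and, for $j>i$, the $R_{j}$-relation of $\db^{(i)}$ is the full $R_{j}$-relation of $\db$, which contains that of $\rep'$; and it is $\subseteq$-maximal because it selects exactly one fact from each block of $\db^{(i)}$ --- on a relation $S\notin\{R_{i+1},\dots,R_{n}\}$ every block of $\db^{(i)}$ is a singleton coming from $\sep_{i}$, while on $R_{j}$ with $j>i$ the blocks of $\db^{(i)}$ are exactly those of $\db$, from each of which the repair $\rep'$ picks one fact. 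By construction, $\tep$ and $\rep'$ have the same $R_{j}$-relation for every $j\in\{i+1,\dots,n\}$.

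\textbf{Step 2: transferring $\nu$ and concluding.}
Since $\tep\in\repairs{\db^{(i)}}$, equation~\eqref{eq:rifiintersect} yields $\nu\in\rifi{q}{\tep}{\vec{u}_{i}\vec{x}_{i+1}}$. By the definition of $\mathsf{Reify}$, the valuation $\nu$ extends to a valuation $\mu$ over $\vars{q}$ such that $\{\mu(F_{1}),\dots,\mu(F_{n})\}\subseteq\tep$. Each of $\mu(F_{i+1}),\dots,\mu(F_{n})$ is a fact of one of the relations $R_{i+1},\dots,R_{n}$, on which $\tep$ and $\rep'$ coincide; hence $\{\mu(F_{i+1}),\dots,\mu(F_{n})\}\subseteq\rep'$. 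Taking the restriction of $\mu$ to $\domain{\nu}\cup\vars{\{F_{i+1},\dots,F_{n}\}}$ as the witnessing extension of $\nu$, we obtain $(\rep',\nu)\models\{F_{i+1},\dots,F_{n}\}$. As $\rep'$ was an arbitrary repair of $\db$, this gives $(\db,\nu)\cqamodels\{F_{i+1},\dots,F_{n}\}$, which is the claim.

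\textbf{Expected main obstacle.}
The argument is essentially bookkeeping, so I do not anticipate a genuine obstacle; the one place that needs care is the verification that $\tep$ is truly a repair of $\db^{(i)}$ --- namely that ``agree with $\sep_{i}$ below level $i$ and with $\rep'$ above it'' is a consistent recipe, which is exactly where self-join-freeness of $q$ enters, and that the block structure of $\db^{(i)}$ on $R_{i+1},\dots,R_{n}$ is identical to that of $\db$, so that $\rep'$ indeed contributes one fact per block. Once $\tep$ is available, Step~2 is a direct appeal to~\eqref{eq:rifiintersect} and to the definition of $\mathsf{Reify}$.
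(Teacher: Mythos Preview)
Your proposal is correct and follows essentially the same approach as the paper: build a hybrid repair of $\db^{(i)}$ that agrees with $\sep_{i}$ on $R_{1},\dots,R_{i}$ and with the given repair of $\db$ on $R_{i+1},\dots,R_{n}$, then invoke~\eqref{eq:rifiintersect} and the fact that $F_{i+1},\dots,F_{n}$ only mention the latter relations. The paper phrases this by contradiction and asserts ``Clearly, $\tep'$ is a repair of $\db^{(i)}$'' where you spell out the verification; otherwise the arguments are the same.
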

\begin{proof}
 Assume for the sake of a contradiction that there is a repair $\tep$ of $\db$ such that 
 \begin{equation}\label{eq:rificontra}
 (\tep,\nu)\not\models\{F_{i+1}, F_{i+2}, \ldots, F_{n}\}.
 \end{equation}
 Let $\tep'$ be the smallest database instance such that
 \begin{itemize}
 \item 
 for every $j\in\{1,2,\ldots,i\}$, $\tep'$ contains all $R_{j}$-facts of $\sep_{i}$; and
 \item 
 for every $j\in\{i+1,i+2,\ldots,n\}$, $\tep'$ contains all $R_{j}$-facts of $\tep$.
 \end{itemize}
Clearly, $\tep'$ is a repair of $\db^{(i)}$.
Thus, by~\eqref{eq:rifiintersect}, $\nu\in\rifi{q}{\tep'}{\vec{u}_{i}\vec{x}_{i+1}}$.
Hence, $(\tep',\nu)\models\{F_{1}, F_{2}, \ldots, F_{n}\}$,
and thus $(\tep',\nu)\models\{F_{i+1}, F_{i+2}, \ldots, F_{n}\}$.
Since $\tep$ and $\tep'$ contain the same $R_{j}$-facts for every $j\in\{i+1,i+2,\ldots,n\}$, it follows $(\tep,\nu)\models\{F_{i+1}, F_{i+2}, \ldots, F_{n}\}$, which contradicts~\eqref{eq:rificontra}.
This concludes the proof of Claim~\ref{cla:rifi}.
\end{proof}

Let $\theta_{1}, \theta_{2}, \ldots, \theta_{g}$ enumerate all $n$-embeddings of~$q$ in~$\sep^{*}$. 
For every $k\in\{1,2,\ldots,g\}$, let $A_{k}$ be the (unique) $R_{i+1}$-fact of $\rep$ that is key-equal to $\theta_{k}(F_{i+1})$.
Let 
\[\sep_{i+1}\defeq\lrformula{\sep^{*}\setminus\set{\theta_{k}(F_{i+1})}_{k=1}^{g}}\cup\{A_{1}, A_{2}, \ldots, A_{g}\}.\] 
Clearly, $\sep_{i+1}$ is a repair of $\db^{(i)}$.

\begin{claim}\label{cla:rifibis}
    $\rifi{q}{\sep^{*}}{\vec{u}_{i}\vec{x}_{i+1}}=\rifi{q}{\sep_{i+1}}{\vec{u}_{i}\vec{x}_{i+1}}$.
\end{claim}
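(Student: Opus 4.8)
The plan is to reduce the claim to a sequence of single ``key-equal swaps'' of $R_{i+1}$-facts and to invoke \cite[Lemma~B.1]{DBLP:journals/tods/KoutrisW17} after each swap. Recall first that for a consistent instance $D$, the set $\rifi{q}{D}{\vec{u}_{i}\vec{x}_{i+1}}$ consists exactly of the restrictions to $\vec{u}_{i}\vec{x}_{i+1}$ of the $n$-embeddings of $q$ in $D$, so it suffices to show that $\sep^{*}$ and $\sep_{i+1}$ admit the same set of such restricted embeddings.

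The key structural fact I would establish first is that $F_{i+1}$ attacks no variable of $\vec{u}_{i}\vec{x}_{i+1}$, while $\key{F_{i+1}}\subseteq\vec{u}_{i}\vec{x}_{i+1}$. Indeed, since $(F_{1},\dots,F_{n})$ is a topological sort of the acyclic attack graph, there is no edge $F_{i+1}\attacks{q}F_{j}$ with $j\leq i$, hence $F_{i+1}$ attacks no variable of $\vec{u}_{i}=\bigcup_{j\leq i}\vars{F_{j}}$; moreover $\key{F_{i+1}}\subseteq\keycl{F_{i+1}}{q}$, so $F_{i+1}$ attacks no variable of $\key{F_{i+1}}\supseteq\vec{x}_{i+1}$; and by definition of $\vec{x}_{i+1}$ every variable of $\key{F_{i+1}}$ either occurs already in $\vec{u}_{i}$ or belongs to $\vec{x}_{i+1}$. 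These are precisely the hypotheses under which \cite[Lemma~B.1]{DBLP:journals/tods/KoutrisW17} guarantees that replacing a \emph{relevant} $R_{i+1}$-fact by a key-equal one does not change the set of restrictions to $\vec{u}_{i}\vec{x}_{i+1}$ of the $n$-embeddings.

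Next I would organize the passage from $\sep^{*}$ to $\sep_{i+1}$ into single swaps. Because $q$ is self-join-free, $\sep^{*}$ and $\sep_{i+1}$ agree on every relation except $R_{i+1}$, and $\sep_{i+1}$ is obtained from $\sep^{*}$ by replacing each $R_{i+1}$-fact of $\sep^{*}$ that is relevant for $q$ (these are exactly the facts $\theta_{k}(F_{i+1})$) by the key-equal $R_{i+1}$-fact of $\rep$. List the distinct such facts as $B_{0},\dots,B_{m-1}$, let $A^{(t)}$ be the $R_{i+1}$-fact of $\rep$ key-equal to $B_{t}$, and set $D_{l}\defeq\lrformula{\sep^{*}\setminus\{B_{0},\dots,B_{l-1}\}}\cup\{A^{(0)},\dots,A^{(l-1)}\}$, so that $D_{0}=\sep^{*}$, $D_{m}=\sep_{i+1}$, each $D_{l}$ is consistent, and $D_{l+1}$ differs from $D_{l}$ only by the key-equal swap of $B_{l}$ for $A^{(l)}$. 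The one point that needs care is that $B_{l}$ is still relevant for $q$ in $D_{l}$: taking an $n$-embedding $\theta$ of $q$ in $\sep^{*}$ with $\theta(F_{i+1})=B_{l}$, every atom $\theta(F_{j})$ with $j\neq i+1$ is an $R_{j}$-fact with $R_{j}\neq R_{i+1}$, hence still in $D_{l}$, and $\theta(F_{i+1})=B_{l}\in D_{l}$ since $B_{l}\notin\{B_{0},\dots,B_{l-1}\}$; thus $\theta$ is an $n$-embedding of $q$ in $D_{l}$. Applying \cite[Lemma~B.1]{DBLP:journals/tods/KoutrisW17} to each swap then gives $\rifi{q}{D_{l}}{\vec{u}_{i}\vec{x}_{i+1}}=\rifi{q}{D_{l+1}}{\vec{u}_{i}\vec{x}_{i+1}}$, and chaining these equalities for $l=0,\dots,m-1$ yields the claim.

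I expect the main obstacle to lie exactly in the bookkeeping of this reduction: one must be sure that the fact removed at each stage remains relevant (so that \cite[Lemma~B.1]{DBLP:journals/tods/KoutrisW17} applies), and one must check that this lemma---as already used twice earlier in this appendix for a single swap---does deliver \emph{equality} of the reification sets under the no-attack hypothesis verified above, rather than merely one inclusion; if it is stated only as an implication, the argument would be run once in each direction.
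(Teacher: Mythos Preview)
Your swap-by-swap argument is correct for the $\supseteq$-inclusion (i.e., $\rifi{q}{\sep_{i+1}}{\vec{u}_{i}\vec{x}_{i+1}}\subseteq\rifi{q}{\sep^{*}}{\vec{u}_{i}\vec{x}_{i+1}}$) and matches the paper's, which merely says this direction is ``analogous to \cite[Lemma~B.1]{DBLP:journals/tods/KoutrisW17}'' using relevance of each $\theta_{k}(F_{i+1})$ in $\sep^{*}$; your decomposition into single swaps and the check that $B_{l}$ stays relevant in $D_{l}$ (via self-join-freeness) spell this out carefully.

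Your worry about the $\subseteq$-inclusion is justified but your proposed fix does not close it. Lemma~B.1 is one-directional (swapping a relevant fact for a key-equal one can only shrink the set of restricted embeddings), so running the chain in reverse would require that each $A^{(l)}$ be relevant in $D_{l+1}$, which you do not verify and which is not obvious a priori. You are overlooking a much simpler route, which is exactly what the paper uses: $\sep_{i+1}$ is itself a repair of $\db^{(i)}$, so equation~\eqref{eq:rifiintersect} immediately yields
\[
\rifi{q}{\sep^{*}}{\vec{u}_{i}\vec{x}_{i+1}}=\bigcap_{\tep\in\repairs{\db^{(i)}}}\rifi{q}{\tep}{\vec{u}_{i}\vec{x}_{i+1}}\subseteq\rifi{q}{\sep_{i+1}}{\vec{u}_{i}\vec{x}_{i+1}}.
\]
This is the ``straightforward'' $\subseteq$-inclusion in the paper's proof and is the missing piece in your proposal.
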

\begin{proof}
    The $\subseteq$-inclusion is straightforward.
    The proof of the $\supseteq$-inclusion is analogous to~\cite[Lemma B.1.]{DBLP:journals/tods/KoutrisW17}, by remarking that each fact in $\{\theta_{k}(F_{i+1})\}_{k=1}^{g}$ is relevant for~$q$ in~$\sep^{*}$. 
\end{proof}

We are now ready to show that conditions~\ref{it:condSuperFrugal1} and~\ref{it:condSuperFrugal2} hold true for~$i+1$. 
To this end, let $\eta$ be an arbitrary $n$-embedding of~$q$ in $\sep_{i+1}$.
Consequently, $\restrict{\eta}{\vec{u}_{i}\vec{x}_{i+1}}\in\rifi{q}{\sep_{i+1}}{\vec{u}_{i}\vec{x}_{i+1}}$.
By Claims~\ref{cla:rifi} and~\ref{cla:rifibis}, it follows
\begin{equation}\label{eq:rifiremainder}
(\db,\restrict{\eta}{\vec{u}_{i}\vec{x}_{i+1}})\cqamodels\{F_{i+1},F_{i+2},\ldots,F_{n}\}.
\end{equation}
Since for every $j\in\{1,2,\ldots,i\}$, the set of $R_{i}$-facts of $\sep_{i+1}$ is identical to that of $\sep_{i}$ (and identical to that of $\db^{(i)}$), it follows that $\restrict{\eta}{\vec{u}_{i}}$ is an $i$-embedding of~$q$ in~$\sep_{i}$.
By the induction hypothesis,
\begin{enumerate}[label=(\Alph*)]
    \item\label{it:rind} $\restrict{\eta}{\vec{u}_{i}}$ an $i$-embedding of~$q$ in $\rep$; and
    \item\label{it:bind} $\restrict{\eta}{\vec{u}_{i}}$ is an $i$-$\forall$embedding of~$q$ in~$\db$. 
\end{enumerate}
From~\ref{it:bind} and~\eqref{eq:rifiremainder}, it follows that $\restrict{\eta}{\vec{u}_{i+1}}$ is an $(i+1)$-$\forall$embedding of~$q$ in~$\db$.

\begin{claim}\label{cla:insider}
There is $k\in\{1,2,\ldots,g\}$ such that $\eta(F_{i+1})=A_{k}$.
\end{claim}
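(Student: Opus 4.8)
The plan is to trace the action of $\eta$ on the atom $F_{i+1}$ back to one of the $n$-embeddings $\theta_{1},\ldots,\theta_{g}$ of $q$ in $\sep^{*}$. The key observation is that any embedding of $q$ into $\sep_{i+1}$ and any embedding of $q$ into $\sep^{*}$ that agree on the prefix $\vec{u}_{i}\vec{x}_{i+1}$ are forced to agree on $\key{F_{i+1}}$, because by construction every variable of $\key{F_{i+1}}$ occurs in $\vec{u}_{i}\vec{x}_{i+1}$.

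First I would set $\nu\defeq\restrict{\eta}{\vec{u}_{i}\vec{x}_{i+1}}$. As already noted above, $\nu\in\rifi{q}{\sep_{i+1}}{\vec{u}_{i}\vec{x}_{i+1}}$, so by Claim~\ref{cla:rifibis} we also have $\nu\in\rifi{q}{\sep^{*}}{\vec{u}_{i}\vec{x}_{i+1}}$. By the definition of $\mathsf{Reify}$, the valuation $\nu$ extends to a valuation $\mu$ over $\vars{q}$ with $(\sep^{*},\mu)\models q$; that is, $\mu$ is an $n$-embedding of $q$ in $\sep^{*}$, and hence $\mu=\theta_{k}$ for some $k\in\{1,\ldots,g\}$.

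Next I would check that $\theta_{k}=\mu$ and $\eta$ agree on every variable occurring in $\key{F_{i+1}}$. Indeed, by construction $\vec{u}_{i}$ contains all variables of $F_{1},\ldots,F_{i}$ and $\vec{x}_{i+1}$ contains the remaining variables of $\key{F_{i+1}}$, so every variable of $\key{F_{i+1}}$ occurs in $\vec{u}_{i}\vec{x}_{i+1}$; since $\mu$ extends $\nu$ and $\restrict{\eta}{\vec{u}_{i}\vec{x}_{i+1}}=\nu$, the valuations $\mu$ and $\eta$ coincide on these variables (and both fix any constants occurring in the key positions of $F_{i+1}$). Consequently $\theta_{k}(F_{i+1})=\mu(F_{i+1})$ and $\eta(F_{i+1})$ are key-equal.

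Finally I would invoke consistency of $\sep_{i+1}$ to conclude: by definition $A_{k}$ is the (unique) $R_{i+1}$-fact of $\rep$ that is key-equal to $\theta_{k}(F_{i+1})$, hence $A_{k}$ is key-equal to $\eta(F_{i+1})$; moreover $A_{k}\in\sep_{i+1}$ by construction, and $\eta(F_{i+1})\in\sep_{i+1}$ because $\eta$ is an embedding of $q$ in $\sep_{i+1}$. Since a repair contains no two distinct key-equal facts, $\eta(F_{i+1})=A_{k}$, as required. I do not anticipate a genuine obstacle here: the only point that needs care is the bookkeeping showing $\key{F_{i+1}}\subseteq\vars{\vec{u}_{i}\vec{x}_{i+1}}$, which is immediate from the definitions of the sequences $\vec{u}_{\ell}$ and $\vec{x}_{\ell}$, together with the appeal to Claim~\ref{cla:rifibis}, which has already been established.
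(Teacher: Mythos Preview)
Your proof is correct, but it follows a different line from the paper's argument. The paper proceeds by contradiction: assuming $\eta(F_{i+1})\notin\{A_{1},\ldots,A_{g}\}$, it observes that $\sep_{i+1}$ and $\sep^{*}$ differ only on $R_{i+1}$-facts (by self-join-freeness, all $\eta(F_{j})$ with $j\neq i+1$ already lie in $\sep^{*}$), so $\eta$ is itself an $n$-embedding of $q$ in $\sep^{*}$, hence $\eta=\theta_{k}$ for some $k$; then $\theta_{k}(F_{i+1})\in\sep_{i+1}$ forces $\theta_{k}(F_{i+1})=A_{k}$, a contradiction. Your approach is direct rather than by contradiction: you use Claim~\ref{cla:rifibis} to pull $\restrict{\eta}{\vec{u}_{i}\vec{x}_{i+1}}$ back to $\sep^{*}$, pick any $\theta_{k}$ extending it, and then conclude via key-equality and consistency of $\sep_{i+1}$. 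Your route is arguably cleaner in that it reuses Claim~\ref{cla:rifibis} (already established) and avoids the implicit appeal to self-join-freeness needed to transport the non-$R_{i+1}$ atoms; the paper's route, on the other hand, does not need to invoke $\mathsf{Reify}$ or Claim~\ref{cla:rifibis} at all and works purely from the set-theoretic construction of $\sep_{i+1}$.
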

\begin{proof}
    Assume for the sake of a contradiction that  $\eta(F_{i+1})\notin\{A_{1},A_{2},\ldots, A_{g}\}$.
    Then, by the construction of $\sep_{i+1}$ from $\sep^{*}$,
    it follows that~$\eta$ is an $n$-embedding of~$q$ in~$s^{*}$.
    Hence, we can assume $k\in\{1,2,\ldots,g\}$ such that $\eta=\theta_{k}$, and thus $\eta(F_{i+1})=\theta_{k}(F_{i+1})$.
    From $\eta(q)\subseteq\sep_{i+1}$, it follows $\eta(F_{i+1})\in\sep_{i+1}$.
    Hence, $\theta_{k}(F_{i+1})\in\sep_{i+1}$, which can happen only if $\theta_{k}(F_{i+1})=A_{k}$, contradicting $\eta(F_{i+1})\notin\{A_{1},A_{2},\ldots, A_{g}\}$. 
\end{proof}

From~\ref{it:rind} and Claim~\ref{cla:insider}, it follows that 
$\restrict{\eta}{\vec{u}_{i+1}}$ is an $(i+1)$-embedding of~$q$ in $\rep$.
This concludes the induction step.
The proof of~Lemma~\ref{lem:frugal} is concluded by letting $\rep^{*}=\sep_{n}$.
\end{proof}

\section{Equivalence of Superfrugal and $n$-Minimal Repairs}\label{sec:minimality}

\revision{
Let $\exists\vec{u}\lrformula{q(\vec{u})}$ be a Boolean conjunctive query where $q(\vec{u})$ is quantifier-free.
Let $(F_{1}, F_{2}, \ldots, F_{n})$ be a sequence containing all (and only) atoms of~$q(\vec{u})$.
Let $i\in\{1,2,\ldots,n\}$, and $X\defeq\bigcup_{j=1}^{i}\vars{F_{i}}$.
Let~$\db$ be a database instance.
A repair $\rep$ of $\db$ is \emph{$i$-minimal}~\cite{DBLP:conf/icdt/FigueiraPSS23} if there is no repair $\sep$ of $\db$ such that 
\begin{enumerate}
    \item for every valuation $\theta$ over $X$, if $(\sep,\theta)\models q(\vec{u})$, then $(\rep,\theta)\models q(\vec{u})$; 
    and
    \item for some valuation $\mu$ over $X$, $(\rep,\mu)\models q(\vec{u})$ and $(\sep,\mu)\not\models q(\vec{u})$.
\end{enumerate}
An $i$-minimal repair is called $\frugal{X}{q}$-frugal in~\cite{DBLP:journals/tods/KoutrisW17}.

\begin{lemma}\label{lem:minimal}
    Let $q\defeq\exists\vec{u}\lrformula{q(\vec{u})}$ be a query in $\sjfbcq$ with an acyclic attack graph, where $q(\vec{u})$ is quantifier-free.
    Let $(F_{1}, F_{2}, \ldots, F_{n})$ be a topological sort of $q$'s attack graph.
    Let $\db$ be a database instance. Then, 
    \begin{enumerate}[label=(\roman*)]
        \item\label{it:superminimal} every superfrugal repair of $\db$ is $n$-minimal; and
        \item\label{it:minimalsuper} every $n$-minimal repair of $\db$ is superfrugal.
    \end{enumerate}
\end{lemma}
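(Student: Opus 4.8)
The plan is to derive both inclusions purely by combining Lemma~\ref{lem:frugal} and Lemma~\ref{lem:mcs}; no new combinatorics on attack graphs is needed. First I would fix notation: for a repair $\rep$ of $\db$, let $E(\rep)$ denote the set of all embeddings of $q$ in $\rep$, i.e.\ the valuations $\theta$ over $\vec{u}$ with $\rep\models q(\theta(\vec{u}))$, and let $M$ be the set of all $\forall$embeddings of $q$ in $\db$. Two translations make the statement transparent. Since the distinguished variable set in the definition of $n$-minimality is $\bigcup_{j=1}^{n}\vars{F_j}=\vars{q}=\vars{\vec{u}}$, a repair $\rep$ is $n$-minimal if and only if no repair $\sep$ of $\db$ satisfies $E(\sep)\subsetneq E(\rep)$; and $\rep$ is superfrugal if and only if $E(\rep)\subseteq M$. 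I would also record two elementary facts: (a) $E(\rep)\models\fdset{q}$ for every repair $\rep$, because two distinct embeddings of $\rep$ violating some $\fd{\key{F}}{\vars{F}}$ would give two distinct key-equal $F$-facts in the consistent instance $\rep$; and (b) the reformulation of Lemma~\ref{lem:frugal}, namely that every repair $\rep$ of $\db$ has a superfrugal repair $\rep^{*}$ with $E(\rep^{*})\subseteq E(\rep)$.

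For statement~\ref{it:minimalsuper} ($n$-minimal $\Rightarrow$ superfrugal) I would take an $n$-minimal $\rep$, apply (b) to obtain a superfrugal $\rep^{*}$ with $E(\rep^{*})\subseteq E(\rep)$, and observe that the inclusion cannot be strict: a strict inclusion would make $\rep^{*}$ a witness that $\rep$ is not $n$-minimal. Hence $E(\rep)=E(\rep^{*})\subseteq M$, so $\rep$ is superfrugal.

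For statement~\ref{it:superminimal} (superfrugal $\Rightarrow$ $n$-minimal) I would argue by contradiction: assume $\rep$ is superfrugal but some repair $\sep$ has $E(\sep)\subsetneq E(\rep)$. Applying (b) to $\sep$ yields a superfrugal repair $\sep^{*}$ with $E(\sep^{*})\subseteq E(\sep)\subsetneq E(\rep)$. By the first part of Lemma~\ref{lem:mcs}, $E(\sep^{*})$ is an MCS of $M$. But $E(\rep)\subseteq M$ by superfrugality of $\rep$, $E(\rep)\models\fdset{q}$ by fact (a), and $E(\sep^{*})\subsetneq E(\rep)$; this contradicts the $\subseteq$-maximality of the MCS $E(\sep^{*})$. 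Therefore no such $\sep$ exists, i.e.\ $\rep$ is $n$-minimal.

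Once Lemmas~\ref{lem:frugal} and~\ref{lem:mcs} are available, the rest is essentially bookkeeping. The main place to be careful — and what I expect to be the only real obstacle — is the translation layer: checking that ``$n$-minimal'', as phrased via valuations over $X$ in~\cite{DBLP:conf/icdt/FigueiraPSS23}, really coincides with ``$\subseteq$-minimal set of (full) embeddings'', and that the degenerate case in which some repair falsifies $\exists\vec{u}\,q(\vec{u})$ causes no trouble. In that degenerate case $M=\emptyset$ and both notions collapse to ``$\rep$ falsifies the query'', so the argument above goes through verbatim; I would point this out explicitly rather than introduce a case split.
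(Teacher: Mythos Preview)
Your proof is correct. Part~\ref{it:minimalsuper} matches the paper's argument exactly. For part~\ref{it:superminimal} you take a different route: the paper argues directly by picking $\mu\in E(\rep)\setminus E(\sep)$, using the definition of $\forall$embedding to locate the largest index~$\ell$ at which a restriction of~$\mu$ still extends to an embedding in~$\sep$, and then exhibiting an embedding $\nu\in E(\sep)\setminus E(\rep)$ (since $\mu(F_\ell)$ and $\nu(F_\ell)$ are distinct and key-equal), contradicting $E(\sep)\subseteq E(\rep)$. You instead invoke the first part of Lemma~\ref{lem:mcs} as a black box, which is legitimate since that lemma is proved independently of Lemma~\ref{lem:minimal}. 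Your approach is more modular and avoids duplicating the combinatorial step that is already carried out inside the proof of Lemma~\ref{lem:mcs}; the paper's approach keeps Lemma~\ref{lem:minimal} self-contained, depending only on Lemma~\ref{lem:frugal} and the definition of $\forall$embedding. Both are sound.
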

\begin{proof}
\framebox{Proof of~\ref{it:superminimal}}
Let~$\rep$ be a superfrugal repair of $\db$.
Assume for the sake of a contradiction that $\rep$ is not $n$-minimal.
Then, there is a repair $\sep$ of $\db$ such that
\begin{enumerate}[label=(\alph*)]
    \item\label{it:superminimalone} for every valuation $\theta$ over $\vec{u}$, if $(\sep,\theta)\models q(\vec{u})$, then $(\rep,\theta)\models q(\vec{u})$; and
    \item for some valuation $\mu$ over $\vec{u}$, $(\rep,\mu)\models q(\vec{u})$ and $(\sep,\mu)\not\models q(\vec{u})$.
\end{enumerate}
Since $(\rep,\mu)\models q(\vec{u})$ and $\rep$ is superfrugal, it follows that $\mu$ is an $n$-$\forall$embedding of~$q$ in~$\db$.
Let $\ell\in\{1,2,\ldots,n\}$ be the largest index such that for $X\defeq\bigcup_{j=1}^{\ell-1}\vars{F_{j}}$, we have $(\sep,\restrict{\mu}{X})\models q(\vec{u})$.
Let $Y\defeq X\cup\key{F_{\ell}}$.
By the definition of $\forall$embedding, $(\sep,\restrict{\mu}{Y})\models q(\vec{u})$.
Therefore, $\restrict{\mu}{Y}$ can be extended to a valuation $\nu$ over $\vec{u}$ such that $(\sep,\nu)\models q(\vec{u})$.
Since $\mu(F_{\ell})$ and $\nu(F_{\ell})$ are distinct and key-equal, it follows $(\rep,\nu)\not\models q(\vec{u})$, contradicting~\ref{it:superminimalone}.

\framebox{Proof of~\ref{it:minimalsuper}}   
Let $\rep$ be an $n$-minimal repair of $\db$.
By Lemma~\ref{lem:frugal}, there exists a superfrugal repair~$\rep^{*}$ of $\db$ such that for every valuation $\theta$ over~$\vec{u}$, if $(\rep^{*},\theta)\models q(\vec{u})$, then $(\rep,\theta)\models q(\vec{u})$.
Since $\rep$ is $n$-minimal, there is no valuation $\mu$ over~$\vec{u}$ such that $(\rep,\mu)\models q(\vec{u})$ and $(\rep^{*},\mu)\not\models q(\vec{u})$.
It follows that every embedding of $q(\vec{u})$ in $\rep$ is also an embedding of $q(\vec{u})$ in the superfrugal repair $\rep^{*}$, and hence is a $\forall$embedding of~$q$ in~$\db$. 
This concludes the proof.
\end{proof}
}

\section{Proof of Theorem~\ref{the:inexpressible}}

\begin{proof}[Proof of Theorem~\ref{the:inexpressible}]
For a given database instance $\db$, the following are equivalent:
\begin{itemize}
    \item $\interpret{\glbcqa{g()}}{\db}=\bot$;
    \item there is a repair of $\db$ that falsifies  $\exists\vec{y}\lrformula{q(\vec{y})}$.
\end{itemize} 
Assume that the attack graph of $\exists\vec{y}\lrformula{q(\vec{y})}$ has a cycle.
The following problem is known to be $\LOGSPACE$-hard under first-order reductions~\cite{DBLP:journals/tods/KoutrisW17}, and hence not Hanf-local: determine whether a given database instance has a repair that falsifies $\exists\vec{y}\lrformula{q(\vec{y})}$.
It follows from~\cite[Corollary~8.26 and Exercise~8.16]{DBLP:books/sp/Libkin04} that every query in $\withaggr{\FOL}$ is Hanf-local.
It is now correct to conclude that $\glbcqa{g()}$ is not expressible in $\withaggr{\FOL}$. 
\end{proof}

\section{Proof of Lemma~\ref{lem:mcs}}

\begin{proof}[Proof of Lemma~\ref{lem:mcs}]
\framebox{Proof of~\eqref{it:mcs1}.}
Let $\rep$ be a superfrugal repair of $\db$. Let $N$ be the set of all embeddings of $q$ in~$\rep$.  
    Assume for the sake of a contradiction that $N$ is not an MCS of $M$.
    By the definition of superfrugal repair, every element in $N$ is a $\forall$embedding of $q$ in $\db$, and therfore $N \subseteq M$.
    Since $N \models \fdset{q}$ but $N$ is not an MCS of $M$, there is an MCS $N^*$ of $M$ such that $N \subsetneq N^*\subseteq M$.
    Then,  we can assume a valuation $\theta$ in $N^{*}\setminus N$.
    Let $\tuple{F_1, \ldots, F_n}$ be a topological sort of $q$'s attack graph.
    Let $\ell$ be the greatest integer in $\{0, \ldots, n\}$ such that $\theta(\{F_1, \ldots, F_\ell\}) \subseteq \rep$.
    If $\ell = n$, then $\theta \in N$, a contradiction.
    Assume next that $\ell < n$.
    Let $\gamma$ be the restriction of $\theta$ to $\vars{\{F_1, \ldots, F_\ell\}} \cup \key{F_{\ell+1}}$.
    By the definition of $\forall$embedding, we have that $(\rep, \gamma) \models \{F_{\ell+1}, \ldots, F_n\}$.
    Thus, there is an extension $\theta'$ of $\gamma$ such that $\theta' \in N$ and $\{\theta, \theta'\} \not \models \fd{\key{F_{\ell+1}}}{\vars{F_{\ell+1}}}$.
    Since $N \subsetneq N^*$, we also have that $\theta' \in N^*$.
    From $\theta,\theta'\in N^{*}$ and $\{\theta, \theta'\} \not \models \fd{\key{F_{\ell+1}}}{\vars{F_{\ell+1}}}$, it follows $N^* \not \models \fdset{q}$, contradicting that $N^*$ is an MCS.
    We conclude by contradiction that \eqref{it:mcs1} holds true.

    \framebox{Proof of~\eqref{it:mcs2}.}
    Let $N$ be an MCS of $M$.
    Since $N \models \fdset{q}$, there is a repair $\rep_0$ of $\db$ such that every element in~$N$ is an embedding of $q$ in $\rep_0$.
    Conversely, since $N$ is an MCS, it contains every $\forall$embedding $\theta$ of $q$ in $\db$ such that $\theta(q) \subseteq \rep_0$.
    By Lemma~\ref{lem:frugal}, there is a superfrugal repair $\rep$ of~$\db$ such that every embedding of $q$ in $\rep$ is also an embedding of $q$ in $\rep_0$.
    Let $N^*$ be the set of all embeddings of $q$ in $\rep$.
    By the definition of superfrugal repair, every element in $N^*$ is a $\forall$embedding of $q$ in~$\db$.
    Consequently, $N^* \subseteq N$.
    By~\eqref{it:mcs1}, $N^*$ is an MCS of $M$, and therefore $N^{*}$ cannot be a strict subset of~$N$.
    Consequently, $N^* = N$.
    So $\rep$ is a repair of $\db$ such that the set of embeddings of~$q$ in~$\rep$ is exactly~$N$, which concludes the proof.
    Note that the repair that proves~\eqref{it:mcs2} is superfrugal.
\end{proof}

\section{Proof of Theorem~\ref{the:expressible}}\label{sec:proofexpressible}

Let $g()\defeq\AGG(r)\leftarrow q(\vec{u})$ be a numerical query in $\withaggr{\sjfbcq}$ such that~$\agagg$ is monotone and associative, and the attack graph of $\exists\vec{y}\lrformula{q(\vec{u})}$ is acyclic.
We need to show that $\glbcqa{g()}$ is expressible in $\withaggr{\FOL}$.
To improve readability, we will write $\agplus$ to denote  $\agagg$, and use our notational convention $q\defeq\exists \vec{u} \lrformula{q(\vec{u})}$.

Let $\db$ be a database instance.
Let $M$ be the set of all $\forall$embeddings of~$q$ in~$\db$.
By Lemma~\ref{lem:allfo}, $M$ can be calculated in $\FOL$.
By Corollary~\ref{cor:mcs}, 
\begin{equation}\label{eq:cormcsproofs}
    \interpret{\glbcqa{g()}}{\db}=\min_{N\in\mymcs{M}{q}}\agplus\lrformula{\bag{\theta(r)\mid \theta\in N}}.
\end{equation}
The proof of Theorem~\ref{the:expressible} proceeds by showing that the right-hand expression of~\eqref{eq:cormcsproofs} can be expressed in~$\withaggr{\FOL}$.
The technical treatment is subdivided into four sections: after some preliminaries in Section~\ref{subsec:prel}, we show two important helping lemmas, called \emph{Decomposition Lemma} in Section~\ref{subsec:deco}, and \emph{Consistent Extension Lemma} in Section~\ref{subsec:exte}. Finally, the proof of Theorem~\ref{the:expressible} is provided in Section~\ref{subsec:main}.

\subsection{Preliminaries}\label{subsec:prel}

\begin{definition}[Branch]\label{def:branch}
    Let $q$ be a query in $\sjfbcq$ with an acyclic attack graph.
    Let $\tuple{F_1, \ldots, F_n}$ be a topological sort of $q$'s attack graph.
    Let $\ell\in\{0,1,\ldots,n-1\}$.
    Let $\db$ be a database instance.
    Let~$\theta$ be an $\ell$-$\forall$embedding.
    A valuation~$\gamma$ that extends~$\theta$ is called a \emph{branch (of $\theta$)} if both the following hold: 
    \begin{enumerate}[label=(\Alph*)]
        \item\label{it:branchunattacked}  each variable in $\domain{\gamma}\setminus\domain{\theta}$ is unattacked  in the query $\theta(\{F_{\ell+1}, \ldots, F_n\})$; and
        \item\label{it:branchextend}
        $\gamma$ is included in some $n$-$\forall$embedding of $q$ in $\db$.
    \end{enumerate}
    Such a branch is called an \emph{$(\ell + 1)$-$\forall$key-embedding} if $\domain{\gamma}=\domain{\theta}\cup\key{F_{\ell+1}}$.
\myqed
\end{definition}

\begin{lemma}\label{lem:sameKeyFDSameValue}
    Let $q$ be a query in $\sjfbcq$ such that the attack graph of $q$ is acyclic. Let $(F_{1},\ldots,F_{n})$ be a topological sort of $q$'s attack graph.
    Let $\db$ be a database instance.
    Let $\theta_{1}$ and $\theta_{2}$ be two $\ell$-$\forall$embeddings of $q$ in $\db$ such that $\{\theta_{1},\theta_{2}\}\models\fdset{\{F_{1},\ldots,F_{\ell}\}}$.
    For $i\in\{1,2\}$, let $\gamma_{i}$ be a branch of~$\theta_{i}$ such that $\domain{\gamma_{1}}=\domain{\gamma_{2}}$.
    Then, for every $X,Y\subseteq\domain{\gamma_{1}}$, if $\fdset{q}\models\fd{X}{Y}$, then $\{\gamma_{1},\gamma_{2}\}\models\fd{X}{Y}$.
\end{lemma}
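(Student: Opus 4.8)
The plan is to reduce the statement to a singleton right-hand side and then to trace a sequential proof of the functional dependency, distinguishing whether the atoms used lie in the prefix $F_{1},\ldots,F_{\ell}$ or not. First, it suffices to prove $\set{\gamma_{1},\gamma_{2}}\models\fd{X}{\set{y}}$ for each individual variable $y\in Y$, and this is trivial if $y\in X$; so assume $y\in\domain{\gamma_{1}}\setminus X$. Since each $\gamma_{i}$ is a branch, clause~\ref{it:branchextend} of Definition~\ref{def:branch} lets us extend it to an $n$-$\forall$embedding $\mu_{i}$ of $q$ in $\db$, with $\restrict{\mu_{i}}{\vec{u}_{\ell}}=\theta_{i}$ and $\restrict{\mu_{i}}{\domain{\gamma_{1}}}=\gamma_{i}$. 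Assume $\gamma_{1}(X)=\gamma_{2}(X)$, equivalently that $\mu_{1}$ and $\mu_{2}$ agree on $X$, and set $\tilde{A}\defeq\set{w\in\vars{q}\mid\mu_{1}(w)=\mu_{2}(w)}\supseteq X$; the goal becomes $y\in\tilde{A}$.

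Suppose $y\notin\tilde{A}$. Since $\fdset{q}\models\fd{X}{y}$ and $X\subseteq\tilde{A}$, the $\fdset{q}$-closure of $\tilde{A}$ strictly contains $\tilde{A}$, so there is an atom $F_{j}$ with $\key{F_{j}}\subseteq\tilde{A}$ but $\vars{F_{j}}\not\subseteq\tilde{A}$; since $\mu_{1},\mu_{2}$ then agree on $\key{F_{j}}$ and disagree on some variable of $F_{j}$, we get $\mu_{1}(F_{j})\neq\mu_{2}(F_{j})$. Chasing a minimal closure derivation of $\fd{X}{y}$, we may take $F_{j}$ to be the first atom at which agreement is lost, so that $\key{F_{j}}$ is reached from $X$ by a sequential proof in $q\setminus\set{F_{j}}$ through atoms on whose variables $\mu_{1}$ and $\mu_{2}$ still agree. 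If $j\leq\ell$, then $\key{F_{j}}\subseteq\vec{u}_{\ell}$, hence $\theta_{1}(\key{F_{j}})=\theta_{2}(\key{F_{j}})$, and the hypothesis $\set{\theta_{1},\theta_{2}}\models\fdset{\set{F_{1},\ldots,F_{\ell}}}$ forces $\theta_{1}(\vars{F_{j}})=\theta_{2}(\vars{F_{j}})$, i.e.\ $\mu_{1}(F_{j})=\mu_{2}(F_{j})$ --- a contradiction. Hence $j>\ell$.

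The case $j>\ell$ is where the branch structure is needed, and it is the step I expect to be the main obstacle. The variables of $\domain{\gamma_{1}}\setminus\vec{u}_{\ell}$ are unattacked in $\theta_{i}(\set{F_{\ell+1},\ldots,F_{n}})$ by clause~\ref{it:branchunattacked} of Definition~\ref{def:branch}. I would first show $F_{j}$ attacks no variable of $X$: it attacks nothing in $\vec{u}_{\ell}$, since otherwise it would attack one of $F_{1},\ldots,F_{\ell}$, contradicting that $(F_{1},\ldots,F_{n})$ is a topological sort with $j>\ell$; and it attacks nothing in $\domain{\gamma_{1}}\setminus\vec{u}_{\ell}$, because such variables are unattacked in the instantiated suffix, which (via Lemma~\ref{lem:attacksSameAfterValuation} and the preservation of the relevant attacks between $q$, the suffix $\set{F_{\ell+1},\ldots,F_{n}}$, and its instantiation) excludes an attack by $F_{j}$ in $q$. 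Applying Lemma~\ref{lem:backwardinseqproof} to the sequential proof of $\fd{X}{\key{F_{j}}}$ in $q\setminus\set{F_{j}}$, it follows that $F_{j}$ attacks no variable occurring in that proof. Combining this with the $n$-$\forall$embedding property at index $j$ --- namely $(\db,\restrict{\mu_{i}}{\vec{u}_{j-1}\cup\key{F_{j}}})\cqamodels F_{j}\land\dotsm\land F_{n}$ --- and the acyclicity of the attack graph, one runs a swap argument in the spirit of the proof of Lemma~\ref{lem:anysort} and of~\cite[Lemma~B.1]{DBLP:journals/tods/KoutrisW17}: since no attack from $F_{j}$ reaches the agreed prefix, exchanging the $F_{j}$-fact used by $\mu_{i}$ in a repair witnessing the $\forall$embedding does not perturb the rest, which pins the value of each variable of $F_{j}$ that is still needed to derive $y$ to a value determined solely by the already-agreed prefix; hence $\mu_{1}$ and $\mu_{2}$ agree there, contradicting that the disagreement on $F_{j}$ propagates to $y$.

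Two delicacies will require care. First, the induction must be phrased over the right variable sets rather than over whole atoms: an attacked non-key variable of $F_{j}$ that is irrelevant to the derivation of $y\in\domain{\gamma_{1}}$ may legitimately differ between $\mu_{1}$ and $\mu_{2}$, so one should track only those non-key variables of the later atoms that actually feed into that derivation. Second, one must verify that unattackedness is inherited along the topological suffix, so that the tracked variables are genuinely certain once the agreed prefix is fixed. Once the case $j>\ell$ is closed, we obtain $y\in\tilde{A}$, completing the induction over $Y$ and hence the lemma.
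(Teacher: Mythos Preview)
Your approach is substantially more complicated than needed, and the case $j>\ell$ is left as a sketch whose ``swap argument'' is not actually carried out. The paper's proof avoids all of that work via one key idea you are missing: rather than extending each $\gamma_{i}$ to an $n$-$\forall$embedding $\mu_{i}$ (two $\forall$embeddings need not satisfy $\fdset{q}$ together, which is exactly why your induction becomes painful), the paper extends both $\gamma_{1}$ and $\gamma_{2}$ to ordinary embeddings $\gamma_{1}^{+},\gamma_{2}^{+}$ living in a \emph{single common repair}~$\rep$. Once that is done, $\{\gamma_{1}^{+},\gamma_{2}^{+}\}\models\fdset{q}$ is immediate (two embeddings into a consistent instance always satisfy the key dependencies), and restricting to $\domain{\gamma_{1}}$ finishes the lemma in one line.

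The common repair is obtained as follows. From the branch property (unattackedness of the new variables in the instantiated suffix) together with~\cite[Lemma~4.4]{DBLP:journals/tods/KoutrisW17}, one gets $(\db,\gamma_{i})\cqamodels\{F_{\ell+1},\ldots,F_{n}\}$ for each~$i$. Independently, the hypothesis $\{\theta_{1},\theta_{2}\}\models\fdset{\{F_{1},\ldots,F_{\ell}\}}$ means the fact sets $\theta_{1}(\{F_{1},\ldots,F_{\ell}\})$ and $\theta_{2}(\{F_{1},\ldots,F_{\ell}\})$ are jointly consistent, so there is a repair~$\rep$ containing both. In that~$\rep$, each $\gamma_{i}$ satisfies the suffix (by certainty) and its restriction $\theta_{i}$ already embeds the prefix; hence each $\gamma_{i}$ extends to a full embedding $\gamma_{i}^{+}$ with $\gamma_{i}^{+}(q)\subseteq\rep$. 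Your chase-based case analysis, the appeal to Lemma~\ref{lem:backwardinseqproof}, and the unfinished swap argument are then all unnecessary.
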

\begin{proof}
From~\cite[Lemma~4.4]{DBLP:journals/tods/KoutrisW17}, it follows that for $i\in\{1,2\}$, $(\db,\gamma_{i})\cqamodels\{F_{\ell+1},\ldots,F_{n}\}$.
    Since $\{\theta_1, \theta_2\}\models\fdset{\{F_{1}, \ldots, F_{\ell}\}}$, we can assume a repair $\rep$ of $\db$ such that for $i\in\{1,2\}$,
    \begin{equation}\label{eq:consistentrepair}
    \theta_{i}(\{F_{1},\ldots, F_{\ell}\})\subseteq\rep.
    \end{equation}
    Let $i\in\{1,2\}$.
    Since $(\db,\gamma_{i})\cqamodels\{F_{\ell+1},\ldots,F_{n}\}$, we have $(\rep,\gamma_{i})\models\{F_{\ell+1},\ldots,F_{n}\}$.
    Hence, there exists a valuation $\gamma_{i}^{+}$ over $\vars{q}$ that extends $\gamma_{i}$ such that  $\gamma_{i}^{+}(\{F_{\ell+1}, \ldots, F_{n}\})\subseteq\rep$.
From~\eqref{eq:consistentrepair}, it follows $\gamma_{i}^{+}(\{F_{1},\ldots,F_{n}\})\subseteq\rep$.   Consequently, $\{\gamma_{1}^{+}, \gamma_{2}^{+}\}\models\fdset{q}$.
Let $X,Y\subseteq\domain{\gamma_{1}}$ such that $\fdset{q}\models\fd{X}{Y}$.
Since $\{\gamma_{1}^{+}, \gamma_{2}^{+}\}\models\fdset{q}$, it follows $\{\gamma_{1}^{+}, \gamma_{2}^{+}\}\models\fd{X}{Y}$, hence $\{\gamma_{1}, \gamma_{2}\}\models\fd{X}{Y}$.
\end{proof}

\begin{definition}\label{def:mtheta}
    Let $q$ be a query in $\sjfbcq$.
    Let $\agplus$ be an aggregate operator that is monotone and associative.
    Let $\db$ be a database instance.
    Let $\theta$ be a valuation over some subset of $\vars{q}$ that can be extended to a $\forall$embedding of $q$ in~$\db$.
    We write $\extendthree{\theta}{q}{\db}$ (or simply $\extend{\theta}$ if $q$ and~$\db$ are clear from the context) for the set of $\forall$embeddings of $q$ in~$\db$ that extend $\theta$.
    The rational number~$v$ defined by
    \[ v \defeq \min_{N \in \mymcs{\extend{\theta}}{q}} \bigset{ \agplus\lrformula{\bag{\theta'(r) \mid \theta' \in N}}}\]
    is called the \emph{$\agplus$-minimal value for $\theta$ in $\db$};
    and each MCS $N^*$ of $\extend{\theta}$ satisfying $\agplus\lrformula{\bag{\mu(r) \mid \mu \in N^*}}=v$ is called \emph{$\agplus$-minimal}.
     We say that an MCS $N^*$ of $\extend{\theta}$ is $\agcount$-minimal if for every MCS $N$ of~$M$, we have $\card{N^*}\leq\card{N}$.
\myqed    
\end{definition}

\subsection{Proof of \emph{Decomposition Lemma} (Lemma~\ref{lem:unionIsOptimal})}\label{subsec:deco}

Using the notation $\vec{u}_{\ell}$ defined in Section~\ref{sec:superfrugal}, the following lemma states that for every $(\ell+1)$-$\forall$key-embedding~$\gamma$, if an MCS of $\extend{\restrict{\gamma}{\vec{u}_{\ell}}}$ is restricted to those valuations that include $\gamma$, the result is an  MCS of $\extend{\gamma}$.

 \begin{lemma} \label{lem:keybranchismcs}
    Let $q$, $(F_{1},\ldots,F_{n})$, $\ell$, and $\db$ be as in Definition~\ref{def:branch}.
    Let $\theta$ be an $\ell$-$\forall$embedding of~$q$ in~$\db$, and let~$N$ be an MCS of $\extend{\theta}$.
    Let $\gamma$ be an $(\ell$+1)-$\forall$key-embedding of $q$ in $\db$ that extends~$\theta$.
    Let $N_{\gamma}$ be the subset of $N$ containing all (and only) $\forall$embeddings that extend~$\gamma$.
    Then, $N_\gamma$ is an MCS of~$\extend{\gamma}$.
\end{lemma}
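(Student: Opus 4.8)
The plan is to verify the three conditions defining ``$N_\gamma$ is an MCS of $\extend{\gamma}$'': that $N_\gamma\subseteq\extend{\gamma}$, that $N_\gamma\models\fdset{q}$, and that $N_\gamma$ is $\subseteq$-maximal among the consistent subsets of $\extend{\gamma}$. The first two are immediate. Since $\gamma$ extends $\theta$, every $\forall$embedding extending $\gamma$ also extends $\theta$, so $\extend{\gamma}\subseteq\extend{\theta}$; hence $N_\gamma=\{\nu\in N\mid\gamma\subseteq\nu\}$ is a subset of $\extend{\gamma}$, and $N_\gamma\models\fdset{q}$ because $N_\gamma\subseteq N$ and $N\models\fdset{q}$. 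All the work is in maximality.

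For maximality I would argue by contradiction, assuming $\mu\in\extend{\gamma}\setminus N_\gamma$ with $N_\gamma\cup\{\mu\}\models\fdset{q}$. Then $\mu\notin N$ (else $\mu\in N\cap\extend{\gamma}=N_\gamma$), so $\subseteq$-maximality of $N$ in $\extend{\theta}$ forces $N\cup\{\mu\}\not\models\fdset{q}$, hence there is $\nu\in N$ with $\{\nu,\mu\}\not\models\fdset{q}$. If $\nu\in N_\gamma$ we are immediately done, so assume $\nu\in N\setminus N_\gamma$. Since $\nu$ and $\gamma$ both extend $\theta$ but $\nu$ does not extend $\gamma$, they differ on some variable $w_0$ occurring in $\vec{x}_{\ell+1}$ (i.e.\ $w_0\in\key{F_{\ell+1}}\setminus\vec{u}_{\ell}$), so $\nu(w_0)\neq\mu(w_0)$. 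Fix an atom $F_i$ witnessing $\{\nu,\mu\}\not\models\fdset{q}$: since $\vars{F_j}\subseteq\vec{u}_{\ell}$ for $j\leq\ell$ and $\nu,\mu$ both extend $\theta$, the witness has $i>\ell$, and $i=\ell+1$ is ruled out because $w_0\in\key{F_{\ell+1}}$; hence $i\geq\ell+2$, $w_0\notin\key{F_i}$, $\nu(\key{F_i})=\mu(\key{F_i})$, and $\nu(F_i)\neq\mu(F_i)$.

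The core is to turn this conflict, which involves an atom $F_i$ ``far'' from $F_{\ell+1}$, into one already present inside $N_\gamma$. I would first fix a repair $\rep$ of $\db$ containing $\bigcup_{\tilde\nu\in N}\tilde\nu(q)$; this union is a consistent set of facts precisely because $N\models\fdset{q}$ and $q$ is self-join-free, so such a $\rep$ exists. Then I would construct a $\forall$embedding $\nu'$ of $q$ in $\db$ with $\nu'(q)\subseteq\rep$ that extends $\gamma$ and has $\nu'(\key{F_i})=\nu(\key{F_i})$ — informally, $\nu'$ is obtained by ``re-rooting'' $\nu$: on $\vec{u}_{\ell+1}$ it follows $\mu$, this change is propagated only along variables connected to the modified variables of $F_{\ell+1}$ by paths avoiding the pertinent key closures, and elsewhere $\nu'$ stays equal to $\nu$. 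Granting such a $\nu'$, the proof closes: $\nu'(F_i)$ is the unique $R_i$-fact of $\rep$ with key $\nu'(\key{F_i})=\nu(\key{F_i})$, which is $\nu(F_i)$, so $\nu'(F_i)=\nu(F_i)\neq\mu(F_i)$ and $\{\nu',\mu\}\not\models\fd{\key{F_i}}{\vars{F_i}}$; moreover $\nu'(q)\subseteq\rep$ and $\tilde\nu(q)\subseteq\rep$ for every $\tilde\nu\in N$ give $\{\nu',\tilde\nu\}\models\fdset{q}$, so $N\cup\{\nu'\}\models\fdset{q}$, whence $\nu'\in N$ by $\subseteq$-maximality of $N$; since $\nu'$ extends $\gamma$, in fact $\nu'\in N_\gamma$, contradicting $N_\gamma\cup\{\mu\}\models\fdset{q}$.

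The step I expect to be the main obstacle is the existence and correctness of $\nu'$: proving that the re-rooting of $\nu$ never reaches $F_i$, i.e.\ that $\nu'(\key{F_i})$ can be kept equal to $\nu(\key{F_i})$ while simultaneously extending $\gamma$ and staying inside a repair. This is exactly where acyclicity of the attack graph of $q$ is used. I would carry it out by induction along the topological sort $(F_1,\ldots,F_n)$, invoking Lemma~\ref{lem:sameKeyFDSameValue} for the branches of $\theta$ obtained by restricting $\nu$ and the candidate $\nu'$ to suitable sets of variables — its unattacked-variable hypothesis is met by condition~\ref{it:branchunattacked} of Definition~\ref{def:branch}, since $\vec{x}_{\ell+1}$ is unattacked in $\theta(\{F_{\ell+1},\ldots,F_n\})$ ($F_{\ell+1}$ being the first atom of a topological sort of that subquery's attack graph) — together with the repair-surgery technique of \cite[Lemma~B.1]{DBLP:journals/tods/KoutrisW17} and Lemma~\ref{lem:frugal} (and Lemma~\ref{lem:mcs}) to guarantee that $\nu'$ is again a $\forall$embedding. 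A subsidiary fact used along the way is that every element of $N_\gamma$ already agrees with $\mu$ on all of $\vars{F_{\ell+1}}$ — otherwise $N_\gamma\cup\{\mu\}$ would already violate $\fd{\key{F_{\ell+1}}}{\vars{F_{\ell+1}}}$ — which is what makes $\nu'$, following $\mu$ on $F_{\ell+1}$, compatible with $N_\gamma$ at that atom.
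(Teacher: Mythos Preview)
Your setup is correct, and the first two conditions are indeed immediate. The problem is entirely in the maximality argument, and there your re-rooting plan is pushing in the wrong direction.

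You start from an \emph{arbitrary} $\nu\in N$ that conflicts with $\mu$ at some $F_i$, and then try to modify $\nu$ so that it extends $\gamma$ while keeping $\nu'(\key{F_i})=\nu(\key{F_i})$. You give no actual construction of $\nu'$, and there is no reason the propagation from replacing $\nu$ on $\vars{F_{\ell+1}}$ should stop short of $\key{F_i}$: your appeal to acyclicity and to Lemma~\ref{lem:sameKeyFDSameValue} does not deliver this, because that lemma only says that two branches agreeing on the left of an FD agree on the right --- it does not let you graft a prefix of $\mu$ onto a suffix of $\nu$ at an arbitrary atom $F_i$. The repair-surgery lemma and Lemma~\ref{lem:frugal} that you invoke concern replacing single facts and finding superfrugal repairs; neither gives you a hybrid $\forall$embedding of the shape you need. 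So as written this step is a genuine gap, not just a detail to fill in.

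The simpler idea you are missing is to work from $\mu$ rather than from $\nu$. First, strengthen your choice of $\rep$: by the same reasoning as in the proof of Lemma~\ref{lem:mcs}, one can take a repair $\rep$ such that $N$ is \emph{exactly} the set of embeddings of $q$ in $\rep$ extending $\theta$ (you only arranged one inclusion). Now let $k\in\{\ell,\ldots,n\}$ be the largest index for which $\mu(\{F_1,\ldots,F_k\})=\mu'(\{F_1,\ldots,F_k\})$ for some $\mu'\in N$; this exists since all elements of $N$ agree with $\mu$ on $\{F_1,\ldots,F_\ell\}$ via $\theta$, and $k<n$ since $\mu\notin N$. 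Because $\mu$ is a $\forall$embedding, $(\rep,\restrict{\mu}{\vec{u}_k\cup\key{F_{k+1}}})\models F_{k+1}\land\cdots\land F_n$, so there is an embedding $\mu''$ of $q$ in $\rep$ extending $\restrict{\mu}{\vec{u}_k\cup\key{F_{k+1}}}$; by the choice of $\rep$, $\mu''\in N$. Maximality of $k$ forces $\mu''(F_{k+1})\neq\mu(F_{k+1})$, so $\{\mu,\mu''\}$ violates $\fd{\key{F_{k+1}}}{\vars{F_{k+1}}}$. Since $k\geq\ell$ and $\mu$ extends $\gamma$, the valuation $\restrict{\mu}{\vec{u}_k\cup\key{F_{k+1}}}$ already extends $\gamma$, so $\mu''\in N_\gamma$, and you have the contradiction with no re-rooting at all.
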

\begin{proof}
    Assume, for the sake of a contradiction, that $N_{\gamma}$ is not an MCS of $\extend{\gamma}$.
    Since $N_{\gamma} \models \fdset{q}$, there is an MCS $N^*$ of $\extend{\gamma}$ such that $N_{\gamma} \subsetneq N^*$.
    We can assume $\mu \in N^*\setminus N_{\gamma}$.
    From $\mu\not\in N_{\gamma}$,
    it follows $\mu \not \in N$.
    Let $k$ be the greatest integer in $\{\ell, \ldots, n\}$ such that for some $\mu'\in N$, we have $\mu(\{F_1, \ldots, F_k\}) = \mu'(\{F_1, \ldots, F_k\})$.
    It is easily verified that such $k$ exists.
    If $k  = n$, then $\mu \in N$, a contradiction.
    Assume from here on that $k < n$.
    Let $\mu_k$ be the restriction of $\mu$ to $\vars{\{F_1, \ldots, F_k\}} \cup \key{F_{k+1}}$.
    Using the same reasoning as in the proof of Lemma~\ref{lem:mcs}, there is a repair~$\rep$ of $\db$ such that~$N$ contains all and only embeddings of $q$ in $\rep$ that extend~$\theta$. 
    Thus, we have that $\mu_k(\{F_1, \ldots, F_k\}) \subseteq \rep$, and by the definition of $\forall$embedding, $(\rep, \mu_k) \models \{F_{k+1}, \ldots, F_n\}$.
    Thus, $N$ contains some extension~$\mu'$ of $\mu_k$ such that $\{\mu, \mu'\} \not \models \fd{\key{F_{k+1}}}{\vars{F_{k+1}}}$.
    Since $\mu' \in N$, it follows that $\mu' \in N_{\gamma}$, and thus $\mu' \in N^*$.
    From $\mu,\mu'\in N^{*}$ and $\{\mu, \mu'\} \not \models \fd{\key{F_{k+1}}}{\vars{F_{k+1}}}$, it follows $N^* \not \models \fdset{q}$, contradicting that $N^*$ is an MCS of $\extend{\gamma_{i}}$.
    We conclude by contradiction that~$N_{\gamma}$ is an MCS of $\extend{\gamma}$.
\end{proof}

Let $v$ be the $\agplus$-minimal value for some  $\ell$-$\forall$embedding $\theta$.
Let $\gamma_1, \ldots, \gamma_k$ enumerate all $(\ell$+1)-$\forall$key-embeddings of $q$ in $\db$ such that each $\gamma_{i}$ extends $\theta$. 
For each $i\in\{1,2,\ldots,k\}$, let $v_{i}$ be the $\agplus$-minimal value for $\gamma_{i}$.
Lemma~\ref{lem:unionIsOptimal} establishes that under some consistency hypothesis, we have $v=\agplus\lrformula{\bag{v_{1},v_{2},\ldots,v_{k}}}$.
Informally, this consistency hypothesis expresses that there is a \emph{single} frugal repair~$\rep$ in which each $v_{i}$ is attained, that is, for each $i\in\{1,2,\ldots,k\}$, $v_{i}=\agplus\lrformula{\bag{\mu(r)\mid\textnormal{$\mu$ is an embedding that extends $\gamma_{i}$ such that $\mu(q)\subseteq\rep$}}}$. 
Lemma~\ref{lem:branchesNoConflict} will demonstrate that this consistency hypothesis can always be satisfied.


\begin{lemma}[Decomposition Lemma]\label{lem:unionIsOptimal}
    Let $q$, $(F_{1},\ldots,F_{n})$, $\ell$, and $\db$ be as in Definition~\ref{def:branch}.
    Let~$\theta$ be an $\ell$-$\forall$embedding of~$q$ in $\db$.
    Let $m$ be the $\agplus$-minimal value for $\theta$ in $\db$ as defined by Definition~\ref{def:mtheta}.
    Let $\gamma_1, \ldots, \gamma_k$ enumerate all extensions of~$\theta$ that are $(\ell$+1)-$\forall$key-embeddings of $q$ in $\db$.
    For $i \in \{1, \ldots, k\}$, let $N_i$ be an $\agplus$-minimal MCS of $\extend{\gamma_{i}}$, and define $v_i \defeq \agplus\lrformula{\bag{\mu(r) \mid \mu \in N_i}}$.
    If $\bigcup_{i=1}^k N_i \models \fdset{q}$, then
    $m=\agplus\lrformula{\bag{v_{1}, v_{2}, \ldots, v_{k}}}$.
\end{lemma}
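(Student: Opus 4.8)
The goal is to show $m = \agplus(\bag{v_1,\dots,v_k})$ under the hypothesis that $N \defeq \bigcup_{i=1}^k N_i$ is consistent (i.e.\ satisfies $\fdset{q}$). The natural strategy is to prove the two inequalities $m \le \agplus(\bag{v_1,\dots,v_k})$ and $m \ge \agplus(\bag{v_1,\dots,v_k})$ separately, using Lemma~\ref{lem:keybranchismcs} as the main structural bridge between MCSs of $\extend{\theta}$ and MCSs of the individual $\extend{\gamma_i}$.

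\textbf{First, the $\le$ direction.} I would show that the union $N = \bigcup_{i=1}^k N_i$ is itself an MCS of $\extend{\theta}$. Consistency is given by hypothesis; what remains is $\subseteq$-maximality. Every $\forall$embedding $\mu \in \extend{\theta}$ restricts to an $\ell$-$\forall$embedding $\theta$ and then, reading off the values on $\key{F_{\ell+1}}$, determines an $(\ell{+}1)$-$\forall$key-embedding $\gamma_j$ that it extends (this uses that each variable newly fixed in $\key{F_{\ell+1}}$ is forced, together with condition~\ref{it:branchextend}); hence $\mu \in \extend{\gamma_j}$. If $N \cup \{\mu\} \models \fdset{q}$ then in particular $N_j \cup \{\mu\} \models \fdset{q}$, and since $N_j$ is an MCS of $\extend{\gamma_j}$ we get $\mu \in N_j \subseteq N$. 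So $N$ is an MCS of $\extend{\theta}$, and therefore $m \le \agplus(\bag{\mu(r) \mid \mu \in N})$. Now partition the multiset $\bag{\mu(r)\mid \mu\in N}$ into the $k$ parts $\bag{\mu(r)\mid \mu \in N_i}$ (these are disjoint as sets of valuations, since the $N_i$ live over disjoint key-fibers); associativity of $\agplus$ lets me aggregate each part first to get $v_i$ and then combine, so $\agplus(\bag{\mu(r)\mid \mu\in N}) = \agplus(\bag{v_1,\dots,v_k})$. A mild subtlety: some $N_i$ may be empty, or the $\agplus$-on-empty convention intrudes; I expect $N_i$ is always nonempty because $\gamma_i$ is itself extendible to a $\forall$embedding by condition~\ref{it:branchextend}, so this is fine, but I would spell it out.

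\textbf{Second, the $\ge$ direction.} Let $N^*$ be an $\agplus$-minimal MCS of $\extend{\theta}$, so $m = \agplus(\bag{\mu(r)\mid \mu \in N^*})$. Apply Lemma~\ref{lem:keybranchismcs}: for each $i$, the subset $N^*_{\gamma_i}$ of valuations in $N^*$ extending $\gamma_i$ is an MCS of $\extend{\gamma_i}$. By minimality of $v_i$, $\agplus(\bag{\mu(r)\mid \mu\in N^*_{\gamma_i}}) \ge v_i$. Again using associativity to split $N^*$ along its key-fibers and then monotonicity of $\agplus$ to compare the $k$ sub-aggregates $\agplus(\bag{\mu(r)\mid \mu \in N^*_{\gamma_i}})$ against the $v_i$, we obtain $m = \agplus(\bag{\dots}) \ge \agplus(\bag{v_1,\dots,v_k})$. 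Here I must be careful that \emph{every} $\mu \in N^*$ extends exactly one $\gamma_i$ — the same fiber-decomposition observation as above — so that $N^* = \bigsqcup_i N^*_{\gamma_i}$ with no leftover valuations.

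\textbf{Main obstacle.} The delicate point is the interplay of the two invocations of associativity and monotonicity with the empty-multiset convention and with the claim that the key-fiber decomposition of any MCS is \emph{exact} (every $\forall$embedding extends one and only one $(\ell{+}1)$-$\forall$key-embedding among $\gamma_1,\dots,\gamma_k$). The monotonicity hypothesis of Definition~\ref{def:dc}-style operators is stated for equal-length multisets padded by extra elements, so to apply it to $\bag{v_1,\dots,v_k}$ versus the $k$ sub-aggregates I need them to be combined in the same "shape"; associativity is exactly what normalizes each fiber to a single representative value, after which monotonicity applies pointwise across the $k$ coordinates. I anticipate the bookkeeping around which $\gamma_i$ a given $\mu$ extends — and ruling out that $N^*$ contains a valuation extending \emph{none} of the enumerated key-embeddings — to be where the self-join-freeness and acyclicity assumptions (via Lemma~\ref{lem:keybranchismcs} and condition~\ref{it:branchextend}) are really used, and that is the step I would write out most carefully.
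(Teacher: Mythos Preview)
Your proposal is correct and follows essentially the same route as the paper: the paper also shows that $\bigcup_i N_i$ is an MCS of $\extend{\theta}$ (giving $m \le \agplus(\bag{v_1,\dots,v_k})$ immediately from minimality of $m$), then decomposes an $\agplus$-minimal MCS $N$ of $\extend{\theta}$ into its key-fibers $N_i^{-}$ via Lemma~\ref{lem:keybranchismcs}, uses associativity to write $m = \agplus(\bag{v_1^{-},\dots,v_k^{-}})$, and applies minimality of each $v_i$ together with monotonicity to get the reverse inequality. The subtleties you flag (exactness of the fiber decomposition, nonemptiness of each $N_i$) are exactly those the paper handles, though more tersely --- it simply asserts that $\{N_1^{-},\dots,N_k^{-}\}$ is a partition of $N$.
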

\begin{proof}
Assume that $\bigcup_{i=1}^k N_i \models \fdset{q}$, which implies that $\bigcup_{i=1}^k N_i$ is an MCS of $\extend{\theta}$.
Let~$N$ be an $\agplus$-minimal MCS of $\extend{\theta}$, i.e,
\begin{equation*}
    m=\agplus\lrformula{\bag{\mu(r) \mid \mu \in N}}.
\end{equation*}
For ease of notation, we define 
\begin{equation}
        \widehat{m} \defeq \agplus\lrformula{\bag{\mu(r) \mid \mu \in \bigcup_{i=1}^k N_i}}.
    \end{equation}
    For each $i \in \{1, \ldots, k\}$, 
    let $N^-_i$ be the subset of $N$ containing all (and only) $\forall$embeddings that extend~$\gamma_i$,
    and define $v^-_i \defeq \agplus\lrformula{\bag{\mu(r) \mid \mu \in N^-_i}}$. 
    Note that $\{N^-_1, N^-_2, \ldots, N^-_k\}$ is a partition of~$N$.
    Since $\agplus$ is associative, it follows that
    \begin{equation}\label{eq:unionIsOptimal2}
        m = \agplus\lrformula{\bag{
           v^-_1,  v^-_2, \ldots, v^-_k
        }},
    \end{equation}
    and
    \begin{equation}\label{eq:unionIsOptimal3}
        \widehat{m} = \agplus\lrformula{\bag{
           v_1, v_2, \ldots, v_k
        }}.
    \end{equation}
    By Lemma~\ref{lem:keybranchismcs}, for every $i \in \{1, \ldots, k\}$, $N^-_i$ is an MCS of $\extend{\gamma_{i}}$.
    By the definition of $\agplus$-minimal MCS, we have that for every $i \in \{1, \ldots, k\}$, $v_i \leq v^-_i$.
    Thus, since $\agplus$ is monotone, we have that
    \begin{equation}\label{eq:unionIsOptimal4}
        \agplus\lrformula{\bag{
           v_1, v_2, \ldots, v_k
        }}
        \leq
        \agplus\lrformula{\bag{
           v^-_1, v^-_2, \ldots, v^-_k
        }}.
    \end{equation}
    By \eqref{eq:unionIsOptimal2}, \eqref{eq:unionIsOptimal3} and \eqref{eq:unionIsOptimal4}, it follows that 
    \begin{equation}\label{eq:unionIsOptimal5}
        \widehat{m}
        \leq
        m.
    \end{equation}
    Since  $m$ is the $\agplus$-minimal value for $\theta$, it follows that 
    \begin{equation}\label{eq:unionIsOptimal6}
        \widehat{m}
        =
        m.
    \end{equation}
    By \eqref{eq:unionIsOptimal3} and~\eqref{eq:unionIsOptimal6}, we can conclude that
    \begin{equation*}
    m
    =
        \agplus\lrformula{\bag{
          v_1, v_2, \ldots, v_k 
        }}.
    \end{equation*}
 This concludes the proof.   
\end{proof}

\subsection{Proof of \emph{Consistent Extension Lemma} (Lemma~\ref{lem:branchesNoConflict})}\label{subsec:exte}

We first show the following helping lemma.

\begin{lemma} \label{lem:IndependentComponent}
Let $q$, $(F_{1},\ldots,F_{n})$, $\ell$, and $\db$ be as in Definition~\ref{def:branch}.
    Let $\gamma$ be a branch of some $\ell$-$\forall$embedding of $q$ in $\db$ such that $\key{F_{\ell+1}} \subseteq \domain{\gamma}$,
    and define $q_{\gamma}\defeq\gamma(\{F_\ell, \ldots, F_n\})$.
    Let $q_1 \subseteq \{F_\ell, \ldots, F_n\}$ such that:
    \begin{enumerate}[label=(\alph*)]
        \item $F_\ell \in q_1$; 
        \item\label{it:wcc} the atoms of $\gamma(q_1)$  form a maximal weakly connected component of the attack graph of $q_{\gamma}$; and 
        \item 
        for every $v\in\vars{q_{\gamma}}$, if $\fdset{q}\models \fd{\key{F_\ell}}{v}$, then $v$ is attacked in $q_\gamma$.
    \end{enumerate}
    Let $q_2 \defeq \{F_\ell, \ldots, F_n\} \setminus q_1$. 
    Then, $\vars{\gamma(q_1)} \cap \vars{\gamma(q_2)} = \emptyset$.
\end{lemma}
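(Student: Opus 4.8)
The goal is to show that under the three hypotheses, the variables appearing in $\gamma(q_1)$ and those appearing in $\gamma(q_2)$ are disjoint. I would argue by contradiction: suppose some variable $v$ occurs in both $\gamma(q_1)$ and $\gamma(q_2)$, and derive a violation of one of the three conditions—most likely a violation of condition~\ref{it:wcc}, i.e., I will show that $\gamma(q_1)$ was not in fact a \emph{maximal} weakly connected component because some atom of $q_2$ is attached to it.

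\textbf{Step 1: Setting up.} Fix $v \in \vars{\gamma(q_1)} \cap \vars{\gamma(q_2)}$. So $v$ occurs in some atom $\gamma(F)$ with $F \in q_1$ and in some atom $\gamma(G)$ with $G \in q_2$. Since $\gamma(q_1)$ is a weakly connected component of the attack graph of $q_\gamma$, and $\gamma(G) \notin \gamma(q_1)$, there must be \emph{no} attack edge between $\gamma(G)$ and any atom of $\gamma(q_1)$ in either direction; in particular $\gamma(G) \nattacks{q_\gamma} \gamma(F)$ and $\gamma(F) \nattacks{q_\gamma} \gamma(G)$, and more generally no atom of $\gamma(q_1)$ attacks $\gamma(G)$ and $\gamma(G)$ attacks no atom of $\gamma(q_1)$.

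\textbf{Step 2: Locating $v$ in the key.} I would first argue that $v$ must lie in $\key{\gamma(F)} = \key{F}$ (as a set of variables after applying $\gamma$) for the relevant $F \in q_1$, and similarly could not be a non-key variable of $\gamma(G)$ that is attacked. The key observation: if $v \in \notkey{\gamma(F)}$ for some $F\in q_1$, then, since an atom attacks each of its non-key variables unless that variable lies in the atom's key-closure, either $\gamma(F) \attacks{q_\gamma} v$ or $v \in \keycl{\gamma(F)}{q_\gamma}$. In the first case, because $v$ also occurs in $\gamma(G)$ (via the sequence consisting of just $v$), we would get $\gamma(F) \attacks{q_\gamma} \gamma(G)$, contradicting Step~1. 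So $v \in \keycl{\gamma(F)}{q_\gamma}$, meaning $\fdset{q_\gamma \setminus \{\gamma(F)\}} \models \fd{\key{\gamma(F)}}{v}$. Pushing this back through the valuation $\gamma$ and using Lemma~\ref{lem:backwardinseqproof} (applied with $G' = $ the preimage atom), combined with condition~(c) which forces variables determined by $\key{F_\ell}$ to be attacked, I expect to reach a contradiction with the non-attack statements of Step~1. A symmetric argument handles $v \in \notkey{\gamma(G)}$. Thus $v$ lies in the key of every atom of $\gamma(q_1) \cup \{\gamma(G)\}$ in which it appears.

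\textbf{Step 3: Deriving the contradiction to maximality.} Now $v$ sits in $\key{\gamma(G)}$ and in $\key{\gamma(F)}$ for some $F \in q_1$. Using the sequence-of-variables definition of attack, together with the fact that no variable in $\key{F_\ell}$'s closure escapes being attacked (condition (c)) and the self-join-freeness of $q$, I would argue via Lemma~\ref{lem:attackSameAtom} that either $\gamma(G)$ attacks some atom of $\gamma(q_1)$ or is attacked by one, or else $\key{\gamma(G)} \subseteq \keycl{\gamma(F)}{q_\gamma}$. The first alternative directly contradicts Step~1. The second alternative, combined with the fact that $v \in \key{\gamma(G)}$ also occurs in $\gamma(q_1)$, should let me conclude that $\gamma(G)$ and $\gamma(q_1)$ are weakly connected via an attack edge after all—because $v$ being in $\keycl{\gamma(F)}{q_\gamma}$ and in $\key{\gamma(G)}$ routes an attack through $\gamma(G)$ from some atom that does attack into the component—contradicting the \emph{maximality} of $\gamma(q_1)$ as a weakly connected component. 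Either way we reach a contradiction, so no such $v$ exists.

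\textbf{Main obstacle.} The delicate point is Step~2 and the first alternative in Step~3: carefully tracking how attacks and key-closures behave under the partial valuation $\gamma$. The right tool is Lemma~\ref{lem:attacksSameAfterValuation}, which says that for an atom $F$ with $F \nattacks{q} w$ for all $w \in \domain{\gamma}$, attacks in $q$ and in $\gamma(q)$ coincide on the remaining variables; I will need to verify its hypothesis for the atoms of $q_1$ and for $G$ (which follows because $\key{F_{\ell+1}} \subseteq \domain{\gamma}$ and property~\ref{it:branchunattacked} of branches, plus condition (c) handling the $F_\ell$-key variables). Getting these hypotheses lined up correctly, and invoking Lemma~\ref{lem:backwardinseqproof} at the right moment to convert a sequential-proof dependency into an attack on a key variable, is where the real work lies; the combinatorics of weakly connected components is then routine.
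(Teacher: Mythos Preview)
Your overall contradiction strategy matches the paper's, but the execution has two genuine gaps that leave the argument incomplete.

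\textbf{Step~2 does not close.} In the sub-case $v\in\notkey{\gamma(F)}$ with $v\in\keycl{\gamma(F)}{q_\gamma}$, you appeal to Lemma~\ref{lem:backwardinseqproof} and condition~(c) to get a contradiction. But condition~(c) only fires once you know $\fdset{q}\models\fd{\key{F_\ell}}{v}$; what you have is merely $\fdset{q_\gamma\setminus\{\gamma(F)\}}\models\fd{\key{\gamma(F)}}{v}$, which says nothing about $\key{F_\ell}$. Lemma~\ref{lem:backwardinseqproof} will not bridge this gap without a separate argument linking $\key{F_\ell}$ to $\key{F}$---and that link is exactly the missing idea (see below). \textbf{Step~3 misapplies Lemma~\ref{lem:attackSameAtom}.} That lemma requires $R$ and $S$ to \emph{attack a common atom}. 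You invoke it for $\gamma(F)$ and $\gamma(G)$ without establishing this hypothesis; since $\gamma(G)$ lies in a different weakly connected component, it need not share any attack target with $\gamma(F)$. The ``second alternative'' you describe, $\key{\gamma(G)}\subseteq\keycl{\gamma(F)}{q_\gamma}$, is a functional-dependency statement and does not by itself produce an attack edge, so maximality is not violated.

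\textbf{The missing structural idea.} The paper does not try to push $v$ into key positions. Instead it exploits the acyclicity (hence transitivity) of the attack graph of $q_\gamma$: since $v\in\vars{\gamma(q_1)}$, there is an \emph{unattacked} atom $\gamma(F)$ in the component with either $v\in\vars{\gamma(F)}$ or $\gamma(F)\attacks{q_\gamma}\gamma(G')$ for some $G'\in q_1$ containing $v$. The crucial claim is $\fdset{q}\models\fd{\key{F_\ell}}{\key{F}}$: because $\gamma(F_\ell)$ and $\gamma(F)$ are both unattacked sources of the same weakly connected component, they are joined by a chain of source-pairs each attacking a common atom, and repeated application of Lemma~\ref{lem:attackSameAtom} along this chain (transferred to $q$ via Lemma~\ref{lem:attacksSameAfterValuation}) gives the FD. From there one gets $\fdset{q}\models\fd{\key{F_\ell}}{v}$ (directly, or via $F\nattacks{q}v$), so condition~(c) says $v$ is attacked in $q_\gamma$; but any attacker of $v$ would hit atoms in both $\gamma(q_1)$ and $\gamma(q_2)$, contradicting maximality. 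Your outline never proves the $\key{F_\ell}\rightarrow\key{F}$ link, which is where Lemma~\ref{lem:attackSameAtom} is actually used.
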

\begin{proof}
\proofread{For every $k\in\{\ell+1,\ldots,n\}$, we have $F_{k}\nattacks{q}F_{\ell}$. Consequently, by Lemma~\ref{lem:attacksSameAfterValuation}, $\gamma(F_{\ell})$ is unattacked in~$q_{\gamma}$.}
    Assume $v \in \vars{\gamma(q_1)}$. 
    We show $v\notin\vars{\gamma(q_2)}$.
    Since $v \in \vars{\gamma(q_1)}$ and acyclic attack graphs are known to be transitive, the variable $v$ must occur in an atom of $\gamma(q_1)$ that is either unattacked in the attack graph of $\gamma(q_1)$ or attacked by another atom that itself is unattacked. Hence, there is an  atom $F \in q_1$ such that $\gamma(F)$ is unattacked in~$q_\gamma$, and one of the following holds:
    \minirevision{
    \begin{enumerate}[label=(\alph*)]
    \item\label{it:IndCompOne}
     $v\in\vars{\gamma(F)}$, hence $\fdset{q}\models\fd{\key{F}}{v}$; or
    \item\label{it:IndCompThree}
    $v\in\vars{\gamma(G)}$ for some atom $G\in q_{1}$ such that
    $\gamma(F) \attacks{q_\gamma}\gamma(G)$. 
    \end{enumerate}

    \begin{claim}\label{cla:keyfellf}
        $\fdset{q}\models\fd{\key{F_{\ell}}}{\key{F}}$.
    \end{claim}
    \begin{proof}
        The desired result is obvious if $F=F_{\ell}$.
        Assume $F\neq F_{\ell}$ from here on.
 \proofread{Since the atoms $\gamma(F)$ and~$\gamma(F_{\ell})$ are unattacked in~$q_{\gamma}$ and belong to the same weakly connected component of the attack graph, and since acyclic attack graphs are known to be transitive, there is a sequence of atoms $\tuple{G_1,\ldots, G_k}$ ($k\geq 2$) in $q_1$ such that $G_1 = F_\ell$, $G_k = F$, and such that for every $i \in \{1, \ldots, k-1\}$,
       \begin{itemize}
       \item 
       $\gamma(G_i)\nattacks{q_{\gamma}}\gamma(G_{i+1})$, $\gamma(G_{i+1})\nattacks{q_{\gamma}}\gamma(G_{i})$; and
       \item
       there is an atom $H_i\in q_{1}$ such that $\gamma(G_i)\attacks{q_{\gamma}}\gamma(H_i)$ and $\gamma(G_{i+1})\attacks{q_{\gamma}}\gamma(H_i)$.
       \end{itemize}
        By Lemma~\ref{lem:attacksSameAfterValuation}, for every $i \in \{1, \ldots, k-1\}$, $G_i\nattacks{q}G_{i+1}$, $G_{i+1}\nattacks{q}G_{i}$, $G_i\attacks{q}H_i$, and $G_{i+1}\attacks{q}H_i$.
        By repeated application of Lemma~\ref{lem:attackSameAtom} and logical implication of functional dependencies, we obtain $\fdset{q}\models\fd{\key{F_{\ell}}}{\key{F}}$.
        This concludes the proof of Claim~\ref{cla:keyfellf}.
        }
    \end{proof}
    Assume for the sake of a contradiction that $v\in\vars{\gamma(q_{2})}$.
    We show 
    \begin{equation}\label{eq:vimplied}
        \fdset{q}\models\fd{\key{F_\ell}}{v},
    \end{equation}
    which immediately follows from Claim~\ref{cla:keyfellf} if~\ref{it:IndCompOne} holds true.
    Assume next that~\ref{it:IndCompThree} holds true.
    Since the atoms of $\gamma(q_{1})$ form a maximal weakly connected component of the attack graph of $q_{\gamma}$, it follows that~$v$ is not attacked in $q_{\gamma}$.
    By Lemma~\ref{lem:attacksSameAfterValuation}, it follows $F\nattacks{q}v$, which in turn implies $\fdset{q\setminus\{F\}}\models\fd{\key{F}}{v}$. 
    By Claim~\ref{cla:keyfellf}, we obtain~\eqref{eq:vimplied}. Then, by the hypothesis of the lemma, $v$ is attacked in $q_{\gamma}$, a contradiction.
This concludes the proof. }   
\end{proof}

Let $q=q_{1}\uplus q_{2}$ be as in Lemma~\ref{lem:IndependentComponent}.
The following lemma implies that an $\agplus$-minimal MCS relative to~$q$ can be obtained by taking the cross product of two MCS, $N_{1}$ and $N_{2}$, which are calculated relative to $q_{1}$ and $q_{2}$, respectively. 

\begin{definition}
    Let $\theta$ and $\mu$ be valuations such that $\domain{\theta}\cap\domain{\mu}=\emptyset$.
    We write~$\theta\cdot\mu$ for the valuation over $\domain{\theta}\cup\domain{\mu}$ that extends both $\theta$ and $\mu$.
 \myqed   
\end{definition}

\begin{lemma} \label{lem:WCCIndependent}
    Under the same hypotheses as Lemma~\ref{lem:IndependentComponent}, for every $i\in\{1,2\}$, let $M_{i}$ be the set of $\forall$embeddings of~$\gamma(q_{i})$ in $\db$, and let $N_i$ be an MCS of $M_{i}$ such that 
    \begin{itemize}
        \item $N_i$ is $\agplus$-minimal if $r$ is a variable in $\gamma(q_i)$; and
        \item 
        $N_i$ is $\agcount$-minimal otherwise (i.e., if  $r$ is a variable not in $\gamma(q_i)$ or a constant).
    \end{itemize}
    Then,  
    $
        \{ \gamma \cdot \theta_1 \cdot \theta_2 \mid \theta_1 \in N_1 \myand \theta_2 \in N_2\}
    $
    is an $\agplus$-minimal MCS of $\extend{\gamma}$.
\end{lemma}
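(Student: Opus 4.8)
The plan is to exploit the variable-disjointness established in Lemma~\ref{lem:IndependentComponent}, namely $\vars{\gamma(q_1)}\cap\vars{\gamma(q_2)}=\emptyset$, to decouple the two subproblems entirely. First I would verify that the set $P \defeq \{\gamma\cdot\theta_1\cdot\theta_2 \mid \theta_1\in N_1,\ \theta_2\in N_2\}$ consists of well-defined valuations (the domains are disjoint by the variable-disjointness, together with the fact that $\domain{\gamma}$ covers only the ``shared'' part, essentially $\key{F_\ell}$ and earlier variables) and that each element of $P$ is a $\forall$embedding of $q$ in $\db$ extending $\gamma$. The latter follows because a $\forall$embedding of $q$ that extends $\gamma$ splits, by disjointness, into a $\forall$embedding of $\gamma(q_1)$ and a $\forall$embedding of $\gamma(q_2)$, and conversely any such pair recombines into a $\forall$embedding of $q$; here I would lean on the characterization of $\forall$embeddings in terms of the step condition $(\db,\restrict{\theta}{\cdot})\cqamodels F_{j}\wedge\cdots$, checking that the conditions for atoms in $q_1$ and in $q_2$ do not interact once the variables are disjoint. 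This gives $\extend{\gamma} = \{\gamma\cdot\theta_1\cdot\theta_2 \mid \theta_1\in M_1,\ \theta_2\in M_2\}$ (a ``cross product'' identity on the level of $\forall$embeddings).

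Next I would show that $P$ is an MCS of $\extend{\gamma}$, i.e.\ $P\models\fdset{q}$ and $P$ is $\subseteq$-maximal with this property. For consistency: a functional dependency in $\fdset{q}$ comes from some atom $F\in q$; if $F\in q_1$ its key and non-key variables lie entirely in $\vars{\gamma(q_1)}$ (after instantiating $\gamma$), so the dependency is controlled by $N_1\models\fdset{\gamma(q_1)}$, and symmetrically for $q_2$; the variable-disjointness guarantees no cross-dependency can be violated. For maximality: if $P\subsetneq P'$ with $P'\models\fdset{q}$ and $P'\subseteq\extend{\gamma}$, project $P'$ onto the $q_1$-coordinates and the $q_2$-coordinates to get consistent supersets of $N_1$ and $N_2$ respectively, contradicting the maximality of one of them (one has to be a bit careful: a strictly larger $P'$ need not immediately give a strictly larger projection, but since $P'\subseteq M_1\times M_2$ as a set of combined valuations and $P = N_1\times N_2$, strict containment forces strict containment in at least one coordinate projection, which is where maximality of $N_1$ or $N_2$ as an MCS is contradicted).

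Finally I would establish $\agplus$-minimality of $P$ among MCSs of $\extend{\gamma}$. By Lemma~\ref{lem:mcs}-style reasoning, MCSs of $\extend{\gamma}$ correspond to the superfrugal repairs restricted to the relevant atoms; combined with the cross-product structure, every MCS $N$ of $\extend{\gamma}$ is of the form $N_1'\times N_2'$ for MCSs $N_i'$ of $M_i$ (this again uses disjointness plus the argument of Lemma~\ref{lem:keybranchismcs}). Now split into cases on where $r$ lives. If $r$ occurs in $\gamma(q_1)$, then $r$ does not occur in $\gamma(q_2)$, so $\theta_2(r)=r$ is the same constant for all $\theta_2\in M_2$; hence $\agplus(\bag{(\gamma\cdot\theta_1\cdot\theta_2)(r)\mid\ldots}) = \agplus(\bag{\theta_1(r)\mid\theta_1\in N_1'}\uplus\copies{\card{N_2'}}{c})$ for the constant $c$. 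To minimize, one wants $N_1'$ to be $\agplus$-minimal (giving the smallest $r$-multiset from the $q_1$-side, using monotonicity of $\agplus$) and $N_2'$ to be $\agcount$-minimal (the fewest copies of the constant $c$; here monotonicity is used in the form that adding extra copies of a fixed value cannot decrease $\agplus$ — strictly this needs the convention that $c\ge 0$ and monotonicity applied with equal first coordinates and nonempty tail $Y$). That is exactly the choice of $N_1,N_2$ in the hypothesis, and symmetrically if $r$ is in $\gamma(q_2)$, or if $r$ is a constant not in either (then both sides should be $\agcount$-minimal, matching the hypothesis again). Combining, $P$ attains the minimum, so $P$ is an $\agplus$-minimal MCS of $\extend{\gamma}$.

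I expect the main obstacle to be the bookkeeping in the case analysis for $\agplus$-minimality: one must invoke monotonicity of $\agplus$ in two subtly different ways — once to compare the $q_1$-multisets coordinatewise (the $\agplus$-minimal choice), and once to argue that extra copies of a fixed constant on the $q_2$-side only hurt (the $\agcount$-minimal choice) — and verify that associativity legitimately lets us write $\agplus$ of the combined multiset as an aggregate over the two independent contributions without double-counting $\gamma$'s own contribution (which is why $r$ being a variable of exactly one of $q_1,q_2$, or a constant, is the right trichotomy). Getting the projection lemma ``every MCS of the cross product is a cross product of MCSs'' cleanly is the other delicate point, but it should follow the template of Lemmas~\ref{lem:mcs} and~\ref{lem:keybranchismcs} already proved.
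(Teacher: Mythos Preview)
Your overall strategy matches the paper's proof closely: use the variable-disjointness from Lemma~\ref{lem:IndependentComponent} to identify $\extend{\gamma}$ with a cross product $M_1\times M_2$, show that the product $N_1\times N_2$ is an MCS, then compare against an arbitrary $\agplus$-minimal MCS $N^*$ (which also factors as $N_1^*\times N_2^*$) via a case split on the location of~$r$, invoking monotonicity and associativity. The paper does exactly this, and your remarks about the ``projection lemma'' being the delicate structural point are apt.

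However, your multiset computation in the case $r\in\vars{\gamma(q_1)}$ is wrong, and the error is not merely notational. You write
\[
\agplus\bigl(\bag{(\gamma\cdot\theta_1\cdot\theta_2)(r)\mid\ldots}\bigr)
=\agplus\bigl(\bag{\theta_1(r)\mid\theta_1\in N_1'}\uplus\copies{\card{N_2'}}{c}\bigr)
\]
and then speak of ``the fewest copies of the constant~$c$.'' But there is no constant~$c$ here: since $r\in\vars{\gamma(q_1)}$ and $r\notin\domain{\gamma}\cup\vars{\gamma(q_2)}$, every element of the multiset equals $\theta_1(r)$, and the $q_2$-side contributes only \emph{multiplicity}. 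The correct identity is
\[
\bag{\mu(r)\mid\mu\in N_1'\times N_2'}
=\bag{\copies{\card{N_2'}}{\theta_1(r)}\mid \theta_1\in N_1'},
\]
a multiset of size $\card{N_1'}\cdot\card{N_2'}$, not $\card{N_1'}+\card{N_2'}$. The paper then uses associativity to collapse this to $\agplus\bigl(\bag{\copies{\card{N_2'}}{v}}\bigr)$ where $v=\agplus\bigl(\bag{\theta_1(r)\mid\theta_1\in N_1'}\bigr)$, after which monotonicity applies cleanly (since $v\leq v^*$ and $\card{N_2}\leq\card{N_2^*}$). Your formula, by contrast, would not need associativity at all, which should have been a warning sign given that you correctly flag associativity as essential in your final paragraph. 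Once you fix this calculation the rest of your plan goes through and coincides with the paper's argument.
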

\begin{proof}
\proofread{Since $\vars{\gamma(q_1)} \cap \vars{\gamma(q_2)} = \emptyset$ by Lemma~\ref{lem:IndependentComponent}, it follows that for every $\theta_1 \in N_1$ and $\theta_2 \in N_2$, we have that $\gamma \cdot \theta_1 \cdot \theta_2$ is a valid embedding of $q$ in $\db$.}
    Let $N = \{ \gamma \cdot \theta_1 \cdot \theta_2 \mid\theta_1 \in N_1 \myand \theta_2 \in N_2\}$.
    \proofread{
    Clearly, $N$ is a subset of $\extend{\gamma}$.
    Moreover, since $N_1$ and $N_2$ are MCSs of $M_1$ and~$M_2$ respectively, it follows that $N$ is an MCS of $\extend{\gamma}$.
    }
    
    Let $N^*$ be an $\agplus$-minimal MCS of $\extend{\gamma}$, and for $i \in \{1,2\}$, define $N^*_i\defeq\{\restrict{\theta}{\vars{\gamma(q_{i})}}\mid\theta\in N^{*}\}$. 
    Then, $N^* = \{ \gamma \cdot \theta_1 \cdot \theta_2 \mid \theta_1 \in N^*_1 \myand \theta_2 \in N^*_2\}$.    
    \proofread{
        Since $\vars{\gamma(q_1)} \cap \vars{\gamma(q_2)} = \emptyset$ by Lemma~\ref{lem:IndependentComponent}, and since  $N^*$ is an MCS of of $\extend{\gamma}$, it is easily seen that each $N^*_i$ is an MCS of $M_i$  ($i\in\{1,2\}$).
    }
    By the definition of $\agplus$-minimal MCS, it follows that
    \begin{equation}\label{eq:WCCIndependent1}
        \agplus\lrformula{\bag{\mu(r) \mid \mu \in N^*}}
        \leq
        \agplus\lrformula{\bag{\mu(r) \mid \mu \in N}}.
    \end{equation} 
    To show that $N$ is $\agplus$-minimal, we distinguish two cases.
 To ease the notation, for $i\in\{1,2\}$, we define $c_{i}\defeq\card{N_{i}}$ and $c_{i}^{*}\defeq\card{N_{i}^{*}}$.
    \begin{description}
        \item[Case that $r \in \domain{\gamma}$ or $r$ is a constant.]
        Then, $\gamma(r)$ is a constant.
            Indeed, since $\gamma$ is the identity on constants, if $r$ is a constant, then $r=\gamma(r)$.
            We have that for every $i \in \{1,2\}$, $N_i$ is an $\agcount$-minimal MCS of $M_i$.
            With the construct $\copies{i}{t}$ as defined in Definition~\ref{def:dc}, it follows
            \begin{equation}\label{eq:WCCIndependentCase1-1}
                \agplus\lrformula{\bag{\mu(r) \mid \mu \in N}} = \agplus\lrformula{\bag{\copies{\formula{c_{1}*c_{2}}}{\gamma(r)}}},
            \end{equation}        
            and
            \begin{equation}\label{eq:WCCIndependentCase1-2}
                \agplus\lrformula{\bag{\mu(r) \mid \mu \in N^*}} = \agplus\lrformula{\bag{\copies{\formula{c_{1}^{*}*c_{2}^{*}}}{\gamma(r)}}}.
            \end{equation}      
            By the definition of $\agcount$-minimal MCS, for every $i \in \{1,2\}$, $c_{i}\leq c_{i}^{*}$.
            Since $\agplus$ is monotone, 
            \begin{equation}\label{eq:WCCIndependentCase1-3}
                \agplus\lrformula{\bag{\copies{\formula{c_{1}*c_{2}}}{\gamma(r)}}}
                \leq 
                \agplus\lrformula{\bag{\copies{\formula{c_{1}^{*}*c_{2}^{*}}}{\gamma(r)}}}.
            \end{equation}
            From \eqref{eq:WCCIndependentCase1-1}, \eqref{eq:WCCIndependentCase1-2}, and \eqref{eq:WCCIndependentCase1-3}, it follows
            \begin{equation}\label{eq:WCCIndependentCase1-4}
                \agplus\lrformula{\bag{\mu(r) \mid \mu \in N}}
                \leq 
                \agplus\lrformula{\bag{\mu(r) \mid \mu \in N^*}}.
            \end{equation}
            From \eqref{eq:WCCIndependent1} and \eqref{eq:WCCIndependentCase1-4}, it follows 
            \begin{equation}\label{eq:WCCIndependentCase1-5}
                \agplus\lrformula{\bag{\mu(r) \mid \mu \in N}}
                =
                \agplus\lrformula{\bag{\mu(r) \mid \mu \in N^*}}.
            \end{equation}
            It follows that $N$ is an $\agplus$-minimal MCS of $\extend{\gamma}$.
        \item[Case that $r$ is a variable not in $\domain{\gamma}.$]
            Assume, without loss of generality, that $r \in \vars{\gamma(q_1)}$.
            By Lemma~\ref{lem:IndependentComponent}, we have that $r \not \in \vars{\gamma(q_2)}$.
            Thus, $N_1$ is an $\agplus$-minimal MCS of $M_1$, and~$N_2$ is an $\agcount$-minimal MCS of $M_2$.
            It follows that
            \begin{equation}\label{eq:WCCIndependentCase2-1}
                \agplus\lrformula{\bag{\mu(r) \mid \mu \in N}} = \agplus\lrformula{\bag{\copies{c_{2}}{\mu(r)} \mid \mu \in N_1 }},
            \end{equation}
            and
            \begin{equation}\label{eq:WCCIndependentCase2-2}
                \agplus\lrformula{\bag{\mu(r) \mid \mu \in N^*}} = \agplus\lrformula{\bag{\copies{c_{2}^{*}}{\mu(r)}  \mid \mu \in N^*_1 }}.
            \end{equation} 
            Since $\agplus$ is associative, 
            \begin{equation}\label{eq:WCCIndependentCase2-3}
                \agplus\lrformula{\bag{\mu(r) \mid \mu \in N}} = \agplus\lrformula{\bag{ \copies{c_{2}}{\agplus\lrformula{\bag{\mu(r) \mid \mu \in N_1}}} }},
            \end{equation}
            and
            \begin{equation}\label{eq:WCCIndependentCase2-4}
                \agplus\lrformula{\bag{\mu(r) \mid \mu \in N^*}} = \agplus\lrformula{\bag{ \copies{c_{2}^{*}}{\agplus\lrformula{\bag{\mu(r) \mid \mu \in N^*_1}}} }}.
            \end{equation}            
            By the definition of $\agcount$-minimal MCS, $c_{2} \leq c_{2}^{*}$.
            By definition of an $\agplus$-minimal MCS, $\agplus\lrformula{\bag{\mu(r) \mid \mu \in N_1}} \leq \agplus\lrformula{\bag{\mu(r) \mid \mu \in N^*_1}}$.
            Since $\agplus$ is monotone, we obtain that 
            \begin{equation}\label{eq:WCCIndependentCase2-5}
                \agplus\lrformula{\bag{ \copies{c_{2}}{\agplus\lrformula{\bag{\mu(r) \mid \mu \in N_1}}} }}
                \leq 
                \agplus\lrformula{\bag{ \copies{c_{2}^{*}}{\agplus\lrformula{\bag{\mu(r) \mid \mu \in N^*_1}}} }}.
            \end{equation}
            From \eqref{eq:WCCIndependentCase2-3}, \eqref{eq:WCCIndependentCase2-4} and \eqref{eq:WCCIndependentCase2-5}, it follows
            \begin{equation}\label{eq:WCCIndependentCase2-6}
                \agplus\lrformula{\bag{\mu(r) \mid \mu \in N}}
                \leq 
                \agplus\lrformula{\bag{\mu(r) \mid \mu \in N^*}}.
            \end{equation}
            From \eqref{eq:WCCIndependent1} and \eqref{eq:WCCIndependentCase2-6}, it follows
            \begin{equation}
                \agplus\lrformula{\bag{\mu(r) \mid \mu \in N}}
                = 
                \agplus\lrformula{\bag{\mu(r) \mid \mu \in N^*}}.
            \end{equation}
            It follows that $N$ is an $\agplus$-minimal MCS of $\extend{\gamma}$.
    \end{description}
 The proof is now concluded.   
\end{proof}

\begin{lemma}[Consistent Extension Lemma]\label{lem:branchesNoConflict}
    Let $q$, $(F_{1},\ldots,F_{n})$, and $\db$ be as in Definition~\ref{def:branch}.
    Let $\ell\in\{1,2,\ldots,n\}$.
    Let $\agplus$ be an aggregate operator that is monotone and associative.
    Let $\gamma_1, \ldots, \gamma_k$ be a sequence of $\ell$-$\forall$key-embeddings of $q$ in $\db$ such that $\{\gamma_1, \ldots, \gamma_k\} \models \fdset{\{F_1, \ldots, F_{\ell-1}\}}$.
    For every $i \in \{1,\ldots, k\}$, there is an $\agplus$-minimal MCS $N_i$ of $\extend{\gamma_{i}}$ such that $\bigcup_{i=1}^{k} N_i \models \fdset{q}$.
\end{lemma}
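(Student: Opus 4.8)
The plan is to prove the lemma by downward induction on~$\ell$, from $\ell=n$ down to $\ell=1$, using the Decomposition Lemma (Lemma~\ref{lem:unionIsOptimal}) and the weakly-connected-component analysis of Lemmas~\ref{lem:IndependentComponent} and~\ref{lem:WCCIndependent} at each step. For the base case $\ell = n$, each $\gamma_i$ is an $n$-$\forall$key-embedding, so $\domain{\gamma_i} = \vars{q}$; thus $\extend{\gamma_i}$ is either empty or the singleton $\{\gamma_i\}$ (after identifying a valuation on $\key{F_n}$-augmented variables with its unique completion), and its only MCS $N_i$ is exactly that singleton. The hypothesis $\{\gamma_1,\ldots,\gamma_k\}\models\fdset{\{F_1,\ldots,F_{n-1}\}}$, combined with the fact that each $\gamma_i$ fixes $\key{F_n}$ and hence (being a $\forall$key-embedding extending to a $\forall$embedding) determines $F_n$ consistently, gives $\bigcup_i N_i \models \fdset{q}$ directly.

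For the induction step, suppose the lemma holds for $\ell+1$, and let $\gamma_1,\ldots,\gamma_k$ be $\ell$-$\forall$key-embeddings with $\{\gamma_1,\ldots,\gamma_k\}\models\fdset{\{F_1,\ldots,F_{\ell-1}\}}$. First I would handle a single $\gamma_i$: I apply the WCC decomposition of the query $\gamma_i(\{F_\ell,\ldots,F_n\})$ into $\gamma_i(q_1^{(i)})$ (the component containing $\gamma_i(F_\ell)$) and $\gamma_i(q_2^{(i)})$, after first factoring out any variables functionally determined by $\key{F_\ell}$ that are \emph{unattacked} --- these can be pinned to their forced values without loss of generality, so that the hypothesis~(c) of Lemma~\ref{lem:IndependentComponent} is met. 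By Lemma~\ref{lem:WCCIndependent}, an $\agplus$-minimal MCS of $\extend{\gamma_i}$ is obtained as the cross product $N_1^{(i)} \times N_2^{(i)}$, where $N_1^{(i)}$ is $\agplus$-minimal and $N_2^{(i)}$ is $\agcount$-minimal (with roles possibly swapped depending on where $r$ lives). The key point is that the component $\gamma_i(q_1^{(i)})$ contains $\gamma_i(F_\ell)$, so its atoms' key values are controlled by $\key{F_\ell}\subseteq\domain{\gamma_i}$, and across different $i$ these keys are pairwise coordinated by the FD hypothesis; hence the $F_\ell,\ldots,F_n$-atoms occurring in the $q_1$-components of the various $\gamma_i$ lie in a common consistent set. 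For the $q_2$-components, one recurses: the relevant sub-problem is over a strictly smaller atom set whose "first" atoms (in the topological order) have index $>\ell$, and one enumerates the $(\ell')$-$\forall$key-embeddings obtained by extending the $\gamma_i$ within $q_2$, which are pairwise FD-compatible because they all extend FD-compatible $\gamma_i$'s; the induction hypothesis (applied at the appropriate level $\ell'>\ell$) then furnishes $\agplus$-minimal MCSs for these that jointly satisfy $\fdset{q}$. The $\agcount$-minimality clause in Lemma~\ref{lem:WCCIndependent} is forced to propagate alongside, which is why the induction hypothesis must be stated (or strengthened) to yield \emph{simultaneously} an $\agplus$-minimal and an $\agcount$-minimal family that are FD-compatible --- I would strengthen the inductive statement accordingly.

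I expect the main obstacle to be the bookkeeping that ties together the cross-product structure across \emph{all} $k$ branches at once: Lemma~\ref{lem:WCCIndependent} gives, for each fixed $i$, that \emph{some} product $N_1^{(i)}\times N_2^{(i)}$ is $\agplus$-minimal, but to get $\bigcup_i N_i \models \fdset{q}$ one must choose the factors $N_1^{(i)}$ and $N_2^{(i)}$ \emph{coherently across $i$}, so that the $q_1$-parts form a globally consistent set and, independently, the $q_2$-parts do too. The $q_1$-parts are easy once one observes that $\key{F_\ell}$ and the FD-closure $\keycl{F_\ell}{q}$ pin down the relevant attacked/unattacked variables uniformly (this is exactly the content of Lemma~\ref{lem:IndependentComponent}'s hypothesis~(c) together with Lemma~\ref{lem:sameKeyFDSameValue}, which guarantees that FD-implied values agree across FD-compatible branches). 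The $q_2$-parts require invoking the induction hypothesis on a carefully specified collection of $\forall$key-embeddings obtained from the $\gamma_i$, and verifying that this collection still satisfies the FD precondition $\{\cdot\}\models\fdset{\{F_1,\ldots,F_{\ell'-1}\}}$ --- this verification, and the reconciliation of the "$\agplus$-minimal vs.\ $\agcount$-minimal" roles through the recursion, is where the argument is most delicate and where Lemma~\ref{lem:sameKeyFDSameValue} is used repeatedly to show that the various minimal MCSs can be glued.
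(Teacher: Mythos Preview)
Your proposal has the right ingredients---downward induction on~$\ell$, the WCC decomposition via Lemmas~\ref{lem:IndependentComponent} and~\ref{lem:WCCIndependent}, and Lemma~\ref{lem:sameKeyFDSameValue} for coherence across branches---and you correctly identify the central obstacle (choosing the MCSs coherently across all~$i$). But the induction step is misstructured, and this leads you into unnecessary complications.

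The paper's proof separates the step into two phases. \emph{Phase~1} extends each $\ell$-$\forall$key-embedding~$\gamma_i$ to a full $\ell$-$\forall$embedding~$\theta_i$ such that (a)~every $\agplus$-minimal MCS of $\extend{\theta_i}$ is already an $\agplus$-minimal MCS of $\extend{\gamma_i}$, and (b)~$\{\theta_1,\ldots,\theta_k\}\models\fdset{\{F_1,\ldots,F_\ell\}}$ (one more FD than the hypothesis). This is done by a case split on whether the~$\gamma_i$'s agree on $\key{F_\ell}$: if they disagree, (b)~holds vacuously and any $\agplus$-minimal choice works; if they agree, one first pins the unattacked variables determined by $\key{F_\ell}$ (your ``factoring out'' step), and then the WCC argument shows that the component containing~$F_\ell$ is \emph{literally the same query} across those~$i$ (this is Claim~\ref{cla:equalqueries} in the paper), so one may use the \emph{same} MCS~$N^{(\alpha)}$ for that component---forcing $\theta_i(F_\ell)$ to coincide. \emph{Phase~2} then enumerates all $(\ell{+}1)$-$\forall$key-embeddings extending each~$\theta_i$; by Phase~1 these collectively satisfy $\fdset{\{F_1,\ldots,F_\ell\}}$, so the induction hypothesis applies at level~$\ell{+}1$, and the Decomposition Lemma delivers $\agplus$-minimality of the union for each~$\theta_i$.

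Your plan conflates these two phases. You treat the WCC split as the \emph{main} recursive structure (``recurse on the $q_2$-components'') rather than as a local device for Phase~1. This is why you are led to propose strengthening the inductive hypothesis to carry $\agcount$-minimality alongside $\agplus$-minimality: in the paper's organization this is never needed, because $\agcount$-minimality appears only inside the one-shot application of Lemma~\ref{lem:WCCIndependent} in Phase~1 and does not propagate through the induction. Likewise, your ``recurse on~$q_2$ at level $\ell'>\ell$'' does not match the lemma's inductive statement, which is phrased for $\forall$key-embeddings of the original query~$q$, not of sub-queries; the clean move is to stay in~$q$ and descend by level $\ell\to\ell{+}1$ after Phase~1 has bought you the extra FD. Finally, you omit the case split on agreement at $\key{F_\ell}$, which is what makes Phase~1 go through: the WCC machinery is invoked only in the agreement case, and the disagreement case is trivial. (A minor point: in the base case $\ell=n$, $\extend{\gamma_i}$ need not be a singleton; it is each MCS of it that is a singleton, and one must still choose these singletons consistently when two~$\gamma_i$'s share the same~$\key{F_n}$-values.)
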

\begin{proof}
    For readability, we show the lemma for~$k = 2$. The proof can easily be generalized for~$k>2$.
    
    In the first part of the proof, we show that for $i \in \{1,2\}$, there is an $\ell$-$\forall$embedding $\theta_i$ extending~$\gamma_i$ such that:
    \begin{itemize}
        \item every $\agplus$-minimal MCS of $\extend{\theta_i}$ is also an $\agplus$-minimal MCS of $\extend{\gamma_{i}}$; and
        \item $\{\theta_1, \theta_2\} \models \fdset{\{F_1, \ldots, F_{\ell}\}}$.
    \end{itemize}
    Note that for $i\in\{1,2\}$, $\domain{\gamma_{i}}=\lrformula{\bigcup_{j=1}^{\ell-1}\vars{F_{j}}}\cup\key{F_{\ell}}$, and $\domain{\theta_{i}}=\domain{\gamma_{i}}\cup\notkey{F_{\ell}}$.
    We distinguish two cases:
    \begin{description}
        \item[Case that $\gamma_{1}$ and $\gamma_{2}$ disagree on some variable of $\key{F_{\ell}}$.]
        Let $i \in \{1,2\}$. 
        Let $N^*_i$ be an $\agplus$-minimal MCS of $\extend{\gamma_{i}}$.
        Let $\theta_i$ be the (unique) $\ell$-$\forall$embedding such that every embedding in $N^*_i$ extends $\theta_i$.
        Clearly, every $\agplus$-minimal MCS of $\extend{\theta_i}$ is also an MCS of $\extend{\gamma_{i}}$.
        Since $\gamma_{1}$ and $\gamma_{2}$ disagree on some variable of $\key{F_{\ell}}$, it follows that $\{\theta_1, \theta_2\} \models \fdset{\{F_1, \ldots, F_{\ell}\}}$.

        \item[Case that $\gamma_{1}$ and $\gamma_{2}$ agree on all variables of $\key{F_{\ell}}$.]
        Let $i \in \{1,2\}$.
        Let $\vec{w}$ be a shortest sequence containing each variable $v\in\vars{\gamma_i(\{F_\ell, \ldots, F_n\})}$ 
        such that $v$ is not attacked in $\gamma_i(\{F_\ell, \ldots, F_n\})$ and $\fdset{q} \models \fd{\key{F_\ell}}{v}$.
        Note that $\vec{w}$ is the same for $i=1$ and $i=2$.
        By Lemma~\ref{lem:sameKeyFDSameValue}, there is a sequence of constants~$\vec{a}$, of length $\card{\vec{w}}$, such that for every $j \in \{1,2\}$, for every $\mu \in \extend{\gamma_{j}}$, $\mu(\vec{w}) = \vec{a}$.
        Let $\gamma^+_i$ be the extension of $\gamma_i$ to $\domain{\gamma_i}\cup\vars{\vec{w}}$ such that $\gamma^+_i(\vec{w}) = \vec{a}$.
        It is clear that
        \begin{itemize}
            \item $\gamma^+_1(\vec{w}) = \gamma^+_2(\vec{w})$; and
            \item since $\extend{\gamma_{i}} = \extend{\gamma^+_i}$, every $\agplus$-minimal MCS of $\extend{\gamma^+_i}$ is also an $\agplus$-minimal MCS of $\extend{\gamma_{i}}$.
        \end{itemize}
        
         \begin{claim} \label{claim:stillFD}
             For every variable $v \in \vars{\gamma^+_i(\{F_\ell, \ldots, F_n\})}$, if $\fdset{q}\models\fd{\key{F_\ell}}{v}$, then $v$ is attacked in $\gamma^+_i(\{F_\ell, \ldots, F_n\})$. 
         \end{claim}
         \begin{proof}
            Straightforward from Lemma~\ref{lem:attacksSameAfterValuation} and the construction of $\vec{w}.$            
         \end{proof}

 
        Let $q_{\alpha} \subseteq \{F_{\ell}, \ldots, F_n\}$ such that $F_\ell \in q_{\alpha}$, and the atoms of $\gamma^+_i(q_{\alpha})$  form a maximal weakly connected component of the attack graph of $\gamma^+_i(\{F_\ell, \ldots, F_n\})$.
        Let $q_{\beta} = \{F_\ell, \ldots, F_n\} \setminus q_{\alpha}$. 
       
        \begin{claim}\label{cla:equalqueries}
           For every $v\in\vars{q_{\alpha}}\cap\domain{\gamma^+_1}$, we have $\gamma^{+}_{1}(v)=\gamma^{+}_{2}(v)$. 
        \end{claim}
        \begin{proof}
         \minirevision{
        Let $v\in\vars{q_{\alpha}}\cap\domain{\gamma^+_1}$.
        We need to show $\gamma^+_1(v) = \gamma^+_2(v)$.
        This is obvious if $v\in\vars{\vec{w}}$. 
        Assume for the sake of a contradiction that $v\not\in\vars{\vec{w}}$.
            Then, $v \in \domain{\gamma_1}$. 
            It follows $v \in \vars{\{F_1, \ldots, F_{\ell-1}\}}$, and therefore 
            \begin{equation}\label{eq:nobackwardattack}
            \textnormal{for every\ } k \in \{\ell, \ldots, n\}, F_k \nattacks{q} v.
            \end{equation}
            Define $q_{\gamma^+}\defeq\gamma^+_1(\{F_\ell, \ldots, F_n\})$. It is worth noting that the choice of using $\gamma^+_{1}$ instead of $\gamma^+_{2}$ in the definition of $q_{\gamma^+}$ is unimportant, as the resulting attacks will be the same regardless of the choice.
            Since $v \in \vars{q_{\alpha}}$ and since acyclic attack graphs are transitive, there is an  atom $F \in q_{\alpha}$ such that $\gamma^+_1(F)$ is unattacked in~$q_{\gamma^+}$, and one of the following holds:
            \begin{itemize}
                \item
                $v\in\vars{\gamma^+_1(F)}$, hence $v\in\vars{F}$; or
                \item
                there is $G\in q_{\alpha}$ such that 
                $v\in\vars{\gamma^+_1(G)}$ and
                $\gamma^+_1(F) \attacks{q_{\gamma^+}}\gamma^+_1(G)$, hence $v\in\vars{G}$ and, by  Lemma~\ref{lem:attacksSameAfterValuation}, $F\attacks{q}G$. 
            \end{itemize}     
            Since $F\nattacks{q}v$ by~\eqref{eq:nobackwardattack}, it is correct to conclude $\fdset{q\setminus\{F\}} \models\fd{\key{F}}{v}$. 
            \proofread{By the same reasoning as in the proof of Claim~\ref{cla:keyfellf}, we obtain $\fdset{q\}}\models\fd{\key{F_\ell}}{\key{F}}$.
            Consequently,
                $\fdset{q} \models \fd{\key{F_\ell}}{v}$.
            From~\eqref{eq:nobackwardattack} and Lemma~\ref{lem:attacksSameAfterValuation}, it follows that $v$ is not attacked in $\gamma_i(\{F_\ell, \ldots, F_n\})$.
            Then, by our definition of~$\vec{w}$, we have that $\vec{w}$ contains~$v$, a contradiction.}
            This concludes the proof of Claim~\ref{cla:equalqueries}.
            }
            \end{proof}
         
         Claim~\ref{cla:equalqueries} implies $\gamma^+_1(q_{\alpha}) = \gamma^+_2(q_{\alpha})$.
         Let $N^{(\alpha)}$ be an MCS of~$\extendthree{\gamma_1^+}{\gamma^+_1(q_{\alpha})}{\db}$ such that $N^{(\alpha)}$ is $\agplus$-minimal if $r \in \vars{\gamma^+_1(q_{\alpha})}$, and $\agcount$-minimal otherwise. 
        For $j\in\{1,2\}$, let $N^{(\beta)}_j$ be an MCS of $\extendthree{\gamma_j^+}{\gamma^+_j(q_{\beta})}{\db}$ such that $N^{(\beta)}_j$ is $\agplus$-minimal if $r \in \vars{\gamma^+_j(q_{\beta})}$, and $\agcount$-minimal otherwise.
        By Claim~\ref{claim:stillFD} and Lemma~\ref{lem:WCCIndependent}, the set $N^*_i \defeq \{ \gamma^+_i \cdot \delta \cdot \epsilon \mid \delta \in N^{(\alpha)} \myand \epsilon \in N^{(\beta)}_i\}$ is an $\agplus$-minimal MCS of $\extend{\gamma^+_i}$.
        Let $\theta_i$ be the (unique) $\ell$-$\forall$embedding such that every valuation in~$N^*_i$ extends~$\theta_i$.
        Clearly, every $\agplus$-minimal MCS of $\extend{\theta_i}$ is also an $\agplus$-minimal MCS of $\extend{\gamma^+_i}$, and therefore also an MCS of $\extend{\gamma_{i}}$.
        Since $N^{(\alpha)}$ is the same for $i=1$ and $i=2$, and since $F_\ell \in q_{\alpha}$, it follows that $\theta_1(F_\ell) = \theta_2(F_\ell)$.
        Consequently, $\{\theta_1, \theta_2\} \models \fdset{\{F_1, \ldots, F_{\ell}\}}$.   
    \end{description}
So it is correct to conclude that $\theta_1$, $\theta_2$ with the desired properties exist, which concludes the first part of the proof.
    
We are now ready to prove that the lemma holds for every choice of $\ell$ in the statement of the lemma.
The proof is by induction on decreasing $\ell$.
It is straightforward to see that the lemma holds true when $\ell=n$.
We next show that the lemma holds true when $\ell=g$, assuming that it holds true when $\ell=g+1$.
Let $i \in \{1,2\}$. 
    Let $\gamma^{+, (i)}_1, \ldots, \gamma^{+, (i)}_{k_i}$ enumerate all extensions of $\theta_i$ that are $(g+1)$-$\forall$key-embeddings of $q$ in $\db$, where~$\theta_{i}$ is the $g$-$\forall$embedding whose existence was proved in the first part of the proof. 
    By the induction hypothesis, for every $j \in \{1, \ldots, k_i\}$, there is an $\agplus$-minimal MCS $N^{+, (i)}_j$ of $\extend{\gamma^{+, (i)}_j}$ such that 
    \begin{equation*}\label{eq:heavy}
    \overbrace{\lrformula{\bigcup_{j = 1}^{k_1} N^{+, (1)}_j}}^{N_{1}}\cup\overbrace{\lrformula{\bigcup_{j = 1}^{k_2} N^{+, (2)}_j}}^{N_{2}}\models \fdset{q},
    \end{equation*}
    in which we define $N_{1}$ and $N_{2}$ as shown above.
    It remains to show that $N_{i}$ is an $\agplus$-minimal MCS of $\extend{\gamma_{i}}$.
    Clearly, it suffices to to show that $N_{i}$ is an $\agplus$-minimal MCS of $\extend{\theta_{i}}$.
    To this end, let $m_{i}$ denote the $\agplus$-minimal value for $\theta_{i}$ in $\db$, as defined in Definition~\ref{def:mtheta}.
    Since $N_{i}\models\fdset{q}$, it follows from Lemma~\ref{lem:unionIsOptimal} that 
    \begin{equation*}
        m_{i}
        =
        \agplus\lrformula{\bag{
          v_{1}, v_{2}, \ldots, v_{k_i}
        }},
    \end{equation*}
    where for every $j \in \{1, \ldots, k_i\}$, $v_j \defeq \agplus\lrformula{\bag{\mu(r) \mid \mu \in N^{+, (i)}_j}}$.
    From this, it is correct to conclude that~$N_{i}$ is an $\agplus$-minimal MCS of $\extend{\gamma_{i}}$, which concludes the proof of Lemma~\ref{lem:branchesNoConflict}.
\end{proof}

\subsection{Proof of Theorem~\ref{the:expressible}}\label{subsec:main}

With all these helping lemmas in place, we can now proceed with the proof of Theorem~\ref{the:expressible}.

\begin{proof}[Proof of Theorem~\ref{the:expressible}]
Let $\tuple{F_1, \ldots, F_n}$ be a topological sort of $q$'s attack graph.
By Lemma~\ref{lem:allfo}, there is a $\FOL$ formula that computes the set of $\forall$embeddings.
Let $\theta$ be an $\ell$-$\forall$embedding of $q$ in~$\db$.
We will show, by induction on decreasing $\ell=n, n-1,\ldots,0$, that the $\agplus$-minimal value for $\theta$ in~$\db$ can be computed in $\withaggr{\FOL}$.
Note that for $\ell=0$, we have $\theta=\emptyvaluation$ and, by Corollary~\ref{cor:mcs}, the expression of the $\agplus$-minimal value for $\emptyvaluation$ in $\db$ calculates $\interpret{\glbcqa{g()}}{\db}$.

For the induction basis ($\ell = n$), we have $\extend{\theta}=\{\theta\}=\mymcs{\extend{\theta}}{q}$.
It follows that the $\agplus$-minimal value for $\theta$ in $\db$ is $\agplus\lrformula{\bag{\theta(r)}}$, which can obviously be computed in $\withaggr{\FOL}$.

For the induction step ($\ell+1\rightarrow\ell $), the induction hypothesis is that for every $i\in\{\ell+1,\ell+2,\ldots,n\}$, for every $i$-$\forall$embedding $\theta'$ of $q$ in $\db$, the $\agplus$-minimal value for $\theta'$ in $\db$ can be computed in $\withaggr{\FOL}$.
Let $\gamma_1, \ldots, \gamma_k$ enumerate all $(\ell+1)$-$\forall$key-embeddings of $q$ in $\db$ that extend $\theta$.
By Lemma~\ref{lem:branchesNoConflict}, for each $i\in\{1,2,\ldots,k\}$, we can assume an $\agplus$-minimal MCS $N_{i}$ of $\extend{\gamma_{i}}$ such that $\bigcup_{i=1}^k N_i \models \fdset{q}$.
For every $i \in \{1, \ldots, k\}$, let $v_i \defeq \agplus\lrformula{\bag{\mu(r) \mid \mu \in N_i}}$, that is, $v_i$ is the $\agplus$-minimal value for $\gamma_i$ in~$\db$.
By Lemma~\ref{lem:unionIsOptimal}, $\agplus\lrformula{\bag{v_1, \ldots, v_k}}$ is the $\agplus$-minimal value for $\theta$ in~$\db$.
We show in the next paragraph that  for every $i \in \{1, \ldots, k\}$, $v_i$ can be computed in $\withaggr{\FOL}$.
This suffices to show the theorem, as $\agplus$ can be expressed in $\FOL$.

Let $i \in \{1, \ldots, k\}$.
Let $\theta^+_1, \theta^+_2, \ldots, \theta^+_p$ enumerate all $(\ell+1)$-$\forall$embeddings of $q$ in $\db$ that extend~$\gamma_i$.
For every $j \in \{1, \ldots, p\}$, let $v^+_j$ be the $\agplus$-minimal MCS of $\theta^+_j$.
By the induction hypothesis, each $v^+_j$ can be computed in $\withaggr{\FOL}$.
Since $\fdset{q} \models \fd{\key{F_{\ell+1}}}{\vars{F_{\ell+1}}}$, 
it is clear that $\agmin(\bag{v^+_1, v^+_2, \ldots, v^+_p})$ is the $\agplus$-minimal value for $\gamma_i$ in $\db$.
Since $\agmin$ can be expressed in $\withaggr{\FOL}$, it is correct to conclude that the $\agplus$-minimal value for $\theta$ can be computed in $\withaggr{\FOL}$.
\end{proof}

\section{Proof of Theorem~\ref{the:glbmain}}\label{sec:proofglbmain}

\begin{proof}[Proof of Proof of Theorem~\ref{the:glbmain}]
Let $g()\defeq\agterm{\calF}{\vec{y}}{r}{q(\vec{y})}$ where $q(\vec{y})$ is a self-join-free conjunction of atoms.
If the attack graph of $\exists\vec{y}\lrformula{q(\vec{y})}$ is cyclic, then $\glbcqa{g()}$ is not in $\withaggr{\FOL}$ by Theorem~\ref{the:inexpressible}.
If the attack graph of $\exists\vec{y}\lrformula{q(\vec{y})}$ is acyclic, then $\glbcqa{{g}()}$ is expressible in $\withaggr{\FOL}$ by Theorem~\ref{the:expressible}.

\revision{It remains to establish the upper bounds on the time complexities stated in Theorem~\ref{the:glbmain}. Acyclicity of the attack graph of $\exists\vec{y}\lrformula{q(\vec{y})}$ can be tested using the QuadAttack algorithm described in~\cite[Section~9]{DBLP:journals/tods/Wijsen12}, which runs in quadratic time.
Assuming that the the attack graph of $\exists\vec{y}\lrformula{q(\vec{y})}$ is acyclic, the proof of Theorem~\ref{the:expressible} constructs the expression for $\glbcqa{g()}$ in two steps: first, a $\FOL$ formula for the set of $\forall$embeddings, followed by  a $\withaggr{\FOL}$ formula.
The construction of the former formula is in quadratic time by~Lemma~\ref{lem:allfo}, and the proof of Theorem~\ref{the:expressible} shows that the latter formula can be constructed in linear time (with respect to the length of $g()$).
}
\end{proof}

\section{Proof of Lemma~\ref{lem:dc}}

\begin{proof}[Proof of Lemma~\ref{lem:dc}]
First-order reduction from \problem{2DM} (\problem{2-DIMENSIONAL MATCHING}) which is known to be $\NL$-hard~\cite{DBLP:journals/siamcomp/ChandraSV84}.
\begin{description}
\item[Problem] \problem{2DM}
\item[Instance:] Set $M\subseteq A\times B$, where $A$ and $B$ are disjoint sets having the same number~$n$ of elements. 
\item[Question:] Does $M$ contain a matching, i.e., a subset $M'\subseteq M$ such that $\card{M'}=n$ and no two elements of $M'$ agree in any coordinate?
\end{description}    
Since $\agagg$ has a descending chain, we can assume $s,t\in\rationals_{\geq 0}$ such that $\agagg(\bag{s})>\agagg(\bag{s,t})>\agagg(\bag{s,t,t})>\agagg(\bag{s,t,t,t})>\dotsm$.
Given an instance~$M$ of \problem{2DM}, construct a database instance $\db_{M}$ as follows:
\begin{itemize}
    \item 
    for every $(a,b)\in M$, add $R(\underline{a},b,t)$, $S_{1}(\underline{b},a)$, and  $S_{2}(\underline{b},a)$; and
    \item 
    add $R(\underline{\bot_{A}},\bot_{B},s)$, $S_{1}(\underline{\bot_{B}},\bot_{A})$, and  $S_{2}(\underline{\bot_{B}},\bot_{A})$, where $\bot_{A}$ and $\bot_{B}$ are fresh constants.
\end{itemize}
If $M$ has a matching, then $\db_{M}$ has a repair on which $g()$ returns $\ell\defeq \agagg(\bag{s,\copies{n}{t}})$.
If $M$ has no matching, then every repair of $\db_{M}$ has less than~$n$ embeddings of $g()$'s body, and hence on any repair, $g()$ returns $\agagg(\bag{s,\copies{m}{t}})$ with $m<n$, hence $\agagg(\bag{s,\copies{m}{t}})>\ell$.
Consequently, $\glbcqa{q}$ returns $\ell$ if and only if $M$ has a matching.
\end{proof}

\section{Proof of Lemma~\ref{lem:bdc}}

\begin{proof}[Proof of Lemma~\ref{lem:bdc}]
Proof adapted from a similar proof in~\cite{DBLP:conf/icde/DixitK22}.
First order-reduction from \problem{SIMPLE MAX CUT}, which is known to be $\NP$-hard~\cite{DBLP:conf/stoc/GareyGJ76}.
\begin{description}
\item[Problem] \problem{SIMLE MAX CUT}
\item[Instance:] Graph $G=(V,E)$, positive integer~$K$. 
\item[Question:] Is there a partition of $V$ into disjoint sets $V_{1}$ and~$V_{2}$ such that the number of edges from~$E$ that have one endpoint in $V_{1}$ and one end- 
point in $V_{2}$ is at least~$K$? 
\end{description}    
Since $\agagg$ has a bounded descending chain, we can assume $s,t\in\rationals_{\geq 0}$ such that $\agagg(\bag{s})>\agagg(\bag{s,t})>\agagg(\bag{s,t,t})>\agagg(\bag{s,t,t,t})>\dotsm$.
Let $e=2*\card{E}$.
Moreover, there is $m_{e}\in\rationals_{\geq 0}$ such that for all $j\in\naturals_{>0}$, for all $k',k\in\naturals$ such that $k'\leq k\leq e$,  we have $\agagg(\bag{s,\copies{k'}{t}})<\agagg(\bag{\copies{j}{m_{e}},s,\copies{k}{t}})$. 

Given an instance $G=(V,E)$ of \problem{SIMPLE MAX CUT}, construct a database $\db_{G}$ as follows.
We can assume $E\neq\emptyset$, and that the graph is simple.
Let $d$ be a constant such that $c_{1}\neq d\neq c_{2}$.
\begin{itemize}
    \item for every $v\in V$, $\db_{G}$ contains $S_{1}(\underline{v},c_{1})$ and $S_{1}(\underline{v},d)$;
     \item for every $v\in V$, $\db_{G}$ contains $S_{2}(\underline{v},c_{2})$ and $S_{2}(\underline{v},d)$;
     \item for every edge $\{u,v\}$ in $E$, $\db_{G}$ contains both $T(\underline{u,v},t)$ and $T(\underline{v,u},t)$;
    \item for every $v\in V$, $\db_{G}$ contains $T(\underline{v,v},m_{e})$;
    \item $\db_{G}$ contains $S_{1}(\underline{\bot},c_{1})$, $S_{2}(\underline{\bot},c_{2})$, $T(\underline{\bot,\bot},s)$, where $\bot$ is a fresh constant.  It follows that every repair of $\db_{G}$ satisfies $\exists x\exists y\exists r\lrformula{S_{1}(\underline{x},c_{1})\land S_{2}(\underline{y},c_{2})\land T(\underline{x,y},r)}$.
\end{itemize}
Note that the $T$-relation of $\db_{G}$ is consistent, and hence belongs to every repair.
We show that
$\interpret{\glbcqa{g()}}{\db_{G}}\leq\agagg({\bag{s,\copies{K}{t}}})$ if and only if $G$ is a ``yes''-instance of \problem{SIMPLE MAX CUT}.

\framebox{$\impliedby$}
Assume that $G$ is a ``yes''-instance of \problem{SIMPLE MAX CUT}, as witnessed by a partition of $V$ into $V_{1}$ an $V_{2}$.
Construct a repair $\rep$ as follows:
\begin{itemize}
    \item for every $v\in V_{1}$, $\rep$ contains $S_{1}(\underline{v},c_{1})$ and $R_{2}(\underline{v},d)$;
    \item for every $v\in V_{2}$, $\rep$ contains $R_{2}(\underline{v},c_{2})$ and $S_{1}(\underline{v},d)$.
\end{itemize}
It is easily verified that $\interpret{g()}{\rep}\leq\agagg(\bag{s,\copies{K}{t}})$.

\framebox{$\implies$}
Assume that $\interpret{\glbcqa{g()}}{\db_{G}}\leq\agagg(\bag{s,\copies{K}{t}})$.
We can assume a repair $\rep$ such that $\interpret{g()}{\rep}=\interpret{\glbcqa{g()}}{\db_{G}}$.
Construct $V_{1}$ and $V_{2}$ as follows.
Whenever $\rep$ contains $S_{1}(\underline{v},c_{1})$, then $v\in V_{1}$.
Whenever $\rep$ contains $S_{2}(\underline{v},c_{2})$, then $v\in V_{2}$.
Whenever $\rep$ contains both $S_{1}(\underline{v},d)$ and  $S_{2}(\underline{v},d)$, then $v\in V_{1}$ (the choice is arbitrary). 
Let $j\defeq\card{V_{1}\cap V_{2}}$.
Let $k$ be the number of valuations~$\theta$ over $\{x,y\}$ with $\theta(x)\neq\theta(y)$ such that $(\sep,\theta)\models S_{1}(\underline{x},c_{1})\land S_{2}(\underline{y},c_{2})\land T(\underline{x,y},r)$.
Then, $\interpret{g()}{\rep}=\agagg(\bag{\copies{j}{m_{e}},s,\copies{k}{t}})$.
Clearly, $k\leq e$.
We show $j=0$.
Assume for the sake of a contradiction $j\geq 1$.
Let $\rep'$ be the repair obtained from $\rep$ by replacing $S_{2}(\underline{v},c_{2})$ with $S_{2}(\underline{v},d)$ for every $v\in V_{1}\cap V_{2}$. 
Then, $\interpret{g()}{\rep'}=\agagg(\bag{s,\copies{k'}{t}})$ for some $k'$ with $k'\leq k$.
Then, $\agagg(\bag{s,\copies{k'}{t}})<\agagg(\bag{\copies{j}{m_{e}},s,\copies{k}{t}})$, contradicting that $g()$ reaches a minimum in~$\rep$.
We conclude by contradiction that $j=0$,
hence $\interpret{g()}{\rep}=\agagg(\bag{s,\copies{k}{t}})$.
Since $\agagg(\bag{s,\copies{k}{t}})\leq\agagg(\bag{s,\copies{K}{t}})$,
we have $k\geq K$.
Consequently, the number of edges from~$E$  that have one endpoint in $V_{1}$ and one end-point in $V_{2}$ is at least~$K$. 
\end{proof}

\section{Proof of Theorem~\ref{the:minnew}}

\begin{proof}Proof of Theorem~\ref{the:minnew}]
Assume that the attack graph of $\exists\vec{y}\lrformula{q(\vec{y})}$ is acyclic.
Let $\phi()=\agterm{\agmin}{\vec{y}}{r}{q(\vec{y})}$.
It can be easily verified that for every database instance $\db$, if $\db\not\cqamodels\exists\vec{y}\lrformula{q(\vec{y})}$, then $\interpret{\glbcqa{g()}}{\db}=\bot$; otherwise $\interpret{\glbcqa{g()}}{\db}=\interpret{\phi()}{\db}$. 
These tests can be encoded in~$\withaggr{\FOL}$.
\end{proof}

\section{Proof of Theorem~\ref{the:sepminmax}}

\begin{proof}[Proof of Theorem~\ref{the:sepminmax}]
The ``otherwise'' case,  where the attack graph of $\exists\vec{y}\lrformula{q(\vec{y})}$ has a cycle, follows from Theorem~\ref{the:inexpressible}.
Assume from here on that the attack graph of  $\exists\vec{y}\lrformula{q(\vec{y})}$ is acyclic.
By Theorems~\ref{the:expressible} and~\ref{the:minnew}, we know that $\glbcqa{\MAX(r)\leftarrow q(\vec{y})}$ and $\glbcqa{\MIN(r)\leftarrow q(\vec{y})}$ are expressible in $\withaggr{\FOL}$.

We argue that $\lubcqa{\MIN(r)\leftarrow q(\vec{y})}$ is expressible in $\withaggr{\FOL}$.
Define $(\rationals,\leq')$ such that $r\leq' s$ if and only if $s\leq r$, that is, $\leq'$ reverses the natural order on the rational numbers.
Then, $\lubcqa{\MIN(r)\leftarrow q(\vec{y})}$ relative to~$\leq$ coincides with $\glbcqa{\MAX(r)\leftarrow q(\vec{y})}$ relative to~$\leq'$.  
Let $\varphi()$ be a formula in $\withaggr{\FOL}$ that expresses $\glbcqa{\MAX(r)\leftarrow q(\vec{y})}$ relative to~$\leq$.
An expression for $\lubcqa{\MIN(r)\leftarrow q(\vec{y})}$ can be obtained from $\varphi()$ by reversing the order, in particular, by interchanging $\MAX$ and $\MIN$, as well as $\leq$ and $\geq$. 

By a symmetrical reasoning, it can be argued that $\lubcqa{\MAX(r)\leftarrow q(\vec{y})}$ is expressible in $\withaggr{\FOL}$.
\end{proof}

\section{The Class $\caggforest$}\label{sec:caggforest}

The following definition is borrowed from~\cite[Definition~4.1]{FuxmanThesis}.

\begin{definition}[$\cforest$ and $\caggforest$]
Let $q(\vec{z})$ be a self-join-free conjunctive query, with free variables~$\vec{z}$. The \emph{Fuxman graph} of $q(\vec{z})$ is a directed graph whose vertices are the atoms of~$q(\vec{z})$.
There is a directed edge from an atom $R$ to an atom $S$ if $R\neq S$ and $\notkey{R}$ contains a bound variable that also occurs in~$S$.
The class $\cforest$ contains all (and only)  self-join free conjunctive queries~ $q(\vec{z})$ whose Fuxman graph is a directed forest satisfying, for every directed edge from $R$ to $S$, $\key{S}\setminus\vars{\vec{z}}\subseteq\notkey{R}$.

The class $\caggforest$ contains all numerical queries of one of the following forms:
\begin{itemize}
\item 
$\lrformula{\vec{z},\AGG(u)}\leftarrow q(\vec{z},u)$, where $q(\vec{z},u)$ is in $\cforest$ and $\AGG\in\{\MIN, \MAX, \SUM\}$; or
\item
$\lrformula{\vec{z},\COUNT(*)}\leftarrow q(\vec{z})$, where $q(\vec{z})$ is in $\cforest$.\myqed
\end{itemize}
\end{definition}

\end{document}